\documentclass[notitlepage,reprint, aps, prl, twocolumn,superscriptaddress,nofootinbib]{revtex4-2}

\usepackage[utf8]{inputenc}
\usepackage[normalem]{ulem}
\usepackage{amsmath,amssymb,amsthm,mathtools}
\usepackage{graphicx}
\usepackage{bm}
\usepackage[dvipsnames]{xcolor}
\usepackage{physics}
\usepackage{booktabs}
\usepackage{longtable}
\usepackage{float}
\usepackage{comment}
\usepackage{lipsum}

\usepackage{algorithm}
\usepackage{algorithmicx}
\usepackage{algpseudocode}

\usepackage[unicode]{hyperref}
\hypersetup{
  bookmarksnumbered=true,
  unicode=true, colorlinks=true,
  linkcolor=NavyBlue,
  citecolor=NavyBlue,
  urlcolor=NavyBlue
}

\makeatletter \def\maketitle{ \@author@finish \title@column\titleblock@produce \suppressfloats[t]} \makeatother

\newcommand{\suppl}{Supplemental Material}

\DeclareMathOperator{\sgn}{sgn}

\newtheorem{theorem}{Theorem}
\newtheorem{lemma}{Lemma}
\newtheorem{proposition}{Proposition}
\newtheorem{corollary}{Corollary}
\theoremstyle{definition}

\theoremstyle{remark}
\newtheorem{remark}{Remark}

\newcommand{\hk}[1]{\textcolor[rgb]{0.0, 0, 0.0}{#1}}

\begin{document}

\title{Quantum Error Mitigation Simulates General Non-Hermitian Dynamics}

\author{Hiroki Kuji} \email{1225702@ed.tus.ac.jp} \affiliation{Department of Physics, Tokyo University of Science,1-3 Kagurazaka, Shinjuku, Tokyo, 162-8601, Japan} \affiliation{Department of Electrical, Electronic, and Communication Engineering, Faculty of Science and Engineering, Chuo University} \affiliation{Institute for Theoretical Physics III and Center for Integrated Quantum Science and Technology, University of Stuttgart, 70550 Stuttgart, Germany}

\author{Suguru Endo} \affiliation{NTT Computer and Data Science Laboratories, NTT Inc., Musashino 180-8585, Japan} \affiliation{NTT Research Center for Theoretical Quantum Information, NTT Inc., 3-1 Morinosato Wakanomiya, Atsugi, Kanagawa, 243-0198, Japan}

\author{Tetsuro Nikuni} \affiliation{Department of Physics, Tokyo University of Science,1-3 Kagurazaka, Shinjuku, Tokyo, 162-8601, Japan}

\author{Ryusuke Hamazaki} \email{ryusuke.hamazaki@riken.jp} \affiliation{Nonequilibrium Quantum Statistical Mechanics RIKEN Hakubi Research Team, RIKEN Pioneering Research Institute (PRI), Wako, Saitama 351-0198, Japan} \affiliation{RIKEN Center for Interdisciplinary Theoretical and Mathematical Sciences (iTHEMS), RIKEN, Wako 351-0198, Japan}

\author{Yuichiro Matsuzaki} \email{ymatsuzaki872@g.chuo-u.ac.jp} \affiliation{Department of Electrical, Electronic, and Communication Engineering, Faculty of Science and Engineering, Chuo University}

\date{\today}

\begin{abstract}
While non-Hermitian Hamiltonians enable exotic dynamical phenomena, implementing their nonunitary time evolution on near-term quantum devices remains challenging.
We propose a hardware-friendly protocol that simulates non-Hermitian dynamics without ancillas, controlled time evolution, or continuous monitoring.
The protocol combines a Gorini-Kossakowski-Sudarshan-Lindblad (GKSL) evolution via classical Gaussian white-noise averaging with stochastic quantum error mitigation (QEM) to cancel the jump contribution at the level of expectation values.
The mitigation layer uses only single-qubit operations.
We validate the method through numerical simulations of an asymmetric-hopping model and an open {\it XXZ} spin chain with non-Hermitian boundary fields.
Our work provides a programmable and ancilla-free framework for investigating exotic dynamics beyond the class of completely positive and trace-preserving dynamics using QEM.
\end{abstract}

\maketitle
\emph{Introduction---}\label{introduction} Among various candidates to describe open quantum systems, non-Hermitian physics~\cite{Hatano1997PRB,Hatano1996PRL,Ashida2020,ElGanainy2018,Bender1998} has attracted much attention recently because of its simplicity and tractability.
Non-Hermitian systems exhibit phenomena with no Hermitian counterpart, including exceptional points~\cite{kato2013perturbation,Minganti2019}, $\mathcal{PT}$-symmetry~\cite{Bender2024,Bender1999}, non-Hermitian topology~\cite{Kawabata2019,Bergholtz2021,Gong2018}, and the non-Hermitian skin effect~\cite{Yao2018,Song2019,Okuma2020,Yi2020,Longhi2022}.
These developments motivate experimentally accessible and programmable methods to probe non-Hermitian dynamics on quantum devices~\cite{Shen2025,Zhang2025,Wang2019,Jebraeilli2025,Wen2019,Liu2023,Fan2025,Takasu2020,Li2019,Chen2021,Xiao2017,Wu2024,Chen2025EP,Liu2021,Ren2022}.

Despite this motivation, implementing non-Hermitian (hence nonunitary; cf.\ imaginary-time evolution~\cite{Motta2019,McArdle2019,Kosugi2022,Silva2023,Coopmans2023,kuji2025vqnhite}) dynamics on quantum hardware remains challenging, since non-Hermitian dynamics is not completely positive and trace-preserving (CPTP)~\cite{Breuer2007}.
Native operations in both gate-based devices and analog quantum simulators are unitary, and a widely used experimental strategy is to implement the target non-Hermitian Hamiltonian dynamics by introducing ancilla qubits and postselection or engineered dissipation~\cite{wang2026nonhermitirydberg}.
Despite their generality, such constructions are not readily programmable: even when the target Hamiltonian contains only two-body interactions, such approaches may require higher-body controls due to additional couplings with the ancilla qubits, as well as model-dependent measurement or dissipation engineering, substantially increasing experimental complexity (Fig.~\ref{fig:overview}~(a)).

On the other hand, algorithmic approaches on quantum computers including Monte Carlo-based methods~\cite{li2025dynamicssimulationarbitrarynonhermitian}, linear-combination-of-unitaries techniques~\cite{An2023,An2025,Zheng2021,Berry2015,Childs2012}, and quantum-signal-processing-based methods~\cite{chan2023simulatingnonunitarydynamicsusing,Low2017,Gilyn2019}, typically rely on deep circuits with controlled unitary primitives, such as controlled time evolution on a quantum circuit, together with additional ancillas~\cite{Schlimgen2022}.
These requirements remain demanding for near-term quantum devices.
This calls for a programmable, hardware-friendly route to non-Hermitian simulation that avoids ancilla overhead and controlled time evolution.

In this Letter, we propose a general framework for realizing non-Hermitian dynamics by constructing an auxiliary Gorini-Kossakowski-Sudarshan-Lindblad (GKSL) evolution~\cite{Han2021,Gorini1976,Lindblad1976} and canceling its quantum-jump contribution via quantum error mitigation (QEM)~\cite{Cai2023,Endo2018,Sun2021,Li2017,Temme2017,PRXQuantum.4.040329}.
Figure~\ref{fig:overview}~(b) illustrates the overall workflow of our protocol.
We exploit the fact that the GKSL generator, engineered by averaging over noise trajectories~\cite{Chenu2017,Schmolke2022}, separates into a non-Hermitian Hamiltonian contribution and a jump contribution~\cite{Dalibard1992,Plenio1998,hamazaki2025intro}.
We show that the jump contribution can be systematically canceled using stochastic QEM (sQEM).
We numerically benchmark the protocol on \hk{two examples}, i.e., a one-dimensional interacting hard-core boson model \hk{with disordered asymmetric hopping}~\cite{Hamazaki2019}
\hk{and an open {\it XXZ} spin chain with non-Hermitian boundary fields}~\cite{ScaleFreeNHSE}.
Our work establishes a programmable \hk{and hardware-friendly} route to simulate exotic non-CPTP dynamics using QEM\hk{, without ancillas, controlled time evolution, or continuous monitoring}.

\medskip
\emph{Framework}---%
\begin{figure}
  \centering
  \includegraphics[width=\linewidth]{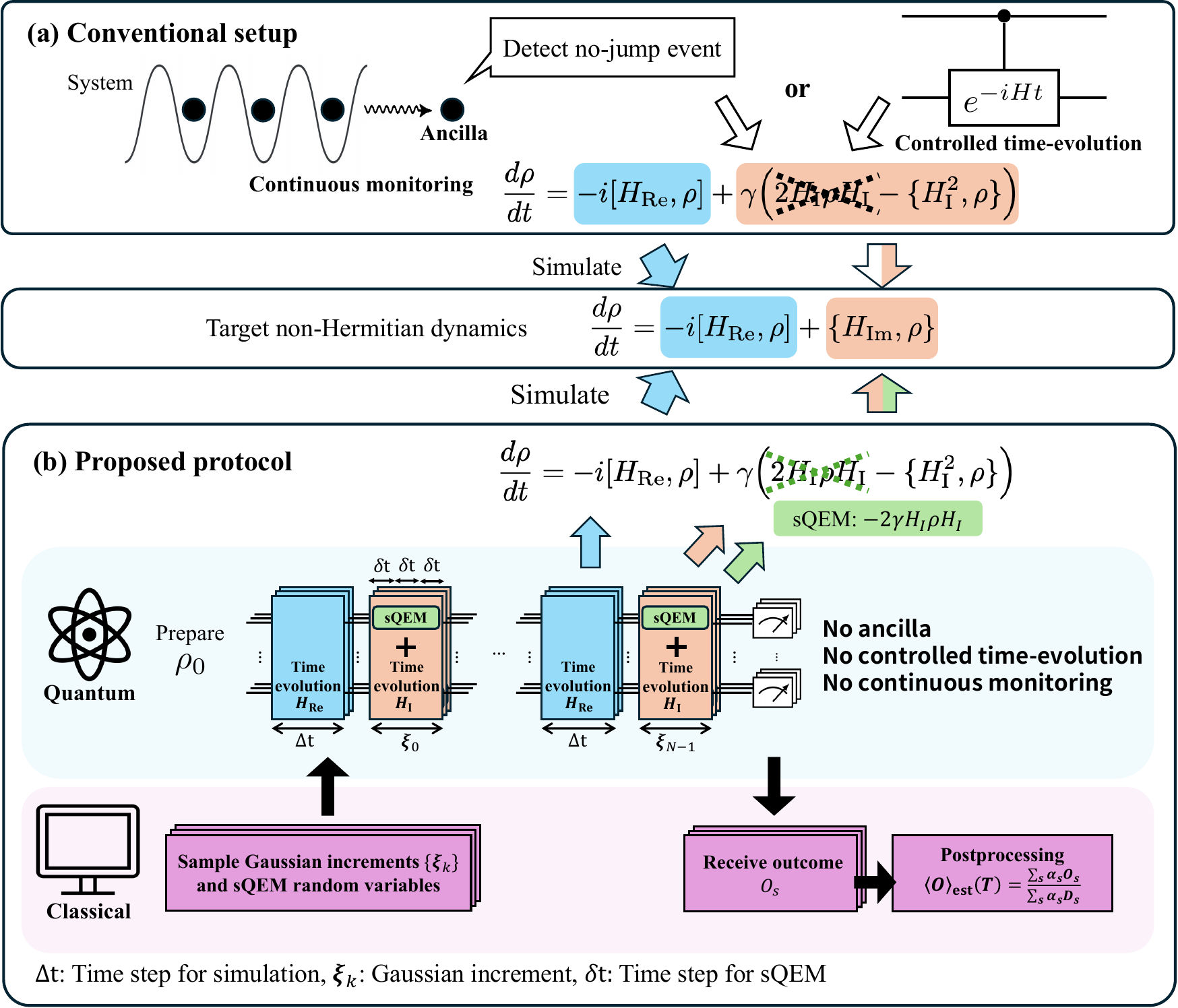}
  \caption{Overview of the proposed protocol for realizing non-Hermitian dynamics via a noise-averaged GKSL evolution and sQEM. (a) Conventional setup:
  \hk{target non-Hermitian dynamics is simulated by selecting the no-jump process through continuous monitoring, or by using controlled time evolution with ancillas}.
  (b) Proposed protocol:
  \hk{a classical processor samples the Gaussian increments $\boldsymbol{\xi}_k=\{\xi_{k,\ell}\}_{\ell \in E}$ and the sQEM random variables, while the quantum device implements the time evolution generated by $H_{\mathrm{Re}}$ and the stochastically driven unitary evolution generated by $H_{\mathrm{I}}(k)\coloneqq \sum_{\ell \in E}\xi_{k,\ell} H_{\mathrm{I},\ell}$, applies the sampled sQEM operations, and returns final measurement outcomes}.
The noise-averaged evolution reproduces the auxiliary GKSL dynamics.
Using the decomposition of the GKSL generator into a non-Hermitian (no-jump) part and a
  jump part, we implement sQEM within each trajectory to cancel the jump contribution, thereby realizing the target non-Hermitian Hamiltonian dynamics at the level of normalized
  \hk{expectation values}.
  \hk{The final estimator is obtained by classical postprocessing of the measured outcomes with sQEM signs and normalization factors.}
  Here, $\Delta t$ is the simulation time step and $\delta t$ is the time step in the sQEM cancellation map.}
  \label{fig:overview}
\end{figure}
We consider a many-body quantum system defined on a set of sites.
In this part, we present a general framework to realize a target non-Hermitian dynamics generated by
\begin{align}
  H_{\mathrm{target}}=H_{\mathrm{Re}}+iH_{\mathrm{Im}},
  \label{eq:target_hamiltonian}
\end{align}
where $H_{\mathrm{Re}}$ and $H_{\mathrm{Im}}$ are Hermitian operators.
We assume that $H_{\mathrm{Im}}$ is decomposed into local terms as $H_{\mathrm{Im}} = \sum_{\ell \in E} H_{\mathrm{Im},\ell}$, where each $H_{\mathrm{Im},\ell}$ acts nontrivially only on the sites in a finite set $\ell$ (the local support), and $E$ denotes the collection of such supports.
\hk{By the full-bond benchmark, we mean that $H_{\mathrm{Im}}$ contains a two-site term on every nearest-neighbor bond in the system. By the boundary-local benchmark, we mean that $H_{\mathrm{Im}}$ contains only single-site terms acting on the specified boundary sites.}
Throughout this paper we take $\hbar=1$.

Our goal is to \hk{estimate normalized expectation values} under the dynamics generated by $H_\mathrm{target}$ by evolving the system with an experimentally implementable stochastic Hamiltonian
\begin{align}
  H_{\mathrm{impl}}(t) = H_{\mathrm{Re}} + \sum_{\ell \in E} f_{\ell}(t) H_{\mathrm{I},\ell},
  \label{eq:simulating_hamiltonian}
\end{align}
and applying sQEM.
Here, $f_{\ell}(t)$ are independent \hk{engineered} classical Gaussian white-noise \hk{processes}\hk{, rather than uncontrolled device noise,} with $\mathbb{E}_f[f_{\ell}(t)] = 0$ and $\mathbb{E}_f\left[f_{\ell}(t) f_{\ell'}(t')\right]= 2\gamma_{\ell}\delta_{\ell\ell'}\delta(t-t')$, where $\gamma_\ell$ is the noise strength associated with \hk{the local support} $\ell$.
The operators $H_{\mathrm{I},\ell}$ are chosen such that the resulting effective non-Hermitian contribution reproduces $H_{\mathrm{Im}}$; the explicit construction is given in End Matter.
\hk{Uncontrolled device noise can, in principle, also be mitigated within the same sQEM framework by extending the quasi-probability decomposition to account for both the jump contribution and the device-noise contribution, although this increases the sampling overhead.}

Our protocol consists of two sampling \hk{components}.
\hk{One component reproduces} GKSL dynamics via the noise-averaging approach proposed in Ref.~\cite{Chenu2017}.
\hk{The other is used in sQEM to cancel the unwanted jump contribution in each experimental run.}
\hk{The random choices and sQEM post-processing are performed classically, while the corresponding evolution and final measurements are implemented on the quantum device.}

We partition the total evolution time $T$ into $N$ steps of size $\Delta t = T/N$ and define the integrated Gaussian increments
\begin{align}
  \xi_{k,\ell} \coloneqq \int_{t_k}^{t_k+\Delta t} f_{\ell}(s)ds,
  \label{eq:xi}
\end{align}
where $t_k = k\Delta t$.
\hk{The increments in Eq.~\eqref{eq:xi}} \hk{satisfy} $\xi_{k,\ell}\sim \mathcal{N}(0,2\gamma_{\ell}\Delta t)$ and \hk{are} independent across time steps $k\in \{0, \dots, N-1\}$ and \hk{local supports} $\ell$.
We implement the discretization via a Trotter formula separating $H_{\mathrm{Re}}$ and \hk{$H_{\mathrm{I}}(k)\coloneqq \sum_{\ell \in E}\xi_{k,\ell}H_{\mathrm{I},\ell}$}\hk{.
For nearest-neighbor terms, we apply an even-odd bond decomposition to both parts.}

Upon averaging over the classical noise, $\bar\rho(t)\coloneqq \mathbb{E}_{\boldsymbol{\xi}}[\rho(t)]$ obeys the GKSL generator~\cite{Chenu2017}
\begin{align}
  \mathcal{L}[\bar{\rho}] = -i[H_{\mathrm{Re}},\bar\rho] +\sum_{\ell \in E}\gamma_{\ell}\Big(2H_{\mathrm{I},\ell}\bar\rho H_{\mathrm{I},\ell}-\{H_{\mathrm{I},\ell}^2,\bar\rho\}\Big).
  \label{eq:master-lindblad-hi}
\end{align}

Next, to implement the target non-Hermitian dynamics, we decompose the GKSL generator $\mathcal{L}$ in Eq.~\eqref{eq:master-lindblad-hi} into a non-Hermitian Hamiltonian part and a jump part $\dot{\bar{\rho}} = \mathcal{L}[\bar{\rho}] = \mathcal{L}_{\mathrm{NH}}[\bar{\rho}] + \mathcal{L}_{\mathrm{J}}[\bar{\rho}]$, where $\mathcal{L}_{\mathrm{NH}}[\rho]\coloneqq -i[H_{\mathrm{Re}},\rho]-\sum_{\ell \in E}\gamma_{\ell}\{H_{\mathrm{I},\ell}^2,\rho\}$ and $\mathcal{L}_\mathrm{J}[\rho] \coloneqq \sum_{\ell \in E}2\gamma_{\ell}H_{\mathrm{I},\ell}\rho H_{\mathrm{I},\ell}.$ Here, $\mathcal{L}_{\mathrm{NH}}$ generates the non-Hermitian Hamiltonian evolution, while $\mathcal{L}_{\mathrm{J}}$ describes the jump \hk{contribution}.

To cancel $\mathcal{L}_{\mathrm{J}}$ at the level of ensemble-averaged \hk{expectation values}, we introduce a cancellation map $\mathcal{E}_{\mathcal{C}}(\delta t)\coloneqq\mathcal{I}+\delta t\mathcal{E}_{\mathcal{C}}^{(1)}+\mathcal{O}(\delta t^2)$, where $\mathcal{I}$ denotes the identity superoperator and $\mathcal{E}_{\mathcal{C}}^{(1)}$ is the first-order superoperator.
We also define the GKSL and non-Hermitian channels as $\mathcal{E}_{\mathrm{GKSL}}(\delta t) \coloneqq \exp(\delta t\mathcal{L}) = \mathcal{I}+\delta t\mathcal{L}+\mathcal{O}(\delta t^2)$ and $\mathcal{E}_{\mathrm{NH}}(\delta t) \coloneqq \exp(\delta t\mathcal{L}_{\mathrm{NH}}) = \mathcal{I}+\delta t\mathcal{L}_{\mathrm{NH}}+\mathcal{O}(\delta t^2)$, respectively.
This leads to
\begin{align}
  \mathcal{E}_{\mathrm{NH}}(\delta t) = \mathcal{E}_{\mathcal{C}}(\delta t) \mathcal{E}_{\mathrm{GKSL}}(\delta t) +\mathcal{O}(\delta t^2),
  \label{eq:cancellation-condition}
\end{align}
which implies $\mathcal{E}_{\mathcal{C}}^{(1)}=\mathcal{L}_{\mathrm{NH}}-\mathcal{L}=-\mathcal{L}_{\mathrm{J}}$ at first order in $\delta t$, i.e., $\mathcal{E}_{\mathcal{C}}^{(1)}[\cdot] = -\sum_{\ell \in E}2\gamma_{\ell} H_{\mathrm{I},\ell}[\cdot]H_{\mathrm{I},\ell}$.

We implement $\mathcal{E}_{\mathcal{C}}(\delta t)$ via sQEM using experimentally available local \hk{basis operation} channels $\{\mathcal{B}_{\ell j}\}$\hk{.
We expand} $\mathcal{E}_{\mathcal{C}}^{(1)} = q_0 \mathcal{I} + \sum_{\ell\in E}\sum_{j\ge1} q_{\ell j} \mathcal{B}_{\ell j}$, where \hk{$q_0$ and $q_{\ell j}$ are} the expansion coefficients of $\mathcal{E}_{\mathcal{C}}^{(1)}$ in \hk{this basis}.
For each \hk{local support} $\ell$, we assume that $\{\mathcal{B}_{\ell j}\}_{j\ge 1}$ spans the local \hk{superoperator space on that support.}
We then obtain the first-order quasi-probability decomposition
\begin{align}
  \mathcal{E}_{\mathcal{C}}(\delta t) = c(\delta t)\left(p_{0}\mathcal{I} + \sum_{\ell\in E}\sum_{j\ge 1}\alpha_{\ell j}p_{\ell j}\mathcal{B}_{\ell j}\right) + \mathcal{O}(\delta t^2),
  \label{eq:EC_qpd}
\end{align}
where $c(\delta t)\coloneqq 1+\delta t(q_0+\sum_{\ell\in E}\sum_{j\ge1}|q_{\ell j}|)$, $p_{\ell j}\coloneqq |q_{\ell j}|\delta t/c(\delta t)$, $p_0\coloneqq (1+q_0\delta t)/c(\delta t)$, $\alpha_{\ell j}\coloneqq \sgn(q_{\ell j})$\hk{, and $\kappa\coloneqq q_0+\sum_{\ell\in E}\sum_{j\ge1}|q_{\ell j}|$}.
By construction, $p_{0}+\sum_{\ell\in E}\sum_{j\ge1}p_{\ell j}=1$, so that $p_{0}=1-\mathcal{O}(\delta t)$ and $p_{\ell j}=\mathcal{O}(\delta t)$ for $j\ge1$.
In practice, we \hk{can} use the 16 single-qubit \hk{basis operation} channels~\cite{Endo2018} (listed in Table~\ref{tab:bases} in End Matter) and construct \hk{local} \hk{basis operation} channels by tensor products \hk{on each support}.

For $T=n \delta t$, iterating Eq.~\eqref{eq:cancellation-condition} over \hk{$n$} steps and substituting Eq.~\eqref{eq:EC_qpd} at each step yields the quasi-probability expansion for expectation values, $\Tr\left[O \rho_{\mathrm{NH}}(T)\right] = c(\delta t)^{n}\sum_{\vec{j}} \alpha_{\vec{j}} p_{\vec{j}} \Tr\left[O \rho_{\vec{j}}(T)\right] +\mathcal{O}(T\delta t)$, where $\rho_{\mathrm{NH}}(T)\coloneqq \mathcal{E}_{\mathrm{NH}}(\delta t)^n[\rho(0)]$, and $\vec{j}=(j_0,\ldots,j_{n-1})$ labels a sequence of \hk{basis operation} indices, with $j_k=0$ denoting the identity channel \hk{and $j_k=(\ell,j)$ denoting $\mathcal{B}_{\ell j}$}.
We define $p_{\vec{j}}\coloneqq \prod_{k=0}^{n-1}p_{j_k}$ and $\alpha_{\vec{j}}\coloneqq \prod_{k=0}^{n-1}\alpha_{j_k}$, with \hk{$p_{(\ell,j)}\coloneqq p_{\ell j}$, $\alpha_{(\ell,j)}\coloneqq \alpha_{\ell j}$, and} $\alpha_0\coloneqq 1$.
The state $\rho_{\vec{j}}(T)$ is obtained by inserting, at each of the $n$ steps of the GKSL evolution, the \hk{basis operation} channels specified by $\vec{j}$.
The remainder $\mathcal{O}(T\delta t)$ arises from the first-order discretization in the above derivation.
In our protocol, however, sQEM is implemented directly in continuous time, corresponding to the limit $\delta t\to 0^+$.
Therefore, sQEM itself does not introduce \hk{an additional} discretization-induced systematic error.

Finally, since the target non-Hermitian evolution is in general trace-non-preserving and basis \hk{operations} may also include trace-decreasing operations such as measurements, $\Tr\left[O \rho_{\mathrm{NH}}(T)\right]$ is generally unnormalized.
We therefore calculate the normalized expectation value
\begin{align}
    \langle O \rangle_{\mathrm{target}}(T)=\frac{\Tr\left[O \rho_{\mathrm{NH}}(T)\right]}{\Tr\left[\rho_{\mathrm{NH}}(T)\right]} =\frac{\sum_{\vec{j}} \alpha_{\vec{j}} p_{\vec{j}} \Tr\left[O \rho_{\vec{j}}(T)\right]}{\sum_{\vec{j}} \alpha_{\vec{j}} p_{\vec{j}} \Tr\left[\rho_{\vec{j}}(T)\right]},
\end{align}
with the limit $\delta t\to 0^+$.

We note that the sampling overhead can\hk{, in general,} grow exponentially with the simulated time \hk{because both the non-Hermitian normalization and the sQEM quasi-probability normalization can contribute exponential factors.}
\hk{For fixed precision, the sQEM contribution to the required number of trajectories scales as $\exp[2\kappa(L)T]$, where $\kappa(L)$ is determined by the number and strengths of the canceled channels.}
\hk{Thus, $\kappa(L)=O(1)$ when only $O(1)$ channels with bounded rates are canceled, whereas it can scale as $O(L)$ when $O(L)$ such channels are canceled.}

\hk{Our approach does not claim to eliminate the generic exponential-in-time sampling overhead in all settings.
Nevertheless, in favorable cases involving only $O(1)$ canceled channels with $L$-independent rates, the sQEM overhead does not introduce an exponential dependence on the total system size $L$.
Its practical advantage also lies in reducing the hardware requirements: the proposed protocol requires neither ancillary qubits, controlled time evolution, nor continuous monitoring.}

\hk{Details of the GKSL construction, the Trotter implementation, the sQEM implementation, and the associated sampling overhead are provided in End Matter and \suppl~\cite{supple}, Secs.~\ref{app:gaussian-average}--\hk{\ref{app:evenodd}}.}

\medskip
\emph{Full-bond non-Hermitian benchmark}---%
As \hk{a first benchmark}, we analyze the asymmetric hopping model of hard-core bosons with disorder and interaction $H = \sum_{i=0}^{L-2}\Bigl[-J\Bigl(e^{g} b_{i+1}^\dagger b_i+ e^{-g} b_i^\dagger b_{i+1}\Bigr)+ U n_i n_{i+1}\Bigr] + \sum_{i=0}^{L-1} h_i n_i$ \hk{in a one-dimensional four-site chain ($L=4$) with open boundary conditions~\cite{Hamazaki2019}}.
Here, $L$ is the number of sites, $J$ sets the hopping scale, $U$ is the nearest-neighbor interaction strength, $n_i = b_i^\dagger b_i$ is the particle-number operator at site $i$ with the annihilation operator $b_i$ of the hard-core boson, $g$ controls non-Hermiticity, and \hk{the onsite disorder is given by $h_i=h_{\mathrm{amp}}r_i$, with $h_{\mathrm{amp}}\in\{0.10, 8.0\}$ and $(r_0,r_1,r_2,r_3)=(0.9534, -0.2396, 0.8465, -0.4766)$.
The same disorder pattern is fixed for both disorder strengths.}
\hk{In this model, the non-Hermitian contribution is distributed over all nearest-neighbor bonds.
Thus, simulating the target dynamics requires canceling jump terms on all $L-1$ bonds.
Consequently, the coefficient $\kappa$ can grow with $L$.}
Note that for large $L$, this model exhibits a delocalized and non-Hermitian many-body localization regime for weak and strong disorder strengths, respectively~\cite{Hamazaki2019}.
That is, if the disorder strength is weak enough, the non-Hermitian hopping term leads to asymmetric particle transport.
Under the open boundary condition, the particles accumulate near one boundary of the system~\cite{Mu2020}.
In contrast, when the disorder strength is strong, particle transport is suppressed by many-body localization.

Via the standard hard-core-boson–to–spin-$1/2$ mapping \hk{$b_i \mapsto \sigma_i^-$, $b_i^\dagger \mapsto \sigma_i^+$, and $n_i=b_i^\dagger b_i \mapsto(1+\sigma_i^z)/2$}, where $\sigma_i^\alpha$ ($\alpha\in\{x,y,z\}$) are the Pauli operators acting on \hk{the} $i$th site and \hk{$\sigma_i^\pm=(\sigma_i^x\pm i\sigma_i^y)/2$.
Here,} $\{\ket{0}_i, \ket{1}_i\}$ denote the eigenstates defined by $\sigma_i^z\ket{0}_i=+\ket{0}_i$ and $\sigma_i^z\ket{1}_i=-\ket{1}_i$. The Hamiltonian can be written in the form of Eq.~\eqref{eq:target_hamiltonian}.
Explicitly, $H_{\mathrm{Re}}=-\frac{J\cosh g}{2}\sum_{i=0}^{L-2}\bigl(\sigma_i^x\sigma_{i+1}^x+\sigma_i^y\sigma_{i+1}^y\bigr)+\frac{U}{4}\sum_{i=0}^{L-2}(1+\sigma_i^z)(1+\sigma_{i+1}^z)+\frac{1}{2}\sum_{i=0}^{L-1} h_i(1+\sigma_i^z)$ and $H_{\mathrm{Im}} = -\frac{J\sinh g}{2}\sum_{i=0}^{L-2}\bigl(\sigma_i^x\sigma_{i+1}^y -\sigma_i^y\sigma_{i+1}^x \bigr)$.

\begin{figure}[H]
  \centering
  \includegraphics[width=\linewidth]{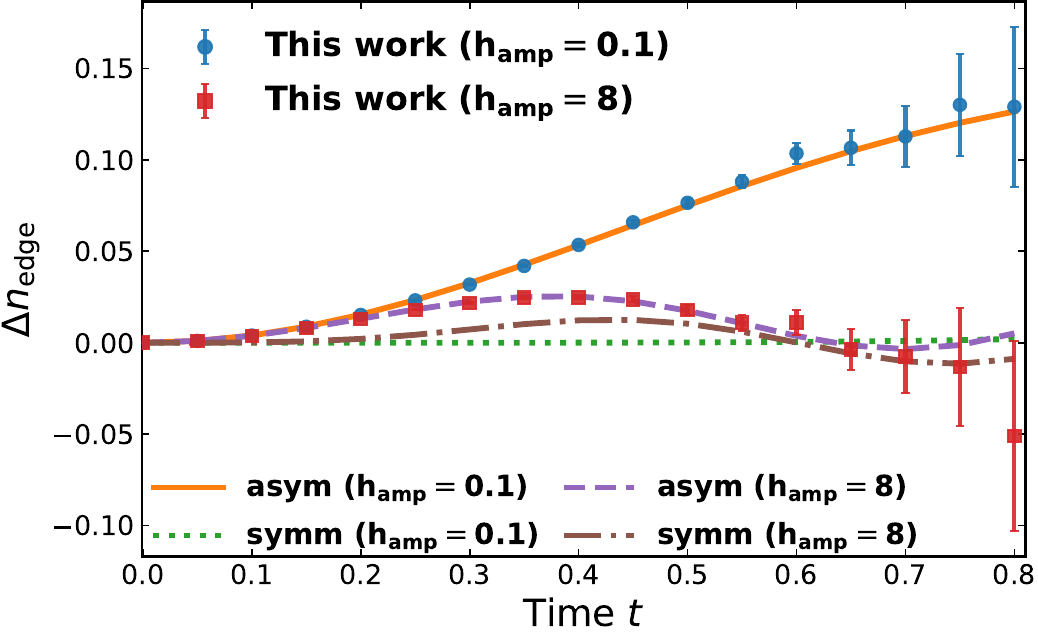}
  \caption{
    \hk{E}dge imbalance $\Delta n_{\mathrm{edge}}(t)=\langle n_{3}(t)\rangle_{\mathrm{norm}}-\langle n_{0}(t)\rangle_{\mathrm{norm}}$
    \hk{in the full-bond non-Hermitian benchmark for a four-site chain.}
    \hk{Results are shown}
    for weak disorder ($h_{\mathrm{amp}}=0.1$) and strong disorder ($h_{\mathrm{amp}}=8.0$), where the initial state is $|0110\rangle$.
    Solid orange and dashed purple lines show the exact target non-Hermitian dynamics (asym, $g=0.1$) for $h_{\mathrm{amp}}=0.1$ and $h_{\mathrm{amp}}=8.0$, respectively.
    Dotted green and dash-dotted brown lines show the corresponding symmetric-hopping model (symm, $g=0$).
    Blue circles ($h_{\mathrm{amp}}=0.1$) and red squares ($h_{\mathrm{amp}}=8.0$) show Monte Carlo estimates from our protocol.
    Parameters: $J=1.0$, $g=0.1$, $U=1.0$, $\gamma=1.0$, $\Delta t=1.0\times 10^{-3}$.
    We use $B=210240$ independent runs, each obtained by averaging over $M_{\mathrm{num}}=2000$ stochastic trajectories (total $BM_{\mathrm{num}}=4.2048\times 10^8$ trajectories).
    Error bars show jackknife standard errors over the $B$ runs.}
  \label{fig:L4_merged}
\end{figure}

In the simulation, \hk{we initialize the system in the product state $\ket{\psi_0}=\ket{0110}$ and monitor the edge imbalance $\Delta n_{\mathrm{edge}}(t)=\langle n_{3}(t)\rangle_{\mathrm{norm}}-\langle n_{0}(t)\rangle_{\mathrm{norm}}$ (see Fig.~\ref{fig:L4_merged} for parameters).}
\hk{For each time point $t$, one data point is obtained} by averaging $M_{\mathrm{num}}$ i.i.d.\ trajectory samples, where each trajectory simultaneously samples the Gaussian increments for the GKSL channel and the sQEM layer.
We then repeat the above process for $B$ independent runs $b=1,\ldots,B$ at each time to employ the jackknife resampling method~\cite{Efron1992}.
\hk{Thus, the total number of i.i.d.\ trajectories is $N_{\mathrm{traj}}=B M_{\mathrm{num}}$.}
\hk{For trajectory $s$ in run $b$, denote its parity, observable contribution, and acceptance indicator by $\alpha_s^{(b)}$, $O_s^{(b)}(t)$, and $D_s^{(b)}(t)$, respectively.
The signed numerator and denominator averages are}
\hk{$\hk{N_O^{(b)}(t)}\coloneqq M_{\mathrm{num}}^{-1}\sum_{s=1}^{M_{\mathrm{num}}}\hk{\alpha_s^{(b)}O_s^{(b)}(t)}$ and
$\hk{D^{(b)}(t)}\coloneqq M_{\mathrm{num}}^{-1}\sum_{s=1}^{M_{\mathrm{num}}}\hk{\alpha_s^{(b)}D_s^{(b)}(t)}$.}
Using all $B$ runs, we estimate the normalized expectation value
$\hk{\langle O\rangle_{N_{\mathrm{traj}}}(t)}=\sum_{b=1}^{B}\hk{N_O^{(b)}(t)}/\sum_{b=1}^{B}\hk{D^{(b)}(t)}$.
\hk{We set $O_s^{(b)}(t)=0$ whenever $D_s^{(b)}(t)=0$.}
Error bars in all figures show jackknife standard errors obtained by resampling over the $B$ independent runs, accounting for fluctuations in both the numerator and the denominator of the normalized estimator.

As reference lines, we compute (i) the exact target non-Hermitian dynamics generated by $H_{\mathrm{target}}$ via direct matrix exponentiation, and (ii) the symmetric-hopping baseline obtained by setting $g=0$ (i.e., $H_{\mathrm{Im}}=0$).
Figure~\ref{fig:L4_merged} shows that for weak disorder ($h_{\mathrm{amp}}=0.1$) the exact non-Hermitian dynamics develops a positive edge imbalance $\Delta n_{\mathrm{edge}}(t)$, whereas the symmetric-hopping baseline remains close to $\Delta n_{\mathrm{edge}}\approx 0$.
Our Monte Carlo estimates reproduce this directional bias over the time window shown.
For strong disorder ($h_{\mathrm{amp}}=8.0$), the edge-imbalance signal is substantially reduced and approaches the symmetric-hopping behavior, consistent with the intuition that strong onsite disorder inhibits transport and suppresses the drift induced by asymmetric hopping.
This disorder strength corresponds to the non-Hermitian many-body localization regime for large system sizes~\cite{Hamazaki2019}.

Because we consider short times and a modest asymmetry $g=0.1$ (i.e., $e^{2g}\simeq 1.22$), the phenomenology consistent with the ``non-Hermitian skin effect'' here appears primarily as a transport bias captured by the edge imbalance rather than a dramatic one-edge accumulation.
These results capture the non-Hermiticity-induced bias and its suppression by disorder, providing the intended qualitative validation in this $L=4$ setting.
\hk{Additional numerical results, including a minimal two-qubit validation and site-resolved occupations for the full-bond model, are presented in \suppl~\cite{supple}.}

\medskip
\hk{\emph{Boundary-local non-Hermitian benchmark---}}

We next consider an open {\it XXZ} spin chain with non-Hermitian boundary fields~\cite{ScaleFreeNHSE}, $H=-(J/4)\sum_{i=1}^{L-1}(\sigma_i^x\sigma_{i+1}^x+\sigma_i^y\sigma_{i+1}^y+\Delta\sigma_i^z\sigma_{i+1}^z)+(ig/4)(\sigma_L^z-\sigma_1^z)$, where $\Delta$ is the XXZ anisotropy.

\hk{For this boundary-field model, the imaginary part is $H_{\mathrm{Im}}=(g/2)(n_L-n_1)$.
After shifting the imaginary part by $-(g/2)I$, we obtain $H_{\mathrm{Im}}-(g/2)I=-(g/2)[n_1+(1-n_L)]$, which changes only the overall norm and hence does not affect normalized expectation values.
This form contains only two projectors, $n_1$ and $1-n_L$.
This corresponds to choosing $H_{\mathrm{I},1}=n_1$ and $H_{\mathrm{I},L}=1-n_L$ with $\gamma=g/2$.
The corresponding jump contribution consists only of the two boundary terms $g n_1[\cdot]n_1$ and $g(1-n_L)[\cdot](1-n_L)$.
Therefore, unlike the full-bond benchmark above, the number of canceled jump maps is independent of $L$, and $\kappa=O(1)$ at fixed $g$.}

\hk{Starting from the N\'eel state \hk{$|0101\cdots\rangle$}, we compute the edge imbalance $\Delta n_{\mathrm{edge}}(t)=\langle n_L(t)\rangle_{\mathrm{norm}}-\langle n_1(t)\rangle_{\mathrm{norm}}$ for $L=6,10$ and compare the estimates with both the exact non-Hermitian dynamics and the exact Hermitian \hk{dynamics} obtained by setting $g=0$ (see Fig.~\ref{fig:boundary_edge_dynamics} for parameters).}

\begin{figure}[H]
\centering
    \includegraphics[width=\linewidth]{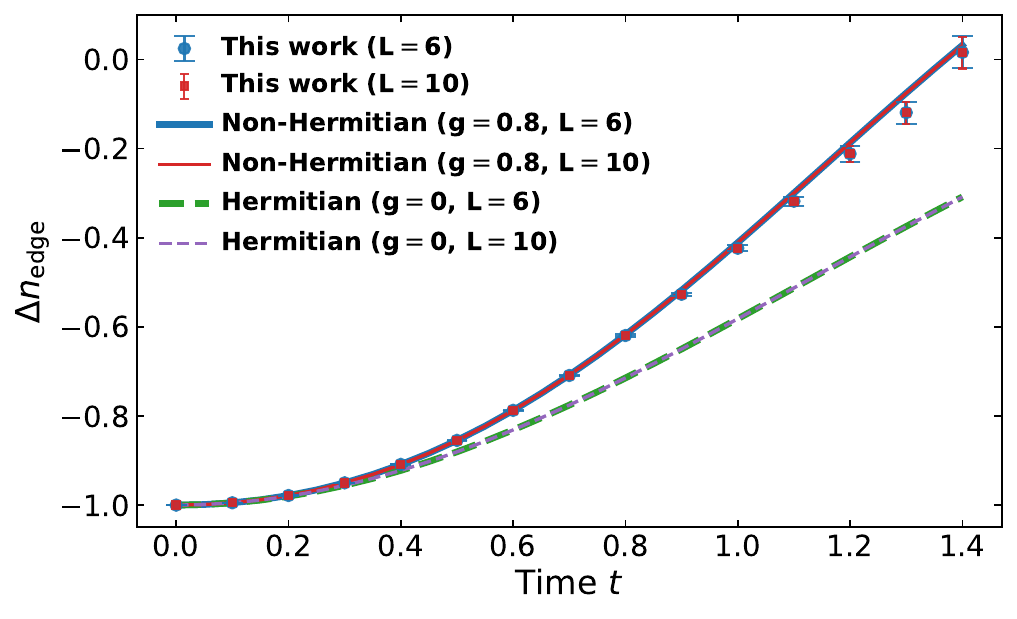}
    \caption{\hk{Edge imbalance $\Delta n_{\mathrm{edge}}(t)=\langle n_L(t)\rangle_{\mathrm{norm}}-\langle n_1(t)\rangle_{\mathrm{norm}}$ for the open {\it XXZ} spin chain with non-Hermitian boundary fields, starting from the \hk{N\'eel state $|0101\cdots\rangle$}.
Solid lines show the exact non-Hermitian dynamics at $g=0.8$, markers with error bars show Monte Carlo estimates from our protocol (This work), and dashed lines show the exact Hermitian dynamics at $g=0$, for \hk{$L=6,10$}.
The other parameters are $J=1.0$, $\Delta=0.8$.
We use $B=100$ independent runs, each averaging over $M_{\mathrm{num}}=10000$ stochastic trajectories (total $BM_{\mathrm{num}}=10^6$ trajectories per data point).
Error bars show jackknife standard errors over the $B$ runs.}}
    \label{fig:boundary_edge_dynamics}
\end{figure}

\hk{Figure~\ref{fig:boundary_edge_dynamics} shows that the Monte Carlo estimates agree with the exact boundary-field-induced dynamics within the jackknife error bars for both chain lengths.
Relative to the Hermitian dynamics at $g=0$, the edge imbalance increases more rapidly from its initial negative value, demonstrating the boundary accumulation induced by the non-Hermitian boundary fields.
With the same number of trajectories for both values of $L$, the error bars remain comparable, consistent with the $L$-independent sQEM cancellation rate in this boundary-local setting.}

\medskip
\emph{Conclusion}---%
We have proposed a general framework to realize non-Hermitian dynamics by constructing an auxiliary GKSL evolution and canceling its quantum-jump contribution via sQEM.
For targets with $k$-local interactions, the protocol requires only native $k$-local coherent-evolution blocks (without introducing higher-body controls), while the \hk{nonunitary} contribution is reconstructed by Monte Carlo sampling using single-qubit operations and measurements in the sQEM layer.
Numerical simulations on \hk{the full-bond hard-core-boson benchmark and the boundary-local open {\it XXZ} spin chain} confirm that the protocol faithfully reproduces the intended non-Hermitian dynamics, including asymmetric particle transport, its suppression by disorder\hk{, and boundary-field-induced edge dynamics.}
Importantly, when the required time evolutions are available as native analog time-evolution blocks, the protocol becomes naturally compatible with digital-analog quantum computing architectures~\cite{ParraRodriguez2020}.

Another interesting direction is to extend the framework to a broader class of non-CPTP quantum dynamics beyond non-Hermitian dynamics.
One possible example is the generalized \hk{master equation for biased ensembles of quantum trajectories}~\cite{Gupta2024,Garrahan2010,Liu2025}.

\medskip
\textit{Acknowledgments}---%
H.K. is supported by JSPS KAKENHI Grant No.~JP26KJ2001.
H.K.'s research stay in Germany is supported by JST ASPIRE Grant No.~JPMJAP24C2.
S.E. was supported by JST Moonshot R\&D Grant No.~JPMJMS2061, MEXT Q-LEAP Grant Nos.~JPMXS0120319794 and JPMXS0118067285, and JST CREST Grant Nos.~JPMJCR23I4 and JPMJCR25I4.
Y.M. is supported by JST Moonshot R\&D Grant No.~JPMJMS226C, JST CREST Grant No.~JPMJCR23I5, and JST PRESTO Grant No.~JPMJPR245B.
R.H. was supported by JSPS KAKENHI Grant No.~JP24K16982 and JST ERATO Grant No.~JPMJER2302.
\bibliographystyle{apsrev4-2}
\bibliography{ref}

@article{Chenu2017,
  title = {Quantum Simulation of Generic Many-Body Open System Dynamics Using Classical Noise},
  volume = {118},
  ISSN = {1079-7114},
  url = {http://dx.doi.org/10.1103/PhysRevLett.118.140403},
  DOI = {10.1103/physrevlett.118.140403},
  pages = {140403},
  number = {14},
  journal = {Phys. Rev. Lett.},
  publisher = {American Physical Society (APS)},
  author = {Chenu,  A. and Beau,  M. and Cao,  J. and del Campo,  A.},
  year = {2017},
  month = apr 
}

@article{Childs2021,
  title = {Theory of Trotter Error with Commutator Scaling},
  volume = {11},
  ISSN = {2160-3308},
  url = {http://dx.doi.org/10.1103/PhysRevX.11.011020},
  DOI = {10.1103/physrevx.11.011020},
  pages = {011020},
  number = {1},
  journal = {Phys. Rev. X},
  publisher = {American Physical Society (APS)},
  author = {Childs,  Andrew M. and Su,  Yuan and Tran,  Minh C. and Wiebe,  Nathan and Zhu,  Shuchen},
  year = {2021},
  month = feb 
}

@article{Cai2023,
  title = {Quantum error mitigation},
  volume = {95},
  ISSN = {1539-0756},
  url = {http://dx.doi.org/10.1103/RevModPhys.95.045005},
  DOI = {10.1103/revmodphys.95.045005},
  pages = {045005},
  number = {4},
  journal = {Rev. Mod. Phys.},
  publisher = {American Physical Society (APS)},
  author = {Cai,  Zhenyu and Babbush,  Ryan and Benjamin,  Simon C. and Endo,  Suguru and Huggins,  William J. and Li,  Ying and McClean,  Jarrod R. and O'Brien,  Thomas E.},
  year = {2023},
  month = dec 
}

@article{Endo2018,
  title = {Practical Quantum Error Mitigation for Near-Future Applications},
  volume = {8},
  ISSN = {2160-3308},
  url = {http://dx.doi.org/10.1103/PhysRevX.8.031027},
  DOI = {10.1103/physrevx.8.031027},
  pages = {031027},
  number = {3},
  journal = {Phys. Rev. X},
  publisher = {American Physical Society (APS)},
  author = {Endo,  Suguru and Benjamin,  Simon C. and Li,  Ying},
  year = {2018},
  month = jul 
}

@article{Sun2021,
  title = {Mitigating Realistic Noise in Practical Noisy Intermediate-Scale Quantum Devices},
  volume = {15},
  ISSN = {2331-7019},
  url = {http://dx.doi.org/10.1103/PhysRevApplied.15.034026},
  DOI = {10.1103/physrevapplied.15.034026},
  pages = {034026},
  number = {3},
  journal = {Phys. Rev. Applied},
  publisher = {American Physical Society (APS)},
  author = {Sun,  Jinzhao and Yuan,  Xiao and Tsunoda,  Takahiro and Vedral,  Vlatko and Benjamin,  Simon C. and Endo,  Suguru},
  year = {2021},
  month = mar 
}

@article{Ashida2020,
  title = {Non-Hermitian physics},
  volume = {69},
  ISSN = {1460-6976},
  url = {http://dx.doi.org/10.1080/00018732.2021.1876991},
  DOI = {10.1080/00018732.2021.1876991},
  number = {3},
  journal = {Adv. Phys.},
  publisher = {Informa UK Limited},
  author = {Ashida,  Yuto and Gong,  Zongping and Ueda,  Masahito},
  year = {2020},
  month = jul,
  pages = {249--435}
}

@article{Hamazaki2019,
  title = {Non-Hermitian Many-Body Localization},
  volume = {123},
  ISSN = {1079-7114},
  url = {http://dx.doi.org/10.1103/PhysRevLett.123.090603},
  DOI = {10.1103/physrevlett.123.090603},
  pages = {090603},
  number = {9},
  journal = {Phys. Rev. Lett.},
  publisher = {American Physical Society (APS)},
  author = {Hamazaki,  Ryusuke and Kawabata,  Kohei and Ueda,  Masahito},
  year = {2019},
  month = aug 
}

@article{Bender2024,
  title = {$\mathcal{PT}$-symmetric quantum mechanics},
  volume = {96},
  ISSN = {1539-0756},
  url = {http://dx.doi.org/10.1103/RevModPhys.96.045002},
  DOI = {10.1103/revmodphys.96.045002},
  pages = {045002},
  number = {4},
  journal = {Rev. Mod. Phys.},
  publisher = {American Physical Society (APS)},
  author = {Bender,  Carl M. and Hook,  Daniel W.},
  year = {2024},
  month = oct 
}

@article{Bender1999,
  title = {$\mathcal{PT}$-symmetric quantum mechanics},
  volume = {40},
  ISSN = {1089-7658},
  url = {http://dx.doi.org/10.1063/1.532860},
  DOI = {10.1063/1.532860},
  number = {5},
  journal = {J. Math. Phys.},
  publisher = {AIP Publishing},
  author = {Bender,  Carl M. and Boettcher,  Stefan and Meisinger,  Peter N.},
  year = {1999},
  month = may,
  pages = {2201--2229}
}

@book{kato2013perturbation,
  title={Perturbation theory for linear operators},
  author={Kato, Tosio},
  volume={132},
  year={2013},
  publisher={Springer Science \& Business Media}
}

@article{Yao2018,
  title = {Edge States and Topological Invariants of Non-Hermitian Systems},
  volume = {121},
  ISSN = {1079-7114},
  url = {http://dx.doi.org/10.1103/PhysRevLett.121.086803},
  DOI = {10.1103/physrevlett.121.086803},
  pages = {086803},
  number = {8},
  journal = {Phys. Rev. Lett.},
  publisher = {American Physical Society (APS)},
  author = {Yao,  Shunyu and Wang,  Zhong},
  year = {2018},
  month = aug 
}

@misc{chan2023simulatingnonunitarydynamicsusing,
      title={Simulating non-unitary dynamics using quantum signal processing with unitary block encoding}, 
      author={Hans Hon Sang Chan and David Mu{\~n}oz Ramo and Nathan Fitzpatrick},
      year={2023},
      eprint={2303.06161},
      archivePrefix={arXiv},
      primaryClass={quant-ph},
      url={https://arxiv.org/abs/2303.06161}, 
}

@misc{li2025dynamicssimulationarbitrarynonhermitian,
      title={Dynamics Simulation of Arbitrary Non-Hermitian Systems Based on Quantum Monte Carlo}, 
      author={Xiaogang Li and Kecheng Liu and Qiming Ding},
      year={2025},
      eprint={2507.11675},
      archivePrefix={arXiv},
      primaryClass={quant-ph},
      url={https://arxiv.org/abs/2507.11675}, 
}

@article{An2023,
  title = {Linear Combination of Hamiltonian Simulation for Nonunitary Dynamics with Optimal State Preparation Cost},
  volume = {131},
  ISSN = {1079-7114},
  url = {http://dx.doi.org/10.1103/PhysRevLett.131.150603},
  DOI = {10.1103/physrevlett.131.150603},
  pages = {150603},
  number = {15},
  journal = {Phys. Rev. Lett.},
  publisher = {American Physical Society (APS)},
  author = {An,  Dong and Liu,  Jin-Peng and Lin,  Lin},
  year = {2023},
  month = oct 
  }

@inbook{Efron1992,
  title = {Bootstrap Methods: Another Look at the Jackknife},
  ISBN = {9781461243809},
  ISSN = {0172-7397},
  url = {http://dx.doi.org/10.1007/978-1-4612-4380-9_41},
  DOI = {10.1007/978-1-4612-4380-9_41},
  booktitle = {Breakthroughs in Statistics},
  publisher = {Springer New York},
  author = {Efron,  Bradley},
  year = {1992},
  pages = {569--593}
}

@article{ParraRodriguez2020,
  title = {Digital-analog quantum computation},
  volume = {101},
  ISSN = {2469-9934},
  url = {http://dx.doi.org/10.1103/PhysRevA.101.022305},
  DOI = {10.1103/physreva.101.022305},
  pages = {022305},
  number = {2},
  journal = {Phys. Rev. A},
  publisher = {American Physical Society (APS)},
  author = {Parra-Rodriguez,  Adrian and Lougovski,  Pavel and Lamata,  Lucas and Solano,  Enrique and Sanz,  Mikel},
  year = {2020},
  month = feb 
}

@book{horn2012matrix,
  title={Matrix analysis},
  author={Horn, Roger A and Johnson, Charles R},
  year={2012},
  publisher={Cambridge University Press}
}

@book{papoulis2002probability,
  author = {Papoulis, Athanasios and Pillai, S. Unnikrishna},
  title = {Probability, Random Variables, and Stochastic Processes},
  edition = {4th},
  publisher = {McGraw-Hill},
  address = {Boston},
  year = {2002},
}

@book{bers1964partial,
  title={Partial differential equations},
  author={Bers, Lipman and John, Fritz and Schechter, Martin},
  year={1964},
  publisher={American Mathematical Soc.}
}

@book{dollard1979product,
  author = {Dollard, John D. and Friedman, Charles N.},
  title = {Product Integration with Applications to Differential Equations},
  publisher = {Addison-Wesley},
  address = {Reading, MA},
  year = {1979},
  series = {Encyclopedia of Mathematics and its Applications},
  volume = {10},
}

@article{hamazaki2025intro,
  title = {An Introduction to Monitored Quantum Systems and Quantum Trajectories: Spectrum, Typicality, and Phases},
  author = {Hamazaki, Ryusuke and Mochizuki, Ken and Oshima, Hisanori and Fuji, Yohei},
  journal = {Prog. Theor. Exp. Phys.},
  volume = {2026},
  number = {5},
  pages = {052A01},
  year = {2026},
  DOI = {10.1093/ptep/ptag055},
  url = {https://doi.org/10.1093/ptep/ptag055},
}

@article{Mu2020,
  title = {Emergent Fermi surface in a many-body non-Hermitian fermionic chain},
  volume = {102},
  ISSN = {2469-9969},
  url = {http://dx.doi.org/10.1103/PhysRevB.102.081115},
  DOI = {10.1103/physrevb.102.081115},
  pages = {081115},
  number = {8},
  journal = {Phys. Rev. B},
  publisher = {American Physical Society (APS)},
  author = {Mu,  Sen and Lee,  Ching Hua and Li,  Linhu and Gong,  Jiangbin},
  year = {2020},
  month = aug 
}

@article{Shen2025,
  title = {Observation of the non-Hermitian skin effect and Fermi skin on a digital quantum computer},
  volume = {16},
  ISSN = {2041-1723},
  url = {http://dx.doi.org/10.1038/s41467-025-55953-4},
  DOI = {10.1038/s41467-025-55953-4},
  pages = {1340},
  number = {1},
  journal = {Nat. Commun.},
  publisher = {Springer Science and Business Media LLC},
  author = {Shen,  Ruizhe and Chen,  Tianqi and Yang,  Bo and Lee,  Ching Hua},
  year = {2025},
  month = feb 
}

@article{Kawabata2019,
  title = {Symmetry and Topology in Non-Hermitian Physics},
  volume = {9},
  ISSN = {2160-3308},
  url = {http://dx.doi.org/10.1103/PhysRevX.9.041015},
  DOI = {10.1103/physrevx.9.041015},
  pages = {041015},
  number = {4},
  journal = {Phys. Rev. X},
  publisher = {American Physical Society (APS)},
  author = {Kawabata,  Kohei and Shiozaki,  Ken and Ueda,  Masahito and Sato,  Masatoshi},
  year = {2019},
  month = oct 
}

@article{ElGanainy2018,
  title = {Non-Hermitian physics and PT symmetry},
  volume = {14},
  ISSN = {1745-2481},
  url = {http://dx.doi.org/10.1038/nphys4323},
  DOI = {10.1038/nphys4323},
  number = {1},
  journal = {Nat. Phys.},
  publisher = {Springer Science and Business Media LLC},
  author = {El-Ganainy,  Ramy and Makris,  Konstantinos G. and Khajavikhan,  Mercedeh and Musslimani,  Ziad H. and Rotter,  Stefan and Christodoulides,  Demetrios N.},
  year = {2018},
  month = jan,
  pages = {11--19}
}

@article{Bender1998,
  title = {Real Spectra in Non-Hermitian Hamiltonians Having $\mathcal{P}\mathcal{T}$ Symmetry},
  volume = {80},
  ISSN = {1079-7114},
  url = {http://dx.doi.org/10.1103/PhysRevLett.80.5243},
  DOI = {10.1103/physrevlett.80.5243},
  number = {24},
  journal = {Phys. Rev. Lett.},
  publisher = {American Physical Society (APS)},
  author = {Bender,  Carl M. and Boettcher,  Stefan},
  year = {1998},
  month = jun,
  pages = {5243--5246}
}

@article{Zhang2025,
  title = {Observation of a non-Hermitian supersonic mode on a trapped-ion quantum computer},
  volume = {16},
  ISSN = {2041-1723},
  url = {http://dx.doi.org/10.1038/s41467-025-57930-3},
  DOI = {10.1038/s41467-025-57930-3},
  pages = {3286},
  number = {1},
  journal = {Nat. Commun.},
  publisher = {Springer Science and Business Media LLC},
  author = {Zhang,  Yuxuan and Carrasquilla,  Juan and Kim,  Yong Baek},
  year = {2025},
  month = apr 
}

@article{Song2019,
  title = {Non-Hermitian Skin Effect and Chiral Damping in Open Quantum Systems},
  volume = {123},
  ISSN = {1079-7114},
  url = {http://dx.doi.org/10.1103/PhysRevLett.123.170401},
  DOI = {10.1103/physrevlett.123.170401},
  pages = {170401},
  number = {17},
  journal = {Phys. Rev. Lett.},
  publisher = {American Physical Society (APS)},
  author = {Song,  Fei and Yao,  Shunyu and Wang,  Zhong},
  year = {2019},
  month = oct 
}

@article{Okuma2020,
  title = {Topological Origin of Non-Hermitian Skin Effects},
  volume = {124},
  ISSN = {1079-7114},
  url = {http://dx.doi.org/10.1103/PhysRevLett.124.086801},
  DOI = {10.1103/physrevlett.124.086801},
  pages = {086801},
  number = {8},
  journal = {Phys. Rev. Lett.},
  publisher = {American Physical Society (APS)},
  author = {Okuma,  Nobuyuki and Kawabata,  Kohei and Shiozaki,  Ken and Sato,  Masatoshi},
  year = {2020},
  month = feb 
}

@article{Yi2020,
  title = {Non-Hermitian Skin Modes Induced by On-Site Dissipations and Chiral Tunneling Effect},
  volume = {125},
  ISSN = {1079-7114},
  url = {http://dx.doi.org/10.1103/PhysRevLett.125.186802},
  DOI = {10.1103/physrevlett.125.186802},
  pages = {186802},
  number = {18},
  journal = {Phys. Rev. Lett.},
  publisher = {American Physical Society (APS)},
  author = {Yi,  Yifei and Yang,  Zhesen},
  year = {2020},
  month = oct 
}

@article{Longhi2022,
  title = {Self-Healing of Non-Hermitian Topological Skin Modes},
  volume = {128},
  ISSN = {1079-7114},
  url = {http://dx.doi.org/10.1103/PhysRevLett.128.157601},
  DOI = {10.1103/physrevlett.128.157601},
  pages = {157601},
  number = {15},
  journal = {Phys. Rev. Lett.},
  publisher = {American Physical Society (APS)},
  author = {Longhi,  Stefano},
  year = {2022},
  month = apr 
}

@article{Minganti2019,
  title = {Quantum exceptional points of non-Hermitian Hamiltonians and Liouvillians: The effects of quantum jumps},
  volume = {100},
  ISSN = {2469-9934},
  url = {http://dx.doi.org/10.1103/PhysRevA.100.062131},
  DOI = {10.1103/physreva.100.062131},
  pages = {062131},
  number = {6},
  journal = {Phys. Rev. A},
  publisher = {American Physical Society (APS)},
  author = {Minganti,  Fabrizio and Miranowicz,  Adam and Chhajlany,  Ravindra W. and Nori,  Franco},
  year = {2019},
  month = dec 
}

@article{Motta2019,
  title = {Determining eigenstates and thermal states on a quantum computer using quantum imaginary time evolution},
  volume = {16},
  ISSN = {1745-2481},
  url = {http://dx.doi.org/10.1038/s41567-019-0704-4},
  DOI = {10.1038/s41567-019-0704-4},
  number = {2},
  journal = {Nat. Phys.},
  publisher = {Springer Science and Business Media LLC},
  author = {Motta,  Mario and Sun,  Chong and Tan,  Adrian T. K. and O'Rourke,  Matthew J. and Ye,  Erika and Minnich,  Austin J. and Brand{\~a}o,  Fernando G. S. L. and Chan,  Garnet Kin-Lic},
  year = {2020},
  month = nov,
  pages = {205--210}
}

@article{McArdle2019,
  title = {Variational ansatz-based quantum simulation of imaginary time evolution},
  volume = {5},
  ISSN = {2056-6387},
  url = {http://dx.doi.org/10.1038/s41534-019-0187-2},
  DOI = {10.1038/s41534-019-0187-2},
  pages = {75},
  number = {1},
  journal = {npj Quantum Inf.},
  publisher = {Springer Science and Business Media LLC},
  author = {McArdle,  Sam and Jones,  Tyson and Endo,  Suguru and Li,  Ying and Benjamin,  Simon C. and Yuan,  Xiao},
  year = {2019},
  month = sep 
}

@article{Kosugi2022,
  title = {Imaginary-time evolution using forward and backward real-time evolution with a single ancilla: First-quantized eigensolver algorithm for quantum chemistry},
  volume = {4},
  ISSN = {2643-1564},
  url = {http://dx.doi.org/10.1103/PhysRevResearch.4.033121},
  DOI = {10.1103/physrevresearch.4.033121},
  pages = {033121},
  number = {3},
  journal = {Phys. Rev. Res.},
  publisher = {American Physical Society (APS)},
  author = {Kosugi,  Taichi and Nishiya,  Yusuke and Nishi,  Hirofumi and Matsushita,  Yu-ichiro},
  year = {2022},
  month = aug 
}

@article{Silva2023,
  title = {Fragmented imaginary-time evolution for early-stage quantum signal processors},
  volume = {13},
  ISSN = {2045-2322},
  url = {http://dx.doi.org/10.1038/s41598-023-45540-2},
  DOI = {10.1038/s41598-023-45540-2},
  pages = {18258},
  number = {1},
  journal = {Sci. Rep.},
  publisher = {Springer Science and Business Media LLC},
  author = {Silva,  Thais L. and Taddei,  M{\'a}rcio M. and Carrazza,  Stefano and Aolita,  Leandro},
  year = {2023},
  month = oct 
}

@article{Coopmans2023,
  title = {Predicting Gibbs-State Expectation Values with Pure Thermal Shadows},
  volume = {4},
  ISSN = {2691-3399},
  url = {http://dx.doi.org/10.1103/PRXQuantum.4.010305},
  DOI = {10.1103/prxquantum.4.010305},
  pages = {010305},
  number = {1},
  journal = {PRX Quantum},
  publisher = {American Physical Society (APS)},
  author = {Coopmans,  Luuk and Kikuchi,  Yuta and Benedetti,  Marcello},
  year = {2023},
  month = jan 
}

@misc{kuji2025vqnhite,
      title={Variational quantum-neural hybrid imaginary time evolution}, 
      author={Hiroki Kuji and Tetsuro Nikuni and Yuta Shingu},
      year={2025},
      eprint={2503.22570},
      archivePrefix={arXiv},
      primaryClass={quant-ph},
      url={https://arxiv.org/abs/2503.22570}, 
}

@article{Li2017,
  title = {Efficient Variational Quantum Simulator Incorporating Active Error Minimization},
  volume = {7},
  ISSN = {2160-3308},
  url = {http://dx.doi.org/10.1103/PhysRevX.7.021050},
  DOI = {10.1103/physrevx.7.021050},
  pages = {021050},
  number = {2},
  journal = {Phys. Rev. X},
  publisher = {American Physical Society (APS)},
  author = {Li,  Ying and Benjamin,  Simon C.},
  year = {2017},
  month = jun 
}

@article{Temme2017,
  title = {Error Mitigation for Short-Depth Quantum Circuits},
  volume = {119},
  ISSN = {1079-7114},
  url = {http://dx.doi.org/10.1103/PhysRevLett.119.180509},
  DOI = {10.1103/physrevlett.119.180509},
  pages = {180509},
  number = {18},
  journal = {Phys. Rev. Lett.},
  publisher = {American Physical Society (APS)},
  author = {Temme,  Kristan and Bravyi,  Sergey and Gambetta,  Jay M.},
  year = {2017},
  month = nov 
}

@article{Han2021,
  title = {Experimental Simulation of Open Quantum System Dynamics via Trotterization},
  volume = {127},
  ISSN = {1079-7114},
  url = {http://dx.doi.org/10.1103/PhysRevLett.127.020504},
  DOI = {10.1103/physrevlett.127.020504},
  pages = {020504},
  number = {2},
  journal = {Phys. Rev. Lett.},
  publisher = {American Physical Society (APS)},
  author = {Han,  J. and Cai,  W. and Hu,  L. and Mu,  X. and Ma,  Y. and Xu,  Y. and Wang,  W. and Wang,  H. and Song,  Y.~P. and Zou,  C.-L. and Sun,  L.},
  year = {2021},
  month = jul 
}

@article{Wang2019,
  title = {Non-Hermitian dynamics without dissipation in quantum systems},
  volume = {99},
  ISSN = {2469-9934},
  url = {http://dx.doi.org/10.1103/PhysRevA.99.063834},
  DOI = {10.1103/physreva.99.063834},
  pages = {063834},
  number = {6},
  journal = {Phys. Rev. A},
  publisher = {American Physical Society (APS)},
  author = {Wang,  Yu-Xin and Clerk,  A. A.},
  year = {2019},
  month = jun 
}

@article{Gorini1976,
  title = {Completely positive dynamical semigroups of N-level systems},
  volume = {17},
  ISSN = {1089-7658},
  url = {http://dx.doi.org/10.1063/1.522979},
  DOI = {10.1063/1.522979},
  number = {5},
  journal = {J. Math. Phys.},
  publisher = {AIP Publishing},
  author = {Gorini,  Vittorio and Kossakowski,  Andrzej and Sudarshan,  E. C. G.},
  year = {1976},
  month = may,
  pages = {821--825}
}

@article{Jebraeilli2025,
  title = {Quantum simulation of a qubit with a non-Hermitian Hamiltonian},
  volume = {111},
  ISSN = {2469-9934},
  url = {http://dx.doi.org/10.1103/PhysRevA.111.032211},
  DOI = {10.1103/physreva.111.032211},
  pages = {032211},
  number = {3},
  journal = {Phys. Rev. A},
  publisher = {American Physical Society (APS)},
  author = {Jebraeilli,  Anastashia and Geller,  Michael R.},
  year = {2025},
  month = mar 
}

@article{Liu2023,
  title = {Practical quantum simulation of small-scale non-Hermitian dynamics},
  volume = {107},
  ISSN = {2469-9934},
  url = {http://dx.doi.org/10.1103/PhysRevA.107.062608},
  DOI = {10.1103/physreva.107.062608},
  pages = {062608},
  number = {6},
  journal = {Phys. Rev. A},
  publisher = {American Physical Society (APS)},
  author = {Liu,  Hongfeng and Yang,  Xiaodong and Tang,  Kai and Che,  Liangyu and Nie,  Xinfang and Xin,  Tao and Li,  Jun and Lu,  Dawei},
  year = {2023},
  month = jun 
}

@article{Wen2019,
  title = {Experimental demonstration of a digital quantum simulation of a general $\mathcal{PT}$-symmetric system},
  volume = {99},
  ISSN = {2469-9934},
  url = {http://dx.doi.org/10.1103/PhysRevA.99.062122},
  DOI = {10.1103/physreva.99.062122},
  pages = {062122},
  number = {6},
  journal = {Phys. Rev. A},
  publisher = {American Physical Society (APS)},
  author = {Wen,  Jingwei and Zheng,  Chao and Kong,  Xiangyu and Wei,  Shijie and Xin,  Tao and Long,  Guilu},
  year = {2019},
  month = jun 
}

@article{Fan2025,
  title = {Solving non-Hermitian physics for optical manipulation on a quantum computer},
  volume = {14},
  ISSN = {2047-7538},
  url = {http://dx.doi.org/10.1038/s41377-025-01769-2},
  DOI = {10.1038/s41377-025-01769-2},
  pages = {132},
  number = {1},
  journal = {Light Sci. Appl.},
  publisher = {Springer Science and Business Media LLC},
  author = {Fan,  Yu-ang and Li,  Xiao and Wei,  Shijie and Li,  Yishan and Long,  Xinyue and Liu,  Hongfeng and Nie,  Xinfang and Ng,  Jack and Lu,  Dawei},
  year = {2025},
  month = mar 
}

@article{Zheng2021,
  title = {Universal quantum simulation of single-qubit nonunitary operators using duality quantum algorithm},
  volume = {11},
  ISSN = {2045-2322},
  url = {http://dx.doi.org/10.1038/s41598-021-83521-5},
  DOI = {10.1038/s41598-021-83521-5},
  pages = {3960},
  number = {1},
  journal = {Sci. Rep.},
  publisher = {Springer Science and Business Media LLC},
  author = {Zheng,  Chao},
  year = {2021},
  month = feb 
}

@article{An2025,
  title = {Quantum Algorithm for Linear Non-unitary Dynamics with Near-Optimal Dependence on All Parameters},
  volume = {407},
  ISSN = {1432-0916},
  url = {http://dx.doi.org/10.1007/s00220-025-05509-w},
  DOI = {10.1007/s00220-025-05509-w},
  pages = {19},
  number = {1},
  journal = {Commun. Math. Phys.},
  publisher = {Springer Science and Business Media LLC},
  author = {An,  Dong and Childs,  Andrew M. and Lin,  Lin},
  year = {2026},
  month = dec 
}

@article{Gupta2024,
  title = {Quantum jumps in driven-dissipative disordered many-body systems},
  volume = {109},
  ISSN = {2469-9934},
  url = {http://dx.doi.org/10.1103/PhysRevA.109.L050201},
  DOI = {10.1103/physreva.109.l050201},
  pages = {L050201},
  number = {5},
  journal = {Phys. Rev. A},
  publisher = {American Physical Society (APS)},
  author = {Gupta,  Sparsh and Yadalam,  Hari Kumar and Kulkarni,  Manas and Aron,  Camille},
  year = {2024},
  month = may 
}

@article{Garrahan2010,
  title = {Thermodynamics of Quantum Jump Trajectories},
  volume = {104},
  ISSN = {1079-7114},
  url = {http://dx.doi.org/10.1103/PhysRevLett.104.160601},
  DOI = {10.1103/physrevlett.104.160601},
  pages = {160601},
  number = {16},
  journal = {Phys. Rev. Lett.},
  publisher = {American Physical Society (APS)},
  author = {Garrahan,  Juan P. and Lesanovsky,  Igor},
  year = {2010},
  month = apr 
}

@article{Liu2025,
  title = {Lindbladian dynamics with loss of quantum jumps},
  volume = {111},
  ISSN = {2469-9969},
  url = {http://dx.doi.org/10.1103/PhysRevB.111.024303},
  DOI = {10.1103/physrevb.111.024303},
  pages = {024303},
  number = {2},
  journal = {Phys. Rev. B},
  publisher = {American Physical Society (APS)},
  author = {Liu,  Yu-Guo and Chen,  Shu},
  year = {2025},
  month = jan 
}

@article{Lindblad1976,
  title = {On the generators of quantum dynamical semigroups},
  volume = {48},
  ISSN = {1432-0916},
  url = {http://dx.doi.org/10.1007/BF01608499},
  DOI = {10.1007/bf01608499},
  number = {2},
  journal = {Commun. Math. Phys.},
  publisher = {Springer Science and Business Media LLC},
  author = {Lindblad,  G.},
  year = {1976},
  month = jun,
  pages = {119--130}
}

@book{Breuer2007,
    author = {Breuer, Heinz-Peter and Petruccione, Francesco},
    title = {The Theory of Open Quantum Systems},
    publisher = {Oxford University Press},
    year = {2007},
    month = {01},
    isbn = {9780199213900},
    doi = {10.1093/acprof:oso/9780199213900.001.0001},
    url = {https://doi.org/10.1093/acprof:oso/9780199213900.001.0001},
}

@article{Dalibard1992,
  title = {Wave-function approach to dissipative processes in quantum optics},
  author = {Dalibard, Jean and Castin, Yvan and M\o{}lmer, Klaus},
  journal = {Phys. Rev. Lett.},
  volume = {68},
  issue = {5},
  pages = {580--583},
  numpages = {0},
  year = {1992},
  month = {Feb},
  publisher = {American Physical Society},
  doi = {10.1103/PhysRevLett.68.580},
  url = {https://link.aps.org/doi/10.1103/PhysRevLett.68.580}
}

@article{Plenio1998,
  title = {The quantum-jump approach to dissipative dynamics in quantum optics},
  author = {Plenio, M. B. and Knight, P. L.},
  journal = {Rev. Mod. Phys.},
  volume = {70},
  issue = {1},
  pages = {101--144},
  numpages = {0},
  year = {1998},
  month = {Jan},
  publisher = {American Physical Society},
  doi = {10.1103/RevModPhys.70.101},
  url = {https://link.aps.org/doi/10.1103/RevModPhys.70.101}
}

@article{Bergholtz2021,
  title = {Exceptional topology of non-Hermitian systems},
  author = {Bergholtz, Emil J. and Budich, Jan Carl and Kunst, Flore K.},
  journal = {Rev. Mod. Phys.},
  volume = {93},
  issue = {1},
  pages = {015005},
  numpages = {31},
  year = {2021},
  month = {Feb},
  publisher = {American Physical Society},
  doi = {10.1103/RevModPhys.93.015005},
  url = {https://link.aps.org/doi/10.1103/RevModPhys.93.015005}
}

@article{Hatano1997PRB,
  title = {Vortex pinning and non-Hermitian quantum mechanics},
  author = {Hatano, Naomichi and Nelson, David R.},
  journal = {Phys. Rev. B},
  volume = {56},
  issue = {14},
  pages = {8651--8673},
  numpages = {0},
  year = {1997},
  month = {Oct},
  publisher = {American Physical Society},
  doi = {10.1103/PhysRevB.56.8651},
  url = {https://link.aps.org/doi/10.1103/PhysRevB.56.8651}
}

@article{Hatano1996PRL,
  title = {Localization Transitions in Non-Hermitian Quantum Mechanics},
  author = {Hatano, Naomichi and Nelson, David R.},
  journal = {Phys. Rev. Lett.},
  volume = {77},
  issue = {3},
  pages = {570--573},
  numpages = {0},
  year = {1996},
  month = {Jul},
  publisher = {American Physical Society},
  doi = {10.1103/PhysRevLett.77.570},
  url = {https://link.aps.org/doi/10.1103/PhysRevLett.77.570}
}

@article{Berry2015,
  title = {Simulating Hamiltonian Dynamics with a Truncated Taylor Series},
  author = {Berry, Dominic W. and Childs, Andrew M. and Cleve, Richard and Kothari, Robin and Somma, Rolando D.},
  journal = {Phys. Rev. Lett.},
  volume = {114},
  issue = {9},
  pages = {090502},
  numpages = {5},
  year = {2015},
  month = {Mar},
  publisher = {American Physical Society},
  doi = {10.1103/PhysRevLett.114.090502},
  url = {https://link.aps.org/doi/10.1103/PhysRevLett.114.090502}
}

@article{Childs2012,
  title = {Hamiltonian simulation using linear combinations of unitary operations},
  volume = {12},
  ISSN = {1533-7146},
  url = {http://dx.doi.org/10.26421/QIC12.11-12-1},
  DOI = {10.26421/qic12.11-12-1},
  number = {11 & 12},
  journal = {Quantum Inf. Comput.},
  publisher = {Rinton Press},
  author = {Childs,  Andrew M. and Wiebe,  Nathan},
  year = {2012},
  month = nov,
  pages = {901--924}
}

@article{Low2017,
  title = {Optimal Hamiltonian Simulation by Quantum Signal Processing},
  author = {Low, Guang Hao and Chuang, Isaac L.},
  journal = {Phys. Rev. Lett.},
  volume = {118},
  issue = {1},
  pages = {010501},
  numpages = {5},
  year = {2017},
  month = {Jan},
  publisher = {American Physical Society},
  doi = {10.1103/PhysRevLett.118.010501},
  url = {https://link.aps.org/doi/10.1103/PhysRevLett.118.010501}
}

@inproceedings{Gilyn2019,
  series = {STOC '19},
  title = {Quantum singular value transformation and beyond: exponential improvements for quantum matrix arithmetics},
  url = {http://dx.doi.org/10.1145/3313276.3316366},
  DOI = {10.1145/3313276.3316366},
  booktitle = {Proceedings of the 51st Annual ACM SIGACT Symposium on Theory of Computing},
  publisher = {ACM},
  author = {Gily{\'e}n,  Andr{\'a}s and Su,  Yuan and Low,  Guang Hao and Wiebe,  Nathan},
  year = {2019},
  month = jun,
  pages = {193--204},
  collection = {STOC '19}
}

@article{Schlimgen2022,
  title = {Quantum simulation of the Lindblad equation using a unitary decomposition of operators},
  author = {Schlimgen, Anthony W. and Head-Marsden, Kade and Sager, LeeAnn M. and Narang, Prineha and Mazziotti, David A.},
  journal = {Phys. Rev. Res.},
  volume = {4},
  issue = {2},
  pages = {023216},
  numpages = {7},
  year = {2022},
  month = {Jun},
  publisher = {American Physical Society},
  doi = {10.1103/PhysRevResearch.4.023216},
  url = {https://link.aps.org/doi/10.1103/PhysRevResearch.4.023216}
}

@misc{wang2026nonhermitirydberg,
      title={Non-Hermitian physics in the many-body system of Rydberg atoms}, 
      author={Ya-Jun Wang and Jun Zhang and Dong-Sheng Ding},
      year={2026},
      eprint={2602.07372},
      archivePrefix={arXiv},
      primaryClass={cond-mat.quant-gas},
      url={https://arxiv.org/abs/2602.07372}, 
}

@article{Vikstl2024,
  title = {Study of noise in virtual distillation circuits for quantum error mitigation},
  volume = {8},
  ISSN = {2521-327X},
  url = {http://dx.doi.org/10.22331/q-2024-08-14-1441},
  DOI = {10.22331/q-2024-08-14-1441},
  journal = {Quantum},
  publisher = {Verein zur Forderung des Open Access Publizierens in den Quantenwissenschaften},
  author = {Vikst{\aa}l,  Pontus and Ferrini,  Giulia and Puri,  Shruti},
  year = {2024},
  month = aug,
  pages = {1441}
}

@article{vanKempen2000,
  title = {Mean and variance of ratio estimators used in fluorescence ratio imaging},
  volume = {39},
  ISSN = {1097-0320},
  url = {http://dx.doi.org/10.1002/(SICI)1097-0320(20000401)39:4<300::AID-CYTO8>3.0.CO;2-O},
  DOI = {10.1002/(sici)1097-0320(20000401)39:4<300::aid-cyto8>3.0.co;2-o},
  number = {4},
  journal = {Cytometry},
  publisher = {Wiley},
  author = {van Kempen,  G.M.P. and van Vliet,  L.J.},
  year = {2000},
  month = apr,
  pages = {300--305}
}

@article{Schmolke2022,
  title = {Noise-Induced Quantum Synchronization},
  volume = {129},
  ISSN = {1079-7114},
  url = {http://dx.doi.org/10.1103/PhysRevLett.129.250601},
  DOI = {10.1103/physrevlett.129.250601},
  pages = {250601},
  number = {25},
  journal = {Phys. Rev. Lett.},
  publisher = {American Physical Society (APS)},
  author = {Schmolke,  Finn and Lutz,  Eric},
  year = {2022},
  month = dec 
}

@article{Takasu2020,
  title = {PT-symmetric non-Hermitian quantum many-body system using ultracold atoms in an optical lattice with controlled dissipation},
  volume = {2020},
  ISSN = {2050-3911},
  url = {http://dx.doi.org/10.1093/ptep/ptaa094},
  DOI = {10.1093/ptep/ptaa094},
  pages = {12A110},
  number = {12},
  journal = {Prog. Theor. Exp. Phys.},
  publisher = {Oxford University Press (OUP)},
  author = {Takasu,  Yosuke and Yagami,  Tomoya and Ashida,  Yuto and Hamazaki,  Ryusuke and Kuno,  Yoshihito and Takahashi,  Yoshiro},
  year = {2020},
  month = sep 
}

@article{Li2019,
  title = {Observation of parity-time symmetry breaking transitions in a dissipative Floquet system of ultracold atoms},
  volume = {10},
  ISSN = {2041-1723},
  url = {http://dx.doi.org/10.1038/s41467-019-08596-1},
  DOI = {10.1038/s41467-019-08596-1},
  pages = {855},
  number = {1},
  journal = {Nat. Commun.},
  publisher = {Springer Science and Business Media LLC},
  author = {Li,  Jiaming and Harter,  Andrew K. and Liu,  Ji and de Melo,  Leonardo and Joglekar,  Yogesh N. and Luo,  Le},
  year = {2019},
  month = feb 
}

@article{Chen2021,
  title = {Quantum Jumps in the Non-Hermitian Dynamics of a Superconducting Qubit},
  volume = {127},
  ISSN = {1079-7114},
  url = {http://dx.doi.org/10.1103/PhysRevLett.127.140504},
  DOI = {10.1103/physrevlett.127.140504},
  pages = {140504},
  number = {14},
  journal = {Phys. Rev. Lett.},
  publisher = {American Physical Society (APS)},
  author = {Chen,  Weijian and Abbasi,  Maryam and Joglekar,  Yogesh N. and Murch,  Kater W.},
  year = {2021},
  month = sep 
}

@article{Xiao2017,
  title = {Observation of topological edge states in parity--time-symmetric quantum walks},
  volume = {13},
  ISSN = {1745-2481},
  url = {http://dx.doi.org/10.1038/nphys4204},
  DOI = {10.1038/nphys4204},
  number = {11},
  journal = {Nat. Phys.},
  publisher = {Springer Science and Business Media LLC},
  author = {Xiao,  L. and Zhan,  X. and Bian,  Z. H. and Wang,  K. K. and Zhang,  X. and Wang,  X. P. and Li,  J. and Mochizuki,  K. and Kim,  D. and Kawakami,  N. and Yi,  W. and Obuse,  H. and Sanders,  B. C. and Xue,  P.},
  year = {2017},
  month = jul,
  pages = {1117--1123}
}

@article{Wu2024,
  title = {Third-order exceptional line in a nitrogen-vacancy spin system},
  volume = {19},
  ISSN = {1748-3395},
  url = {http://dx.doi.org/10.1038/s41565-023-01583-0},
  DOI = {10.1038/s41565-023-01583-0},
  number = {2},
  journal = {Nat. Nanotechnol.},
  publisher = {Springer Science and Business Media LLC},
  author = {Wu,  Yang and Wang,  Yunhan and Ye,  Xiangyu and Liu,  Wenquan and Niu,  Zhibo and Duan,  Chang-Kui and Wang,  Ya and Rong,  Xing and Du,  Jiangfeng},
  year = {2024},
  month = jan,
  pages = {160--165}
}

@article{Chen2025EP,
  title = {Quantum tomography of a third-order exceptional point in a dissipative trapped ion},
  volume = {16},
  ISSN = {2041-1723},
  url = {http://dx.doi.org/10.1038/s41467-025-62573-5},
  DOI = {10.1038/s41467-025-62573-5},
  pages = {7478},
  number = {1},
  journal = {Nat. Commun.},
  publisher = {Springer Science and Business Media LLC},
  author = {Chen,  Y.-Y. and Li,  K. and Zhang,  L. and Wu,  Y.-K. and Ma,  J.-Y. and Yang,  H.-X. and Zhang,  C. and Qi,  B.-X. and Zhou,  Z.-C. and Hou,  P.-Y. and Xu,  Y. and Duan,  L.-M.},
  year = {2025},
  month = aug 
}

@article{Liu2021,
  title = {Dynamically Encircling an Exceptional Point in a Real Quantum System},
  volume = {126},
  ISSN = {1079-7114},
  url = {http://dx.doi.org/10.1103/PhysRevLett.126.170506},
  DOI = {10.1103/physrevlett.126.170506},
  pages = {170506},
  number = {17},
  journal = {Phys. Rev. Lett.},
  publisher = {American Physical Society (APS)},
  author = {Liu,  Wenquan and Wu,  Yang and Duan,  Chang-Kui and Rong,  Xing and Du,  Jiangfeng},
  year = {2021},
  month = apr 
}

@article{Ren2022,
  title = {Chiral control of quantum states in non-Hermitian spin--orbit-coupled fermions},
  volume = {18},
  ISSN = {1745-2481},
  url = {http://dx.doi.org/10.1038/s41567-021-01491-x},
  DOI = {10.1038/s41567-021-01491-x},
  number = {4},
  journal = {Nat. Phys.},
  publisher = {Springer Science and Business Media LLC},
  author = {Ren,  Zejian and Liu,  Dong and Zhao,  Entong and He,  Chengdong and Pak,  Ka Kwan and Li,  Jensen and Jo,  Gyu-Boong},
  year = {2022},
  month = jan,
  pages = {385--389}
}

@article{Gong2018,
  title = {Topological Phases of Non-Hermitian Systems},
  volume = {8},
  ISSN = {2160-3308},
  url = {http://dx.doi.org/10.1103/PhysRevX.8.031079},
  DOI = {10.1103/physrevx.8.031079},
  pages = {031079},
  number = {3},
  journal = {Phys. Rev. X},
  publisher = {American Physical Society (APS)},
  author = {Gong,  Zongping and Ashida,  Yuto and Kawabata,  Kohei and Takasan,  Kazuaki and Higashikawa,  Sho and Ueda,  Masahito},
  year = {2018},
  month = sep 
}

@book{supple,
  title = {\rm{See Supplemental Material at [URL will be inserted by publisher] for further details.}},
}

@article{PRXQuantum.4.040329,
  title = {Noise-Assisted Digital Quantum Simulation of Open Systems Using Partial Probabilistic Error Cancellation},
  author = {Guimar\~aes, Jos\'e D. and Lim, James and Vasilevskiy, Mikhail I. and Huelga, Susana F. and Plenio, Martin B.},
  journal = {PRX Quantum},
  volume = {4},
  issue = {4},
  pages = {040329},
  numpages = {19},
  year = {2023},
  month = {Nov},
  publisher = {American Physical Society},
  doi = {10.1103/PRXQuantum.4.040329},
  url = {https://link.aps.org/doi/10.1103/PRXQuantum.4.040329}
}

@Article{ScaleFreeNHSE,
	title={{Scale-free non-Hermitian skin effect in a boundary-dissipated spin chain}},
	author={He-Ran Wang and Bo Li and Fei Song and Zhong Wang},
	journal={SciPost Phys.},
	volume={15},
	pages={191},
	year={2023},
	publisher={SciPost},
	doi={10.21468/SciPostPhys.15.5.191},
	url={https://scipost.org/10.21468/SciPostPhys.15.5.191},
}
\clearpage
\newpage

\normalsize \renewcommand{\theequation}{A\arabic{equation}} \setcounter{equation}{0}

\onecolumngrid
\section*{End Matter}\label{sec:end_matter}
\twocolumngrid
\emph{The construction of $H_{\mathrm{I},\ell}$ from $H_{\mathrm{Im}}$---}
Here, we describe the construction of the stochastic generators used to simulate the target non-Hermitian dynamics generated by $H_{\mathrm{target}} = H_{\mathrm{Re}} + i H_{\mathrm{Im}}$.
We assume that $H_{\mathrm{Im}}$ is decomposed into local terms as
\begin{align}
  H_{\mathrm{Im}} = \sum_{\ell \in E} H_{\mathrm{Im},\ell},
  \label{eq:HIm_decomp}
\end{align}
where each $H_{\mathrm{Im},\ell}$ acts only on the sites specified by $\ell$ and $E$ denotes the set of such local labels.
For this purpose, we impose the condition up to an identity shift
\begin{align}
  H_{\mathrm{Im}} - E_{\mathrm{shift}} I = - \sum_{\ell \in E} \gamma_{\ell} H_{\mathrm{I},\ell}^{2},
  \label{eq:condition_Hi_multi}
\end{align}
where $E_{\mathrm{shift}}$ is chosen such that the left-hand side is negative semidefinite.
Note that the shift $E_{\mathrm{shift}}I$ only rescales the overall norm and does not affect values of observables with respect to normalized time-evolved states.

To determine $E_\mathrm{shift}$, recall that the Hamiltonian $H_\mathrm{Im}$ is decomposed as Eq.~\eqref{eq:HIm_decomp}.
For each $\ell$, we take the spectral decomposition of $H_{\mathrm{Im},\ell}$ on the local Hilbert space $\mathcal{H}_\ell$ defined on the sites in $\ell$: $H_{\mathrm{Im},\ell} = U_{\ell} \mathrm{diag}(\mu_{\ell,z}) U_{\ell}^{\dagger}$.
Here, $\{\mu_{\ell,z}\}_z$ is the set of eigenvalues of $ H_{\mathrm{Im},\ell}$.
Let us choose a local shift $e_{\ell}$ satisfying $e_{\ell}\ge \max_z(\mu_{\ell,z})$, so that $-e_{\ell}I_{\ell}+H_{\mathrm{Im},\ell}$ is negative semidefinite.
Then define $H_{\mathrm{I},\ell}= U_{\ell} \mathrm{diag}(\lambda_{\ell,z}) U_{\ell}^{\dagger}$ and $\lambda_{\ell,z} = \sqrt{(e_{\ell}-\mu_{\ell,z})/\gamma_{\ell}}$.
By construction,
\begin{align}
  H_{\mathrm{I},\ell}^{2}
  = U_{\ell} \mathrm{diag}\left(\frac{e_{\ell}-\mu_{\ell,z}}{\gamma_{\ell}}\right)U_{\ell}^{\dagger}
  = \frac{e_{\ell}I_{\ell}-H_{\mathrm{Im},\ell}}{\gamma_{\ell}},
  \label{eq:Hi_local_sq}
\end{align}
where $I_{\ell}$ is the identity operator on $\mathcal{H}_\ell$.
Summing Eq.~\eqref{eq:Hi_local_sq} over $\ell$ and putting the local identities $I_{\ell}$ into the full Hilbert space yields $- \sum_{\ell \in E} \gamma_{\ell} H_{\mathrm{I},\ell}^{2}=\sum_{\ell \in E} H_{\mathrm{Im},\ell} - \Big(\sum_{\ell \in E} e_{\ell}\Big) I=H_{\mathrm{Im}} - E_{\mathrm{shift}} I$, with $E_{\mathrm{shift}}:=\sum_{\ell \in E} e_{\ell}$, \hk{thereby satisfying} Eq.~\eqref{eq:condition_Hi_multi}.
This construction keeps each $H_{\mathrm{I},\ell}$ strictly local with the same locality pattern as the decomposition in Eq.~\eqref{eq:HIm_decomp}, enabling an implementation compatible with digital-analog quantum computing with independently driven stochastic unitarities on each \hk{local support}.

\medskip
\emph{Local basis operations for sQEM---}
We use experimentally available single-qubit basis operation channels for quasi-probability decompositions.
Following Ref.~\cite{Endo2018}, we employ the sixteen basis operations listed in Table~\ref{tab:bases}.
The notation $[A]$ denotes the channel $\rho\mapsto A\rho A^\dagger$.

We construct local basis operation channels on each \hk{local support} $\ell\in E$ by tensor products of the single-qubit basis operation channels acting on the \hk{sites in $\ell$}.
We use a single index $j$ to enumerate a fixed list of \hk{local} basis operation channels on \hk{support} $\ell$ and write them as $\{\mathcal{B}_{\ell j}\}_{j\ge 0}$.
We take $j=0$ for the identity channel and define, for all $\ell\in E$, $\mathcal{B}_{\ell,0}\coloneqq \mathcal{I}$ with the parity $\alpha_{\ell,0}\coloneqq 1$.
Since $\mathcal{B}_{\ell,0}$ and $\alpha_{\ell,0}$ are independent of $\ell$, we also write $\mathcal{B}_0\coloneqq \mathcal{I}$ and $\alpha_0\coloneqq 1$.
\begin{table*}[t]
\begin{center}
\begin{tabular}{clclclcl}
\toprule
0 & ~~$[I]$ (no operation) &
1 & ~~$[\sigma^{\mathrm{x}}] $ &
2 & ~~$[\sigma^{\mathrm{y}}] $ &
3 & ~~$[\sigma^{\mathrm{z}}] $ \\
4 & ~~$[R_{\mathrm{x}}] = [\frac{1}{\sqrt{2}}(I +i \sigma^{\mathrm{x}})]$ &
5 & ~~$[R_{\mathrm{y}}] = [\frac{1}{\sqrt{2}}(I +i \sigma^{\mathrm{y}})] $ &
6 & ~~$[R_{\mathrm{z}}] = [\frac{1}{\sqrt{2}}(I+i \sigma^{\mathrm{z}})] $ &
7 & ~~$[R_{\mathrm{y}z}] = [\frac{1}{\sqrt{2}}(\sigma^{\mathrm{y}} + \sigma^{\mathrm{z}})]$ \\
8 & ~~$[R_{\mathrm{z}x}] = [\frac{1}{\sqrt{2}}(\sigma^{\mathrm{z}} + \sigma^{\mathrm{x}})] $ &
9 & ~~$[R_{\mathrm{x}y}] = [\frac{1}{\sqrt{2}}(\sigma^{\mathrm{x}} + \sigma^{\mathrm{y}})]$ &
10 & ~~$[\pi_{\mathrm{x}}] = [\frac{1}{2}(I + \sigma^{\mathrm{x}})] $ &
11 & ~~$[\pi_{\mathrm{y}}] = [\frac{1}{2}(I + \sigma^{\mathrm{y}})]$ \\
12 & ~~$[\pi_{\mathrm{z}}] = [\frac{1}{2}(I + \sigma^{\mathrm{z}})] $ &
13 & ~~$[\pi_{\mathrm{y}z}] = [\frac{1}{2}(\sigma^{\mathrm{y}} +i \sigma^{\mathrm{z}})]$ &
14 & ~~$[\pi_{\mathrm{z}x}] = [\frac{1}{2}(\sigma^{\mathrm{z}} +i \sigma^{\mathrm{x}})]$ &
15 & ~~$[\pi_{\mathrm{x}y}] = [\frac{1}{2}(\sigma^{\mathrm{x}} +i \sigma^{\mathrm{y}})] $ \\
\bottomrule
\end{tabular}
\end{center}
\caption{
Sixteen single-qubit basis operation channels~\cite{Endo2018}.
$[I]$ denotes the identity operation (no operation). $[\sigma^\alpha]\quad (\alpha=x,y,z)$ corresponds to applying Pauli matrices.
$[R]$ denotes single-qubit rotations. $[\pi]$ denotes projective measurements.
}
\label{tab:bases}
\end{table*}

\medskip
\emph{Error analysis---}
Here, we \hk{analyze the} root-mean-square error (RMSE) at a final time $T$ \hk{and derive requirements on $\Delta t$ and $N_{\mathrm{traj}}$ for a target accuracy $\varepsilon$}.

Let $\mathbb{E}_{\boldsymbol{\xi}}[\cdot]$ denote the ensemble average over the Gaussian increments\hk{, and let $\mathbb{E}_{\rm all}[\cdot]$ denote the average over both the Gaussian increments and the sQEM sampling}.
\hk{We denote the accumulated sQEM normalization factor by $\Lambda(T)\coloneqq e^{\kappa T}$.}

We \hk{define the normalized expectation values of the target evolution and the Trotterized implementation with infinite sampling} as
\begin{align}
\langle O\rangle_{\mathrm{target}}(T)
&\coloneqq
\frac{\Tr\left[O\rho_{\mathrm{NH}}(T)\right]}
{\Tr\left[\rho_{\mathrm{NH}}(T)\right]},\\
\hk{\langle O\rangle_{\mathrm{impl}}(T)}
&\hk{\coloneqq
\frac{\Tr\left[O\rho_{\mathrm{NH}}^{(\mathrm{impl})}(T)\right]}
{\Tr\left[\rho_{\mathrm{NH}}^{(\mathrm{impl})}(T)\right]}.}
\end{align}
Here, $\rho_{\mathrm{NH}}(T)$ is the ideal unnormalized non-Hermitian state\hk{, whereas $\rho_{\mathrm{NH}}^{(\mathrm{impl})}(T)$ is the corresponding unnormalized state from the Trotterized implementation with infinite sampling}.

For each \hk{trajectory} $s$, let $\alpha_s\in\{\pm1\}$ be its parity, let $D_s(T)\in\{0,1\}$ be its acceptance indicator, and let $O_s(T)\in[-\|O\|,\|O\|]$ be its observable contribution, with $O_s(T)=0$ when $D_s(T)=0$.
We define \hk{$X_s(T)\coloneqq\alpha_sO_s(T)$ and $Y_s(T)\coloneqq\alpha_sD_s(T)$ and write the finite-sample estimator as}
\begin{align}
\hk{\langle O\rangle_{N_{\mathrm{traj}}}(T)
\coloneqq
\frac{\sum_{s=1}^{N_{\mathrm{traj}}}X_s(T)}
{\sum_{s=1}^{N_{\mathrm{traj}}}Y_s(T)}.}
\end{align}

We then define the \hk{overall} RMSE as
\begin{align}
\mathrm{RMSE}_{\mathrm{all}}(T)
\coloneqq
\sqrt{\mathbb{E}_{\mathrm{all}}\left[\bigl(\hk{\langle O\rangle_{N_{\mathrm{traj}}}(T)}-\langle O\rangle_{\mathrm{target}}(T)\bigr)^{2}\right]}.
\label{eq:em_rmse_all_def}
\end{align}
We separate it into the systematic error due to Trotterization and the statistical error due to finite sampling,
\begin{align}\label{eq:eps_sys}
\hk{\epsilon_{\mathrm{sys}}(T)}
&\hk{\coloneqq
\left|\langle O\rangle_{\mathrm{impl}}(T)-\langle O\rangle_{\mathrm{target}}(T)\right|,}\\
\hk{\epsilon_{\mathrm{stat}}(T)}
&\hk{\coloneqq
\sqrt{\mathbb{E}_{\mathrm{all}}\left[
\bigl(\langle O\rangle_{N_{\mathrm{traj}}}(T)-\langle O\rangle_{\mathrm{impl}}(T)\bigr)^2
\right]}.}
\end{align}
\hk{The statistical error includes both the variance and the bias of the finite-sample ratio estimator.
The Cauchy--Schwarz inequality gives
$\mathrm{RMSE}_{\mathrm{all}}(T)
\le
\epsilon_{\mathrm{sys}}(T)+\epsilon_{\mathrm{stat}}(T)$.}
For $\mathrm{RMSE}_{\mathrm{all}}(T)\le \varepsilon$, \hk{it is sufficient to require}
\begin{align}
\epsilon_{\mathrm{sys}}(T)\le \frac{\varepsilon}{2},
\qquad
\epsilon_{\mathrm{stat}}(T)\le \frac{\varepsilon}{2},
\label{eq:em_sufficient_split}
\end{align}

First, we consider the systematic error from \hk{the} Trotter formula between $H_{\mathrm{Re}}$ and $H_{\mathrm{I}}(k)\coloneqq \sum_{\ell \in E}\xi_{k,\ell} H_{\mathrm{I},\ell}$, where $\xi_{k,\ell} \coloneqq \int_{k\Delta t}^{k\Delta t+\Delta t} f_{\ell}(s)ds$, and \hk{from the even-odd bond splitting used for nearest-neighbor terms in} the GKSL channel construction.
The relevant errors are classified by (i) Lie-Trotter splitting between $H_{\mathrm{Re}}$ and $H_{\mathrm{I}}(k)$, (ii) Lie-Trotter even-odd splitting of $H_{\mathrm{Re}}$, and (iii) Suzuki-Trotter even-odd splitting for $H_{\mathrm{I}}(k)$.

\hk{Under the assumptions of Theorem~\ref{thm:bound_sys_error} and Corollary~\ref{coro:STeo_vanish} in \suppl~\cite{supple},} we have
\begin{align}
\epsilon_\mathrm{sys}(T)&\leq
\frac{4\|O\|\hk{\Lambda(T)}\mathbb{E}_{\boldsymbol{\xi}}\left[\sum_{k=0}^{N-1}\left\|U^{\mathrm{impl,ST(eo)}}_k-U^{\mathrm{ex}}_k\right\|\right]}{\min\{|\mathrm{Tr}[\rho_{\rm NH}(T)]|,\,|\mathrm{Tr}[\rho^{\rm (impl)}_{\rm NH}(T)]|\}}
\notag\\
&\le
\frac{4\|O\|\hk{\Lambda(T)}\,C^{\hk{\mathrm{ST(eo)}}}\,T\sqrt{\Delta t}}
{\min\{|\mathrm{Tr}[\rho_{\rm NH}(T)]|,\,|\mathrm{Tr}[\rho^{\rm (impl)}_{\rm NH}(T)]|\}}
\label{eq:suff_sys_app}
\end{align}
for some constant $C^{\mathrm{ST(eo)}}$ independent of $\Delta t$ and $T$.
\hk{Assuming that the denominator remains bounded away from zero as $\Delta t\to0$, this bound can be made smaller than $\varepsilon/2$ by reducing $\Delta t$.}

\hk{Applying Theorem~\ref{thm:rmse_ratio_sqem_full} in \suppl~\cite{supple} to the trajectory pairs $(X_s,Y_s)$ gives}
\begin{align}
\hk{\epsilon_{\mathrm{stat}}(T)}
\hk{\approx}
\hk{
\frac{\sqrt{\mathrm{Var}_{\mathrm{all}}\!\left(
X_s(T)-\langle O\rangle_{\mathrm{impl}}(T)Y_s(T)
\right)}}{
\sqrt{N_{\mathrm{traj}}}\,
\left|\mathbb{E}_{\mathrm{all}}[Y_s(T)]\right|
}.}
\label{eq:joint_ratio_rmse_end_matter}
\end{align}
\hk{Since
$|X_s(T)-\langle O\rangle_{\mathrm{impl}}(T)Y_s(T)|
\le(\|O\|+|\langle O\rangle_{\mathrm{impl}}(T)|)D_s(T)$,
defining $p_{\mathrm{succ}}(T)\coloneqq\mathbb{E}_{\mathrm{all}}[D_s(T)]$ and using
$\Tr[\rho_{\mathrm{NH}}^{(\mathrm{impl})}(T)]=\Lambda(T)\mathbb{E}_{\mathrm{all}}[Y_s(T)]$ give}
\begin{align}
\hk{\epsilon_{\mathrm{stat}}(T)
\ \lesssim\
\frac{\|O\|+|\langle O\rangle_{\mathrm{impl}}(T)|}{\sqrt{N_{\mathrm{traj}}}}
\frac{\Lambda(T)\sqrt{p_{\mathrm{succ}}(T)}}{|\Tr[\rho_{\mathrm{NH}}^{(\mathrm{impl})}(T)]|},}
\label{eq:rmse_bound_psucc_Lambda}
\end{align}
\hk{For the case of the equal error budget in Eq.~\eqref{eq:em_sufficient_split}, requirement of $\epsilon_{\mathrm{stat}}(T)\le \varepsilon/2$ is
satisfied if we take the sample-size as}
\begin{align}
\hk{N_{\mathrm{traj}}}
&\hk{\gtrsim
\left(
\frac{\bigl(\|O\|+|\langle O\rangle_{\mathrm{impl}}(T)|\bigr)\Lambda(T)\sqrt{p_{\mathrm{succ}}(T)}}
{(\varepsilon/2)|\Tr[\rho_{\mathrm{NH}}^{(\mathrm{impl})}(T)]|}
\right)^2 ,}
\label{eq:Ntraj_requirement_end_matter}
\end{align}
where 
we use Eq.~\eqref{eq:rmse_bound_psucc_Lambda}.
\hk{Thus the sQEM \hk{quasi-probability} normalization contributes the factor $\Lambda(T)^2=\exp(2\kappa T)$ to the required number of trajectories, in addition to the non-Hermitian normalization and success-probability factors.}

\clearpage
\newpage

\setcounter{page}{1} \renewcommand{\thepage}{Supplemental Material -- \arabic{page}/\hk{19}}

\title{Supplemental Material for\\[4pt] ``Quantum Error Mitigation Simulates General Non-Hermitian Dynamics''} \renewcommand{\thesection}{SM\arabic{section}} \renewcommand{\theequation}{SM\arabic{equation}} \renewcommand{\thefigure}{SM\arabic{figure}} \renewcommand{\thetable}{SM\arabic{table}} \setcounter{section}{0} \setcounter{equation}{0} \setcounter{figure}{0} \setcounter{table}{0}

\setcounter{secnumdepth}{3}

\maketitle
\onecolumngrid \vspace{-2.5em}
\section{GKSL construction via Gaussian noise averaging}\label{app:gaussian-average}
In this section, we briefly review the stochastic-Hamiltonian noise-averaging construction of a GKSL generator proposed in Ref.~\cite{Chenu2017}, and make the discrete-time implementation used in our setup explicit.
We set $\hbar=1$ throughout.

\subsection{Averaged one-step map}
We consider the multi-channel noise Hamiltonian
\begin{align}
    H_{\mathrm{impl}}(t)=H_{\mathrm{Re}}+\sum_{\ell\in E} f_{\ell}(t)H_{\mathrm{I},\ell},
\end{align}
where $\{f_{\ell}(t)\}_{\ell\in E}$ are independent Gaussian white noises with $\mathbb{E}_f[f_{\ell}(t)]=0$ and $\mathbb{E}_f[f_{\ell}(t)f_{\ell'}(t')]=2\gamma_{\ell} \delta_{\ell\ell'} \delta(t-t')$.

Let $k=0,1,\ldots,N-1$ label discrete time steps with $t_k\coloneqq k\Delta t$ and $N\coloneqq T/\Delta t$, and let $\ell\in E$ label the \hk{local supports}.
We partition the total time into steps of size $\Delta t$ and define the Gaussian increments
\begin{align}
\xi_{k,\ell}\coloneqq \int_{t_k}^{t_k+\Delta t} f_\ell(s) ds,
\qquad t_k=k\Delta t,
\end{align}
so that $\xi_{k,\ell}\sim\mathcal N(0,2\gamma_\ell \Delta t)$, i.e., $\xi_{k,\ell}$ is Gaussian with mean $0$ and variance $2\gamma_\ell \Delta t$, and the set $\boldsymbol{\xi}_k =\{\xi_{k,\ell}\}_{\ell\in E}$ is independent across $\ell$ and also independent across $k$.

For concreteness, we use the Lie-Trotter formula, so that the single-step unitary is
\begin{align}
  U_k^{\mathrm{prod}} \coloneqq e^{-i H_{\mathrm{Re}} \Delta t} V_k^{\mathrm{prod}},
  \qquad
  V_k^{\mathrm{prod}} \coloneqq \prod_{\ell\in E} e^{-i \xi_{k,\ell} H_{\mathrm{I},\ell}},
  \label{eq:app-Uk-multi}
\end{align}
where $\prod_{\ell\in E}$ denotes an ordered product over a fixed order $E=\{\ell_1,\ldots,\ell_{|E|}\}$ chosen once and for all.
Hence, the state update becomes
\begin{align}
\rho_{k+1}
=
U_k^{\mathrm{prod}}\rho_k U_k^{\mathrm{prod}\dagger}
=
e^{-i H_{\mathrm{Re}} \Delta t}
\Bigl(
V_k^{\mathrm{prod}} \rho_k V_k^{\mathrm{prod}\dagger}
\Bigr)
e^{+i H_{\mathrm{Re}} \Delta t}.
\label{eq:app-time-evolved}
\end{align}
For notational convenience, we also define the past and partial increment histories $\boldsymbol{\xi}_{<k}\coloneqq \{\boldsymbol{\xi}_j\}_{j=0}^{k-1}$ and $\boldsymbol{\xi}_{\le k}\coloneqq \{\boldsymbol{\xi}_j\}_{j=0}^{k}$.
Let $\bar{\rho}_k \coloneqq \mathbb{E}_{\boldsymbol{\xi}_{<k}}[\rho_k]$.
Because $\rho_k$ depends only on the past increments $\boldsymbol{\xi}_{<k}$ and the Gaussian increments are independent, the collection $\boldsymbol{\xi}_k$ is independent of $\rho_k$.

Define the one-step noise-averaging channel at step $k$ by
\begin{align}
\Phi^{\mathrm{prod}}_{\Delta t,k}[\rho]
\coloneqq
\mathbb{E}_{\boldsymbol{\xi}_k}
\left[
V_k^{\mathrm{prod}} \rho V_k^{\mathrm{prod}\dagger}
\right].
\label{eq:app-Phi-def}
\end{align}
Therefore,
\begin{align}
\bar{\rho}_{k+1}
&=
\mathbb{E}_{\boldsymbol{\xi}_{\le k}}\left[U_k^{\mathrm{prod}}\rho_k U_k^{\mathrm{prod}\dagger}\right]\notag\\
&=
e^{-iH_{\mathrm{Re}}\Delta t}
\mathbb{E}_{\boldsymbol{\xi}_{< k}}
\left[
\mathbb{E}_{\boldsymbol{\xi}_k}\left[V_k^{\mathrm{prod}} \rho_k V_k^{\mathrm{prod}\dagger}\right]
\right]
e^{+iH_{\mathrm{Re}}\Delta t}\notag\\
&=
e^{-iH_{\mathrm{Re}}\Delta t}
\mathbb{E}_{\boldsymbol{\xi}_{< k}}
\left[
\Phi^{\mathrm{prod}}_{\Delta t,k}[\rho_k]
\right]
e^{+iH_{\mathrm{Re}}\Delta t}\notag\\
&=
e^{-iH_{\mathrm{Re}}\Delta t}
\Phi^{\mathrm{prod}}_{\Delta t,k}[\bar{\rho}_k]
e^{+iH_{\mathrm{Re}}\Delta t}.
\label{eq:app-avg-step-map}
\end{align}
Since the increment law $\xi_{k,\ell}\sim\mathcal{N}(0,2\gamma_\ell\Delta t)$ does not depend on $k$, the channel $\Phi^{\mathrm{prod}}_{\Delta t,k}$ is the same for all $k$.
Hence, we drop the step index and write $\Phi^{\mathrm{prod}}_{\Delta t,k}\eqqcolon \Phi^{\mathrm{prod}}_{\Delta t}$.

\subsection{Exact evaluation of \texorpdfstring{$\Phi^{\mathrm{prod}}_{\Delta t}$}{Phi\_Delta} using Gaussianity}

We introduce the adjoint superoperator based on the commutator as
\begin{align}
\mathrm{ad}_{H_{\mathrm{I},\ell}}[\rho]\coloneqq [H_{\mathrm{I},\ell},\rho].
\end{align}
Then, each unitary transformation can be written as
\begin{align}
e^{-i\xi_\ell H_{\mathrm{I},\ell}} \rho e^{+i\xi_\ell H_{\mathrm{I},\ell}}
=
e^{-i\xi_\ell \mathrm{ad}_{H_{\mathrm{I},\ell}}}[\rho],
\end{align}
and hence
\begin{align}
V_k^{\mathrm{prod}} \rho V_k^{\mathrm{prod}\dagger}
=
\left(\prod_{\ell} e^{-i\xi_\ell \mathrm{ad}_{H_{\mathrm{I},\ell}}}\right)[\rho],
\label{eq:app-V-ad}
\end{align}
where the product is ordered consistently with $V_k^{\mathrm{prod}}$.

Since the variables $\boldsymbol{\xi}_k$ are independent, the average Eq.~\eqref{eq:app-Phi-def} can be performed sequentially.
For each channel $\ell$, define the single-channel Gaussian-averaging map
\begin{align}
\Phi^{(\ell)}_{\Delta t}
\coloneqq
\int_{-\infty}^{\infty} d\xi
\frac{1}{\sqrt{4\pi\gamma_\ell\Delta t}}
\exp\left(-\frac{\xi^2}{4\gamma_\ell\Delta t}\right)
e^{-i\xi \mathrm{ad}_{H_{\mathrm{I},\ell}}}.
\label{eq:app-Phi-ell-int}
\end{align}
Using the identity concerning Fourier transform (for $a>0$),
\begin{align}
\int_{-\infty}^{\infty} d\xi
\frac{1}{\sqrt{4\pi a}}
\exp\left(-\frac{\xi^2}{4a}\right)e^{-i\xi x}
=
\exp(-a x^2),
\label{eq:app-gaussian-fourier}
\end{align}
and applying it with $x=\mathrm{ad}_{H_{\mathrm{I},\ell}}$ and $a=\gamma_\ell\Delta t$, we obtain
\begin{align}
\Phi^{(\ell)}_{\Delta t}
=
\exp\left(-\gamma_\ell\Delta t \mathrm{ad}_{H_{\mathrm{I},\ell}}^{ 2}\right),
\qquad
\mathrm{ad}_{H_{\mathrm{I},\ell}}^{ 2}[\rho]=[H_{\mathrm{I},\ell},[H_{\mathrm{I},\ell},\rho]].
\label{eq:app-Phi-ell-closed}
\end{align}

Consequently, fixing the same ordering $E=\{\ell_1,\ldots,\ell_{|E|}\}$ as in $V_k^{\mathrm{prod}}$, we have the exact representation
\begin{align}
\Phi^{\mathrm{prod}}_{\Delta t}
=
\Phi^{(\ell_{|E|})}_{\Delta t}\circ \cdots \circ \Phi^{(\ell_1)}_{\Delta t}.
\label{eq:app-Phi-multi-closed}
\end{align}
Substituting Eq.~\eqref{eq:app-Phi-multi-closed} into Eq.~\eqref{eq:app-avg-step-map} gives the closed form of the averaged one-step map.

\subsection{Continuous-time limit and GKSL form}

To obtain a differential equation, we expand the obtained maps to first order in $\Delta t$:
\begin{align}
e^{-iH_{\mathrm{Re}}\Delta t}\rho e^{+iH_{\mathrm{Re}}\Delta t}
&= \rho - i\Delta t [H_{\mathrm{Re}},\rho] + \mathcal{O}(\Delta t^2),\\
\Phi^{(\ell)}_{\Delta t}(\rho)
&=
\rho - \gamma_\ell\Delta t [H_{\mathrm{I},\ell},[H_{\mathrm{I},\ell},\rho]] + \mathcal{O}(\Delta t^2).
\end{align}
Since Eq.~\eqref{eq:app-Phi-multi-closed} is a composition over $\ell$, to first order we obtain
\begin{align}
\Phi^{\mathrm{prod}}_{\Delta t}[\rho]
=
\rho
-\Delta t\sum_{\ell \in E}\gamma_\ell [H_{\mathrm{I},\ell},[H_{\mathrm{I},\ell},\rho]]
+\mathcal{O}(\Delta t^2).
\end{align}
Using this in Eq.~\eqref{eq:app-avg-step-map} and taking $\Delta t\to 0$ yields
\begin{align}
\frac{d\bar{\rho}}{dt}
=
-i[H_{\mathrm{Re}},\bar{\rho}]
-\sum_{\ell \in E}\gamma_\ell [H_{\mathrm{I},\ell},[H_{\mathrm{I},\ell},\bar{\rho}]].
\label{eq:app-master-double-comm-multi}
\end{align}
Finally, using the identity
\begin{align}
-[H_{\mathrm{I},\ell},[H_{\mathrm{I},\ell},\rho]]
=
2H_{\mathrm{I},\ell}\rho H_{\mathrm{I},\ell}-\{H_{\mathrm{I},\ell}^2,\rho\},
\end{align}
we obtain the GKSL form
\begin{align}
\frac{d\bar{\rho}}{dt}
=
-i[H_{\mathrm{Re}},\bar{\rho}]
+\sum_{\ell \in E}\gamma_\ell\Bigl(2H_{\mathrm{I},\ell}\bar{\rho}H_{\mathrm{I},\ell}-\{H_{\mathrm{I},\ell}^2,\bar{\rho}\}\Bigr),
\end{align}
which coincides with Eq.~\eqref{eq:master-lindblad-hi} of the main text.

\section{Stochastic QEM implementation and sampling overhead}
\label{app:sqem_impl}
In this section, we review the stochastic QEM (sQEM) and the associated sampling overhead, following the formulation of Ref.~\cite{Sun2021}, and adapt it to our \hk{local-support} setting.

\subsection{Quasi-probability decomposition form}
We implement the cancellation map $\mathcal{E}_{\mathcal{C}}(\delta t)$ via a quasi-probability decomposition over $\{\mathcal{B}_{\ell j}\}$ for sufficiently small $\delta t$:
\begin{align}
\mathcal{E}_{\mathcal{C}}(\delta t)
=
\bigl(1+q_{0}\delta t\bigr)\mathcal{I}
+\sum_{\ell\in E}\sum_{j\ge 1} q_{\ell j} \delta t \mathcal{B}_{\ell j}
+\mathcal{O}(\delta t^{2}).
\label{eq:EC_near_identity_local}
\end{align}
Following Ref.~\cite{Sun2021}, we rewrite the first-order decomposition in a quasi-probability form
\begin{align}
\mathcal{E}_{\mathcal{C}}(\delta t)
=
c(\delta t)\left(
p_{0} \mathcal{B}_{0}
+
\sum_{\ell\in E}\sum_{j\ge 1}\alpha_{\ell j} \tilde p_{\ell j} \delta t \mathcal{B}_{\ell j}
\right)
+\mathcal{O}(\delta t^{2}),
\label{eq:EC_qpd_local}
\end{align}
where the parameters are chosen as
\begin{align}
c(\delta t)
&\coloneqq 1+\Bigl(q_{0}+\sum_{\ell\in E}\sum_{j\ge 1}|q_{\ell j}|\Bigr)\delta t, \notag\\
p_{0}
&\coloneqq \frac{1+q_{0}\delta t}{c(\delta t)}
= 1-\sum_{\ell\in E}\sum_{j\ge 1}\tilde p_{\ell j} \delta t, \notag\\
\tilde p_{\ell j}
&\coloneqq \frac{p_{\ell j}}{\delta t}
= \frac{|q_{\ell j}|}{c(\delta t)},
\qquad
\alpha_{\ell j}\coloneqq \sgn(q_{\ell j})\in\{\pm 1\}.
\label{eq:EC_qpd_params_local}
\end{align}
Thus $p_0=1-\mathcal{O}(\delta t)$ and $\tilde p_{\ell j}=\mathcal{O}(1)$ for $j\ge 1$.

For $T=n \delta t$, iterating Eq.~\eqref{eq:cancellation-condition} of the main text and substituting Eq.~\eqref{eq:EC_qpd_local} at each step yield a discrete-time expansion at the channel level.
Let $\vec{j}=(j_0,\ldots,j_{n-1})$ denote a sequence of \hk{basis operation} indices with $j_k=0$ for the identity channel and $j_k=(\ell,j)$ for a non-identity basis operation channel $\mathcal{B}_{\ell j}$ with $j\ge 1$.
Define $p_{(\ell,j)}\coloneqq \tilde p_{\ell j} \delta t$ and $\alpha_{(\ell,j)}\coloneqq \alpha_{\ell j}$, together with $p_{0}$ and $\alpha_{0}\coloneqq 1$.
We write $p_{\vec{j}}\coloneqq \prod_{k=0}^{n-1}p_{j_k}$ and $\alpha_{\vec{j}}\coloneqq \prod_{k=0}^{n-1}\alpha_{j_k}$.
Then
\begin{align}
\mathcal{E}_{\mathrm{NH}}(\delta t)^{n}
=
c(\delta t)^n\sum_{\vec{j}}
\alpha_{\vec{j}} p_{\vec{j}}
\mathcal{B}_{j_{n-1}}\circ \mathcal{E}_{\mathrm{GKSL}}(\delta t)\circ \cdots \circ
\mathcal{B}_{j_{0}}\circ \mathcal{E}_{\mathrm{GKSL}}(\delta t)
+\mathcal{O}(T\delta t),
\label{eq:qpd_channel_expansion_appB}
\end{align}
which implies
\begin{align}
\rho_{\mathrm{NH}}(T)
=
c(\delta t)^n\sum_{\vec{j}}
\alpha_{\vec{j}} p_{\vec{j}}
\rho_{\vec{j}}(T)
+\mathcal{O}(T\delta t),
\label{eq:qpd_state_expansion_appB}
\end{align}
where $\rho_{\mathrm{NH}}(T)\coloneqq \mathcal{E}_{\mathrm{NH}}(\delta t)^{n}[\rho(0)]$ and
\begin{align}
\rho_{\vec{j}}(T)\coloneqq
\Bigl(\mathcal{B}_{j_{n-1}}\circ \mathcal{E}_{\mathrm{GKSL}}(\delta t)\circ \cdots \circ
\mathcal{B}_{j_{0}}\circ \mathcal{E}_{\mathrm{GKSL}}(\delta t)\Bigr)[\rho(0)].
\label{eq:rho_j_sequence}
\end{align}
Consequently, for any observable $O$,
\begin{align}
\Tr\left[O \rho_{\mathrm{NH}}(T)\right]
=
c(\delta t)^n\sum_{\vec{j}} \alpha_{\vec{j}} p_{\vec{j}}
\Tr\left[O \rho_{\vec{j}}(T)\right]
+\mathcal{O}(T\delta t).
\label{eq:qpd_obs_expansion_appB}
\end{align}
The remainder $\mathcal{O}(T\delta t)$ is a discretization artifact of the above discrete-time derivation.
In the next subsection, we take the limit $\delta t\to0^+$ and realize the same scheme directly in continuous time as a jump process.

\subsection{Stochastic implementation in continuous time}
We now describe the Monte Carlo realization of the continuous-time limit $\delta t\to 0^+$~\cite{Sun2021}.
We treat only the non-identity terms $(\ell,j)$ with $j\ge 1$ as jump events.
We maintain a trajectory-parity variable $\alpha\in\{\pm 1\}$, initialized as $\alpha=1$, which accumulates the jump parities $\alpha_{\ell j}$ along the trajectory; in the discrete-time representation this corresponds to $\alpha_{\vec{j}}=\prod_{k}\alpha_{j_k}$ for the sampled sequence $\vec{j}$.

We define the total jump rate
\begin{align}
\Gamma_{\mathrm{tot}}\coloneqq \sum_{\ell\in E}\sum_{j\ge 1}\tilde p_{\ell j},
\label{eq:Gamma_tot_local}
\end{align}
where $\tilde p_{\ell j}$ are the corresponding event rates.

For time-independent rates, the no-jump survival probability $Q(t)$ and the jump-time density $P(t)$ in the interval $[t,t+dt]$ are
\begin{align}
Q(t)&=\exp(-\Gamma_{\mathrm{tot}}t),
\label{eq:survival_Q}\\
P(t) dt&=\Gamma_{\mathrm{tot}}e^{-\Gamma_{\mathrm{tot}}t} dt.
\label{eq:survival_density_local}
\end{align}
To sample the next jump time, generate $u\in[0,1]$ uniformly and solve $u=Q(t_{\mathrm{jp}})$, which gives the jump time
\begin{align}
t_{\mathrm{jp}}=-\ln u/\Gamma_{\mathrm{tot}}.
\label{eq:jump_time_local}
\end{align}
Conditioned on a jump, we choose the mark $(\ell,j)$ with probability
\begin{align}
\Pr[(\ell,j)] = \tilde p_{\ell j}/\Gamma_{\mathrm{tot}}.
\label{eq:mark_dist_local}
\end{align}
We then apply the corresponding basis operation channel $\mathcal{B}_{\ell j}$ and update the trajectory parity as
\begin{align}
\alpha \leftarrow \alpha_{\ell j} \alpha.
\label{eq:alpha_update_local}
\end{align}

Between sQEM events, we implement the unitary evolution generated by \hk{the} stochastic Hamiltonian for the sampled duration.
\hk{Averaging this evolution over the Gaussian-noise realizations reproduces the corresponding GKSL evolution.}
Repeating this procedure until the total evolution time reaches $T$ yields one \hk{trajectory}.
This construction is the $\delta t\to 0^+$ limit of the discrete-time quasi-probability scheme, where at each step a non-identity basis operation occurs with probability $\tilde p_{\ell j}\delta t$.
In this limit the jump process becomes a Poisson process with survival probability Eq.~\eqref{eq:survival_Q}, and averaging over trajectories reproduces the same error-mitigated channel as the continuous-time formulation in Ref.~\cite{Sun2021}.

\section{\hk{Statistical error analysis}}
\label{app:statistical_error}
\hk{We quantify the statistical error of the trajectory estimator, including both Gaussian and sQEM randomness.
Let $N_{\mathrm{traj}}$ be the number of i.i.d.\ trajectories used to estimate the normalized expectation value at a fixed time $T$.}

The quasi-probability decomposition introduces an overall normalization factor that accumulates with time.
Taking the continuous-time limit $\delta t\to 0^+$ with $n=T/\delta t$, we define the overhead factor
\begin{align}
\Lambda(T)\coloneqq \lim_{\delta t\to 0^+} c(\delta t)^{T/\delta t}.
\label{eq:Lambda_def_app}
\end{align}
Using Eq.~\eqref{eq:EC_qpd_params_local}, we obtain
\begin{align}
\Lambda(T)=\exp(\kappa T),
\qquad
\kappa\coloneqq q_{0}+\sum_{\ell\in E}\sum_{j\ge 1}|q_{\ell j}|.
\label{eq:Lambda_exp_kappa_local}
\end{align}

If all basis operation channels are trace preserving, then for each trajectory $s=1,\ldots,\hk{N_{\mathrm{traj}}}$ we measure the observable $O$ at time $T$ and multiply the outcome by the \hk{trajectory} parity $\alpha\in\{\pm1\}$.
Since $|O|\le \|O\|$, the magnitude of each signed contribution is bounded by $\Lambda(T)\|O\|$, and the RMSE of the sample average satisfies
\begin{align}
\mathrm{RMSE}(T)\ \le\ \frac{\Lambda(T)\|O\|}{\sqrt{\hk{N_{\mathrm{traj}}}}}.
\label{eq:Ns_rmse_bound}
\end{align}

If some basis operation channels are trace decreasing, as is the case for the projective-measurement basis operations $[\pi]$ in Table~\ref{tab:bases} in the End Matter, then a sampled trajectory can be unnormalized.
In this case, the \hk{normalized expectation value obtained from the Trotterized implementation with infinite sampling} is
\begin{align}
\langle O\rangle_{\hk{\mathrm{impl}}}(T)&=
\frac{\Tr\left[O \rho_{\mathrm{NH}}^{\hk{(\mathrm{impl})}}(T)\right]}{\Tr\left[\rho_{\mathrm{NH}}^{\hk{(\mathrm{impl})}}(T)\right]}=
\frac{\Lambda(T)\sum_{\vec{j}} \alpha_{\vec{j}} p_{\vec{j}}
\Tr\left[O\hk{\rho_{\vec{j}}(T)}\right]}{\Lambda(T)\sum_{\vec{j}} \alpha_{\vec{j}} p_{\vec{j}}
\Tr\left[\hk{\rho_{\vec{j}}(T)}\right]},
\label{eq:ONH_normalized_def_appB}
\end{align}
\hk{Here, $\rho_{\vec{j}}(T)$ is defined in Eq.~\eqref{eq:rho_j_sequence}.}
Accordingly, both the numerator and the denominator are estimated from the same trajectory sample.

Along a given trajectory $s\in \hk{\{1,\dots,N_{\mathrm{traj}}\}}$, suppose that measurement basis operations occur at random event indices $w=1,\ldots,n_{\mathrm{meas}}(s)$.
For each such measurement basis operation event $w$, we define an indicator $\eta^{(w)}_s \in \{0,1\}$, where $\eta^{(w)}_s=1$ if the required measurement outcome associated with that basis operation is obtained, and $\eta^{(w)}_s=0$ otherwise.
Importantly, even when $\eta^{(w)}_s=0$ occurs, the trial is still counted in the total number of Monte Carlo samples, i.e., we do not remove the shot from the dataset; instead, we record a zero contribution as described below.

The overall acceptance indicator for \hk{each trajectory} is then $D_s(T)\coloneqq \prod_{w=1}^{n_{\mathrm{meas}}(s)} \eta^{(w)}_s \in \{0,1\}$, so that $D_s(T)=1$ if and only if all required measurement-basis-operation outcomes along the trajectory are obtained, and $D_s(T)=0$ otherwise.

We denote by $O_s(T)\in[-\|O\|,\|O\|]$ the outcome of the final measurement of $O$ at time $T$ when the trajectory is accepted.
If the trajectory is rejected, we set $O_s(T)\coloneqq 0$ when $D_s(T)=0$.
We \hk{denote the} per-trajectory parity \hk{by} $\alpha_s\in\{\pm1\}$.
The \hk{ratio} estimator is
\begin{align}
    \langle O\rangle_{\hk{N_{\mathrm{traj}}}}(T)=\frac{\sum_{s=1}^{\hk{N_{\mathrm{traj}}}} \alpha_s O_s(T)}{\sum_{s=1}^{\hk{N_{\mathrm{traj}}}} \alpha_s D_s(T)}.
    \label{eq:estimator_appB}
\end{align}
\hk{If the denominator vanishes, we set $\langle O\rangle_{N_{\mathrm{traj}}}(T)=0$.}

Next we consider the RMSE of the normalized estimator in the presence of trace-decreasing basis operation channels~\cite{Vikstl2024,vanKempen2000}.

\begin{theorem}[RMSE of the normalized \hk{trajectory} ratio estimator in the presence of trace-decreasing basis operation channels]
\label{thm:rmse_ratio_sqem_full}
Fix a final time $T$ \hk{and a time step $\Delta t$}.
For each trajectory $s=1,\dots,\hk{N_{\mathrm{traj}}}$, define \hk{$X_s \coloneqq \alpha_s O_s(T)$ and $Y_s \coloneqq \alpha_s D_s(T)$}, where $\alpha_s\in\{\pm1\}$, $D_s(T)\in\{0,1\}$, and $O_s(T)\in[-\|O\|,\|O\|]$ with the convention $O_s(T)=0$ whenever $D_s(T)=0$.
Assume $\{(X_s,Y_s)\}_{s=1}^{\hk{N_{\mathrm{traj}}}}$ are i.i.d.\ \hk{under the trajectory distribution, including both Gaussian and sQEM randomness, and that $\mathbb{E}_{\mathrm{all}}[Y_s]\neq0$}.
\hk{The quasi-probability reconstruction in Eq.~\eqref{eq:trajectory_state_reconstruction} gives}
\begin{align}
\hk{\Lambda(T)\mathbb{E}_{\mathrm{all}}[X_s]}
&\hk{=\Tr\left[O\rho_{\mathrm{NH}}^{(\mathrm{impl})}(T)\right],}\\
\hk{\Lambda(T)\mathbb{E}_{\mathrm{all}}[Y_s]}
&\hk{=\Tr\left[\rho_{\mathrm{NH}}^{(\mathrm{impl})}(T)\right],}
\end{align}
\hk{and hence
$\langle O\rangle_{\mathrm{impl}}(T)=\mathbb{E}_{\mathrm{all}}[X_s]/\mathbb{E}_{\mathrm{all}}[Y_s]$.}
Consider the normalized estimator
\begin{align}
\hk{\langle O\rangle_{N_{\mathrm{traj}}}(T)}
\coloneqq
\frac{\sum_{s=1}^{\hk{N_{\mathrm{traj}}}}X_s}
{\sum_{s=1}^{\hk{N_{\mathrm{traj}}}}Y_s}
=\frac{\bar X}{\bar Y},
\qquad
\hk{\bar X\coloneqq \frac{1}{N_{\mathrm{traj}}}\sum_{s=1}^{N_{\mathrm{traj}}}X_s,\quad
\bar Y\coloneqq \frac{1}{N_{\mathrm{traj}}}\sum_{s=1}^{N_{\mathrm{traj}}}Y_s.}
\end{align}
Then, as $\hk{N_{\mathrm{traj}}}\to\infty$,
\begin{align}
\sqrt{\hk{N_{\mathrm{traj}}}}
\hk{\bigl(\langle O\rangle_{N_{\mathrm{traj}}}(T)-\langle O\rangle_{\mathrm{impl}}(T)\bigr)}
\ \xrightarrow[]{d}\
\mathcal{N}\hk{\left(
0,\frac{\mathrm{Var}_{\mathrm{all}}\!\left(
X_s-\langle O\rangle_{\mathrm{impl}}(T)Y_s
\right)}
{\mathbb{E}_{\mathrm{all}}[Y_s]^2}
\right)\hk{.}}
\label{eq:thm_asymp_normal}
\end{align}
\hk{In the large-sample regime where $\bar Y$ remains bounded away from zero, the statistical error is asymptotically approximated as}
\begin{align}
\hk{\epsilon_{\mathrm{stat}}(T)}
\hk{\coloneqq
\sqrt{\mathbb{E}_{\mathrm{all}}\!\left[
\bigl(\langle O\rangle_{N_{\mathrm{traj}}}(T)-\langle O\rangle_{\mathrm{impl}}(T)\bigr)^2
\right]}}
\hk{\approx}
\hk{
\frac{
\sqrt{\mathrm{Var}_{\mathrm{all}}\!\left(
X_s-\langle O\rangle_{\mathrm{impl}}(T)Y_s
\right)}
}{
\sqrt{N_{\mathrm{traj}}}\,|\mathbb{E}_{\mathrm{all}}[Y_s]|
}.}
\label{eq:thm_rmse_asymp}
\end{align}
\hk{Similarly, the second-order expansion gives the following asymptotic approximation for the finite-sample bias:}
\begin{align}
\hk{\mathrm{Bias}_{N_{\mathrm{traj}}}(T)}
\hk{\coloneqq
\mathbb{E}_{\mathrm{all}}\!\left[\langle O\rangle_{N_{\mathrm{traj}}}(T)\right]
-\langle O\rangle_{\mathrm{impl}}(T)}
\hk{\approx}
\hk{\frac{1}{N_{\mathrm{traj}}}
\frac{
\langle O\rangle_{\mathrm{impl}}(T)\mathrm{Var}_{\mathrm{all}}(Y_s)
-\mathrm{Cov}_{\mathrm{all}}(X_s,Y_s)
}{
\mathbb{E}_{\mathrm{all}}[Y_s]^2
}.}
\label{eq:thm_bias_second_order}
\end{align}
and using only boundedness and the convention $O_s(T)=0$ when $D_s(T)=0$,
\begin{align}
\hk{\epsilon_{\mathrm{stat}}(T)}
\ \lesssim\
\frac{\hk{\|O\|+|\langle O\rangle_{\mathrm{impl}}(T)|}}{\sqrt{\hk{N_{\mathrm{traj}}}}}
\sqrt{\frac{\hk{\mathbb{E}_{\mathrm{all}}}[D_s(T)]}{\hk{\mathbb{E}_{\mathrm{all}}}[\alpha_s D_s(T)]^2}}.
\label{eq:thm_rmse_bound}
\end{align}
\end{theorem}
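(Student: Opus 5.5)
The plan is a standard delta-method argument for a ratio of sample means, combined with the quasi-probability identities already stated in the theorem.

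\emph{Step 1: moments and target value.} Write $\mu_X\coloneqq\mathbb{E}_{\mathrm{all}}[X_s]$, $\mu_Y\coloneqq\mathbb{E}_{\mathrm{all}}[Y_s]\neq0$, and $r\coloneqq\langle O\rangle_{\mathrm{impl}}(T)$. The quasi-probability reconstruction (the trajectory-level version of Eq.~\eqref{eq:qpd_obs_expansion_appB}, with $\Lambda(T)$ the limit of $c(\delta t)^n$) gives $\Lambda(T)\mu_X=\Tr[O\rho^{(\mathrm{impl})}_{\mathrm{NH}}(T)]$ and $\Lambda(T)\mu_Y=\Tr[\rho^{(\mathrm{impl})}_{\mathrm{NH}}(T)]$. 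The factor $\Lambda(T)$ cancels in the ratio, so $r=\mu_X/\mu_Y$. Since $|X_s|\le\|O\|$ and $|Y_s|\le1$, all moments are finite.

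\emph{Step 2: central limit theorem and linearization.} The multivariate CLT gives $\sqrt{N_{\mathrm{traj}}}(\bar X-\mu_X,\bar Y-\mu_Y)\to\mathcal N(0,\Sigma)$. I then use the exact identity
\begin{align}
\frac{\bar X}{\bar Y}-r=\frac{\bar X-r\bar Y}{\bar Y}=\frac{\overline{Z}}{\bar Y},\qquad Z_s\coloneqq X_s-rY_s,
\end{align}
where $\mathbb{E}_{\mathrm{all}}[Z_s]=0$. By the CLT, $\sqrt{N_{\mathrm{traj}}}\,\overline Z\to\mathcal N(0,\mathrm{Var}_{\mathrm{all}}(Z_s))$. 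By the law of large numbers, $\bar Y\to\mu_Y$ almost surely. Slutsky's lemma then yields Eq.~\eqref{eq:thm_asymp_normal}. The convention of setting the estimator to $0$ when $\bar Y=0$ does not matter here, because that event has vanishing probability.

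\emph{Step 3: RMSE and bias.} Expand $1/\bar Y$ around $\mu_Y$ to second order:
\begin{align}
\frac{\overline Z}{\bar Y}=\frac{\overline Z}{\mu_Y}-\frac{\overline Z(\bar Y-\mu_Y)}{\mu_Y^2}+\cdots .
\end{align}
The leading term has mean zero and second moment $\mathrm{Var}(Z_s)/(N_{\mathrm{traj}}\mu_Y^2)$, which gives Eq.~\eqref{eq:thm_rmse_asymp}.

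For the bias, take the expectation of the second term. Since $\mathbb{E}[\overline Z(\bar Y-\mu_Y)]=\mathrm{Cov}(Z_s,Y_s)/N_{\mathrm{traj}}$ and $\mathrm{Cov}(Z_s,Y_s)=\mathrm{Cov}(X_s,Y_s)-r\,\mathrm{Var}(Y_s)$, this reproduces Eq.~\eqref{eq:thm_bias_second_order}. The squared bias is $O(N_{\mathrm{traj}}^{-2})$, so it is subleading in the RMSE.

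\emph{Main obstacle: rigor of the expansion.} Convergence in distribution does not by itself imply convergence of second moments, and $1/\bar Y$ can blow up. This is why the statement is phrased as ``$\approx$'' in the regime where $\bar Y$ is bounded away from zero. I would make it rigorous by splitting on the event $A=\{|\bar Y-\mu_Y|\le|\mu_Y|/2\}$.
\begin{itemize}
\item On $A$, the Taylor remainder is controlled by $|\bar Y-\mu_Y|^2\cdot 8/|\mu_Y|^3$ times bounded quantities.
\item On $A^c$, the estimator is bounded: it is either $0$ or at most $\|O\|\,\sum_s D_s/|\sum_s\alpha_s D_s|$. That bound is not uniform, but Hoeffding's inequality gives $\Pr(A^c)\le 2e^{-N_{\mathrm{traj}}\mu_Y^2/8}$. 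Absorbing this exponentially small contribution is exactly where I would invoke the ``bounded away from zero'' assumption rather than attempt a full proof.
\end{itemize}

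\emph{Step 4: final bound.} Since $|X_s|\le\|O\|D_s$ and $|Y_s|=D_s$, we have $|Z_s|\le(\|O\|+|r|)D_s$. Because $D_s^2=D_s$, this gives
\begin{align}
\mathrm{Var}(Z_s)\le\mathbb{E}[Z_s^2]\le(\|O\|+|r|)^2\,\mathbb{E}_{\mathrm{all}}[D_s].
\end{align}
Substituting this and $\mu_Y=\mathbb{E}_{\mathrm{all}}[\alpha_sD_s]$ into Eq.~\eqref{eq:thm_rmse_asymp} yields Eq.~\eqref{eq:thm_rmse_bound}.
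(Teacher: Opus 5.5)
Your proposal is correct and follows essentially the paper's route. Both use a CLT plus linearization of the ratio for the asymptotic normality, a second-order expansion of $1/\bar Y$ for the bias, $\epsilon_{\mathrm{stat}}^2=\mathrm{Var}+\mathrm{Bias}^2$ with the squared bias subleading, and the same pointwise bound $(X_s-rY_s)^2\le(\|O\|+|r|)^2D_s$ via $D_s^2=D_s$.

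There are two small differences. Your exact rewriting $\bar X/\bar Y-r=\bar Z/\bar Y$ with Slutsky is a slightly tidier version of the paper's gradient computation $\bm{a}^{\mathsf T}\Sigma\bm{a}$. Your event-splitting remark also addresses the moment-convergence issue that the paper simply assumes away, and you can in fact close it completely. Off $\{\bar Y=0\}$ the denominator $\sum_s\alpha_sD_s$ is a nonzero integer, so the estimator is bounded by $\|O\|N_{\mathrm{traj}}$. The Hoeffding tail $2e^{-N_{\mathrm{traj}}\mu_Y^2/8}$ then makes the contribution of $A^c$ negligible without any extra assumption.
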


\begin{proof}
From definitions $|X_s|=|\alpha_s O_s(T)|\le \|O\|$ and $|Y_s|=|\alpha_s D_s(T)|=D_s(T)\le 1$, so $\hk{\mathbb{E}_{\mathrm{all}}}[X_s^2]\le\|O\|^2$ and $\hk{\mathbb{E}_{\mathrm{all}}}[Y_s^2]\le1$.
\hk{We consider the large-sample regime in which $\bar Y$ is nonzero and remains bounded away from zero.}
Introducing the mean fluctuations
\hk{$\Delta X\coloneqq\bar X-\mathbb{E}_{\mathrm{all}}[X_s]$ and
$\Delta Y\coloneqq\bar Y-\mathbb{E}_{\mathrm{all}}[Y_s]$}, the multivariate central limit theorem yields
\begin{align}
\sqrt{\hk{N_{\mathrm{traj}}}}
\begin{pmatrix}
    \Delta X \\
    \Delta Y
\end{pmatrix}
\xrightarrow[]{d}\ \mathcal{N}(0,\Sigma),
\qquad
\Sigma=\mathrm{Cov}\left(\begin{pmatrix}X_s\\ Y_s\end{pmatrix}\right)=
\begin{pmatrix}
\mathrm{Var}(X_s) & \mathrm{Cov}(X_s,Y_s)\\
\mathrm{Cov}(X_s,Y_s) & \mathrm{Var}(Y_s)
\end{pmatrix}.
\label{eq:clt_vector}
\end{align}

Define $g(x,y)\coloneqq x/y$, so that
\hk{$g(\bar X,\bar Y)=\langle O\rangle_{N_{\mathrm{traj}}}(T)$ and
$g(\mathbb{E}_{\mathrm{all}}[X_s],\mathbb{E}_{\mathrm{all}}[Y_s])
=\langle O\rangle_{\mathrm{impl}}(T)$}.
A first-order Taylor expansion around
\hk{$(\mathbb{E}_{\mathrm{all}}[X_s],\mathbb{E}_{\mathrm{all}}[Y_s])$} gives
\begin{align}
g(\bm{m}+\bm{h})\ \approx\ g(\bm{m})+\nabla g(\bm{m})^{\mathsf T}\bm{h},
\qquad
\bm{m} \coloneqq
\hk{\begin{pmatrix}
\mathbb{E}_{\mathrm{all}}[X_s]\\
\mathbb{E}_{\mathrm{all}}[Y_s]
\end{pmatrix}},
\quad
\bm{h} \coloneqq
\begin{pmatrix}\Delta X\\ \Delta Y\end{pmatrix}.
\end{align}
For $g(x,y)=x/y$, the gradient is
\begin{align}
\nabla g\hk{\left(
\mathbb{E}_{\mathrm{all}}[X_s],
\mathbb{E}_{\mathrm{all}}[Y_s]
\right)}
=
\hk{\begin{pmatrix}
1/\mathbb{E}_{\mathrm{all}}[Y_s]\\
-\langle O\rangle_{\mathrm{impl}}(T)/\mathbb{E}_{\mathrm{all}}[Y_s]
\end{pmatrix}}
\eqqcolon \bm{a}.
\label{eq:grad_g}
\end{align}
Therefore,
\begin{align}
\hk{\langle O\rangle_{N_{\mathrm{traj}}}(T)
-\langle O\rangle_{\mathrm{impl}}(T)}
=
\hk{g(\bar X,\bar Y)
-g\!\left(\mathbb{E}_{\mathrm{all}}[X_s],
\mathbb{E}_{\mathrm{all}}[Y_s]\right)}
=
g(\bm{m}+\bm{h})-g(\bm{m})
\approx
\bm{a}^{\mathsf T}
\begin{pmatrix}\Delta X\\ \Delta Y\end{pmatrix}.
\label{eq:taylor_first}
\end{align}
Multiplying Eq.~\eqref{eq:taylor_first} by $\sqrt{\hk{N_{\mathrm{traj}}}}$ gives
\begin{align}
\sqrt{\hk{N_{\mathrm{traj}}}}
\hk{\bigl(\langle O\rangle_{N_{\mathrm{traj}}}(T)-\langle O\rangle_{\mathrm{impl}}(T)\bigr)}
\approx \bm{a}^{\mathsf T}\sqrt{\hk{N_{\mathrm{traj}}}} \bm{h}.
\end{align}
By the multivariate central limit theorem Eq.~\eqref{eq:clt_vector}, we have
\hk{$\sqrt{N_{\mathrm{traj}}}\bm{h}
=\sqrt{N_{\mathrm{traj}}}\bigl((\bar X,\bar Y)^{\mathsf T}
-(\mathbb{E}_{\mathrm{all}}[X_s],\mathbb{E}_{\mathrm{all}}[Y_s])^{\mathsf T}\bigr)
\xrightarrow[]{d}\bm{W}$} with $\bm{W}\sim \mathcal{N}(0,\Sigma)$.
Since $\bm{a}$ is deterministic, the continuous mapping theorem implies $\bm{a}^{\mathsf T}\sqrt{\hk{N_{\mathrm{traj}}}} \bm{h}\xrightarrow[]{d} \bm{a}^{\mathsf T}\bm{W}$.
Moreover, a linear functional of a multivariate Gaussian is Gaussian, so $\bm{a}^{\mathsf T}\bm{W}$ is a one-dimensional normal variable with
\begin{align}
\hk{\mathbb{E}_{\mathrm{all}}}[\bm{a}^{\mathsf T}\bm{W}]&=\bm{a}^{\mathsf T}\hk{\mathbb{E}_{\mathrm{all}}}[\bm{W}]=0,\\
\mathrm{Var}(\bm{a}^{\mathsf T}\bm{W})&=\hk{\mathbb{E}_{\mathrm{all}}}\left[(\bm{a}^{\mathsf T}\bm{W})^2\right]
=\hk{\mathbb{E}_{\mathrm{all}}}\left[\bm{a}^{\mathsf T}\bm{W}\bm{W}^{\mathsf T}\bm{a}\right]
=\bm{a}^{\mathsf T}\hk{\mathbb{E}_{\mathrm{all}}}[\bm{W}\bm{W}^{\mathsf T}]\bm{a}
=\bm{a}^{\mathsf T}\Sigma \bm{a},
\label{eq:var_linear_gaussian}
\end{align}
which yields
\hk{$\sqrt{N_{\mathrm{traj}}}
\bigl(\langle O\rangle_{N_{\mathrm{traj}}}(T)-\langle O\rangle_{\mathrm{impl}}(T)\bigr)
\xrightarrow[]{d}\mathcal{N}(0,\bm{a}^{\mathsf T}\Sigma\bm{a})$}.
A direct expansion gives
\begin{align}
\bm{a}^\top \Sigma \bm{a}
&=
\hk{\begin{pmatrix}
\dfrac{1}{\mathbb{E}_{\mathrm{all}}[Y_s]} &
-\dfrac{\langle O\rangle_{\mathrm{impl}}(T)}
{\mathbb{E}_{\mathrm{all}}[Y_s]}
\end{pmatrix}}
\begin{pmatrix}
\mathrm{Var}(X_s) & \mathrm{Cov}(X_s,Y_s)\\
\mathrm{Cov}(X_s,Y_s) & \mathrm{Var}(Y_s)
\end{pmatrix}
\hk{\begin{pmatrix}
\dfrac{1}{\mathbb{E}_{\mathrm{all}}[Y_s]}\\
-\dfrac{\langle O\rangle_{\mathrm{impl}}(T)}
{\mathbb{E}_{\mathrm{all}}[Y_s]}
\end{pmatrix}}\notag\\
&=
\hk{\frac{
\mathrm{Var}(X_s)
+\langle O\rangle_{\mathrm{impl}}(T)^2\mathrm{Var}(Y_s)
-2\langle O\rangle_{\mathrm{impl}}(T)\mathrm{Cov}(X_s,Y_s)
}{
\mathbb{E}_{\mathrm{all}}[Y_s]^2
}}
=
\hk{\frac{
\mathrm{Var}\!\left(X_s-\langle O\rangle_{\mathrm{impl}}(T)Y_s\right)
}{
\mathbb{E}_{\mathrm{all}}[Y_s]^2
},}
\label{eq:asymp_var_calc}
\end{align}
which proves Eq.~\eqref{eq:thm_asymp_normal}.

We next derive the second-order bias expansion in Eq.~\eqref{eq:thm_bias_second_order}.
Using \hk{$\langle O\rangle_{N_{\mathrm{traj}}}(T)
=(\mathbb{E}_{\mathrm{all}}[X_s]+\Delta X)/
(\mathbb{E}_{\mathrm{all}}[Y_s]+\Delta Y)$} and
\hk{$\langle O\rangle_{\mathrm{impl}}(T)
=\mathbb{E}_{\mathrm{all}}[X_s]/\mathbb{E}_{\mathrm{all}}[Y_s]$}, we write
\begin{align}
\hk{\langle O\rangle_{N_{\mathrm{traj}}}(T)}
=
\hk{\frac{\mathbb{E}_{\mathrm{all}}[X_s]+\Delta X}
{\mathbb{E}_{\mathrm{all}}[Y_s]+\Delta Y}}
=
\hk{\frac{\mathbb{E}_{\mathrm{all}}[X_s]+\Delta X}
{\mathbb{E}_{\mathrm{all}}[Y_s]}}
\hk{\frac{1}{
1+\Delta Y/\mathbb{E}_{\mathrm{all}}[Y_s]}.}
\end{align}
We expand the denominator and obtain
\begin{align}
\hk{\langle O\rangle_{N_{\mathrm{traj}}}(T)}
&=
\hk{\frac{\mathbb{E}_{\mathrm{all}}[X_s]+\Delta X}
{\mathbb{E}_{\mathrm{all}}[Y_s]}
\left(
1-\frac{\Delta Y}{\mathbb{E}_{\mathrm{all}}[Y_s]}
+\frac{\Delta Y^2}{\mathbb{E}_{\mathrm{all}}[Y_s]^2}
\right)}
+O(\hk{N_{\mathrm{traj}}}^{-3/2})
\notag\\
&=
\hk{\frac{\mathbb{E}_{\mathrm{all}}[X_s]}
{\mathbb{E}_{\mathrm{all}}[Y_s]}
+\frac{\Delta X}{\mathbb{E}_{\mathrm{all}}[Y_s]}
-\frac{\mathbb{E}_{\mathrm{all}}[X_s]}
{\mathbb{E}_{\mathrm{all}}[Y_s]^2}\Delta Y
-\frac{\Delta X}{\mathbb{E}_{\mathrm{all}}[Y_s]^2}\Delta Y}
\notag\\
&\hk{\quad
+\frac{\mathbb{E}_{\mathrm{all}}[X_s]}
{\mathbb{E}_{\mathrm{all}}[Y_s]^3}\Delta Y^2
+\frac{\Delta X}{\mathbb{E}_{\mathrm{all}}[Y_s]^3}\Delta Y^2}
+O(\hk{N_{\mathrm{traj}}}^{-3/2})
\notag\\
&=
\hk{\langle O\rangle_{\mathrm{impl}}(T)
+\left(
\frac{\Delta X}{\mathbb{E}_{\mathrm{all}}[Y_s]}
-\frac{\langle O\rangle_{\mathrm{impl}}(T)}
{\mathbb{E}_{\mathrm{all}}[Y_s]}\Delta Y
\right)}
\notag\\
&\hk{\quad
+\left(
\frac{\langle O\rangle_{\mathrm{impl}}(T)}
{\mathbb{E}_{\mathrm{all}}[Y_s]^2}\Delta Y^2
-\frac{1}{\mathbb{E}_{\mathrm{all}}[Y_s]^2}\Delta X\Delta Y
\right)}
+O(\hk{N_{\mathrm{traj}}}^{-3/2}),
\label{eq:second_order_expand}
\end{align}
\hk{where we used
$\mathbb{E}_{\mathrm{all}}[X_s]/\mathbb{E}_{\mathrm{all}}[Y_s]
=\langle O\rangle_{\mathrm{impl}}(T)$ and
$\mathbb{E}_{\mathrm{all}}[X_s]/\mathbb{E}_{\mathrm{all}}[Y_s]^3
=\langle O\rangle_{\mathrm{impl}}(T)/\mathbb{E}_{\mathrm{all}}[Y_s]^2$.}

Taking the ensemble average of Eq.~\eqref{eq:second_order_expand} and using
\hk{$\mathbb{E}_{\mathrm{all}}[\Delta X]
=\mathbb{E}_{\mathrm{all}}[\bar X-\mathbb{E}_{\mathrm{all}}[X_s]]=0$}
and $\hk{\mathbb{E}_{\mathrm{all}}[\Delta Y]=0}$, we find
\begin{align}
\hk{\mathbb{E}_{\mathrm{all}}\!\left[
\langle O\rangle_{N_{\mathrm{traj}}}(T)
\right]-\langle O\rangle_{\mathrm{impl}}(T)}
&=
\hk{\frac{
\langle O\rangle_{\mathrm{impl}}(T)\mathbb{E}_{\mathrm{all}}[\Delta Y^2]
-\mathbb{E}_{\mathrm{all}}[\Delta X\Delta Y]
}{
\mathbb{E}_{\mathrm{all}}[Y_s]^2
}}
+o\left(\frac{1}{\hk{N_{\mathrm{traj}}}}\right).
\label{eq:bias_reduce_to_moments}
\end{align}
Since \hk{$\Delta X=\bar X-\mathbb{E}_{\mathrm{all}}[X_s]$ and
$\Delta Y=\bar Y-\mathbb{E}_{\mathrm{all}}[Y_s]$} are sample-mean fluctuations,
\begin{align}
\hk{\mathbb{E}_{\mathrm{all}}}[\Delta Y^2]=\mathrm{Var}(\bar Y)=\frac{\mathrm{Var}(Y_s)}{\hk{N_{\mathrm{traj}}}},
\qquad
\hk{\mathbb{E}_{\mathrm{all}}}[\Delta X\Delta Y]=\mathrm{Cov}(\bar X,\bar Y)=\frac{\mathrm{Cov}(X_s,Y_s)}{\hk{N_{\mathrm{traj}}}}.
\label{eq:mean_var_cov_scaling}
\end{align}
Substituting Eq.~\eqref{eq:mean_var_cov_scaling} into Eq.~\eqref{eq:bias_reduce_to_moments} yields \eqref{eq:thm_bias_second_order}.

We now derive the RMSE scaling Eq.~\eqref{eq:thm_rmse_asymp}.
By definition,
\begin{align}
\hk{\epsilon_{\mathrm{stat}}(T)}^2
=
\hk{\mathbb{E}_{\mathrm{all}}\!\left[
\bigl(\langle O\rangle_{N_{\mathrm{traj}}}(T)
-\langle O\rangle_{\mathrm{impl}}(T)\bigr)^2
\right]}
=
\hk{\mathrm{Var}_{\mathrm{all}}\!\left(
\langle O\rangle_{N_{\mathrm{traj}}}(T)
\right)+\mathrm{Bias}_{N_{\mathrm{traj}}}(T)^2.}
\label{eq:rmse_def_expand}
\end{align}
From Eq.~\eqref{eq:thm_asymp_normal}, one has
\hk{$\mathrm{Var}_{\mathrm{all}}\!\left(\langle O\rangle_{N_{\mathrm{traj}}}(T)\right)
=\mathrm{Var}_{\mathrm{all}}\!\left(X_s-\langle O\rangle_{\mathrm{impl}}(T)Y_s\right)/
\left(N_{\mathrm{traj}}\mathbb{E}_{\mathrm{all}}[Y_s]^2\right)
+o(N_{\mathrm{traj}}^{-1})$}, while Eq.~\eqref{eq:thm_bias_second_order} implies
\hk{$\mathrm{Bias}_{N_{\mathrm{traj}}}(T)=O(N_{\mathrm{traj}}^{-1})$} and hence its square is $O(\hk{N_{\mathrm{traj}}}^{-2})$.
Therefore Eq.~\eqref{eq:rmse_def_expand} gives
\begin{align}
\hk{\epsilon_{\mathrm{stat}}(T)}^2
=
\hk{\frac{
\mathrm{Var}_{\mathrm{all}}\!\left(
X_s-\langle O\rangle_{\mathrm{impl}}(T)Y_s
\right)
}{
N_{\mathrm{traj}}\mathbb{E}_{\mathrm{all}}[Y_s]^2
}}
+o\left(\frac{1}{\hk{N_{\mathrm{traj}}}}\right),
\end{align}
and taking the square root yields Eq.~\eqref{eq:thm_rmse_asymp}.

Finally, we prove the coarse bound Eq.~\eqref{eq:thm_rmse_bound}.
Using
\hk{$X_s-\langle O\rangle_{\mathrm{impl}}(T)Y_s
=\alpha_s\bigl(O_s(T)-\langle O\rangle_{\mathrm{impl}}(T)D_s(T)\bigr)$}
and $\alpha_s^2=1$, we have
\begin{align}
\hk{\bigl(X_s-\langle O\rangle_{\mathrm{impl}}(T)Y_s\bigr)^2
=\bigl(O_s(T)-\langle O\rangle_{\mathrm{impl}}(T)D_s(T)\bigr)^2.}
\end{align}
If $D_s(T)=0$, then $O_s(T)=0$ by convention and the right-hand side is zero.
If $D_s(T)=1$, then
\hk{$|O_s(T)-\langle O\rangle_{\mathrm{impl}}(T)|
\le\|O\|+|\langle O\rangle_{\mathrm{impl}}(T)|$}.
Combining both cases gives
\hk{$\bigl(X_s-\langle O\rangle_{\mathrm{impl}}(T)Y_s\bigr)^2
\le\bigl(\|O\|+|\langle O\rangle_{\mathrm{impl}}(T)|\bigr)^2D_s(T)$}
and therefore
\begin{align}
\mathrm{Var}_{\hk{\mathrm{all}}}\!\left(
X_s-\hk{\langle O\rangle_{\mathrm{impl}}(T)}Y_s
\right)
\le
\hk{\mathbb{E}_{\mathrm{all}}\!\left[
\bigl(X_s-\langle O\rangle_{\mathrm{impl}}(T)Y_s\bigr)^2
\right]}
\le
\hk{\bigl(\|O\|+|\langle O\rangle_{\mathrm{impl}}(T)|\bigr)^2
\mathbb{E}_{\mathrm{all}}[D_s(T)].}
\end{align}
Using
\hk{$\mathbb{E}_{\mathrm{all}}[Y_s]
=\mathbb{E}_{\mathrm{all}}[\alpha_sD_s(T)]$}
then yields Eq.~\eqref{eq:thm_rmse_bound}.
\end{proof}

As described in End Matter, we set
\begin{align}
p_{\mathrm{succ}}(T)= \hk{\mathbb{E}_{\mathrm{all}}}[D_s(T)],
\qquad
p_{\mathrm{fail}}(T)=1-p_{\mathrm{succ}}(T).
\end{align}
Then Eq.~\eqref{eq:thm_rmse_bound} yields
\begin{align}
\hk{\epsilon_{\mathrm{stat}}(T)}
\ \lesssim\
\frac{\hk{\|O\|+|\langle O\rangle_{\mathrm{impl}}(T)|}}{\sqrt{\hk{N_{\mathrm{traj}}}}}
\sqrt{\frac{p_{\mathrm{succ}}(T)}{\hk{\mathbb{E}_{\mathrm{all}}}[\alpha_s D_s(T)]^2}}.
\label{eq:rmse_bound_psucc}
\end{align}
From \hk{Eqs.~\eqref{eq:ONH_normalized_def_appB} and~\eqref{eq:estimator_appB}}, we can rewrite $\Tr[\rho_{\mathrm{NH}}^{\hk{(\mathrm{impl})}}(T)] = \Lambda(T) \hk{\mathbb{E}_{\mathrm{all}}}[\alpha_s D_s(T)]$.
Then Eq.~\eqref{eq:rmse_bound_psucc} gives
\begin{align}
\hk{\epsilon_{\mathrm{stat}}(T)}
\ \lesssim\
\frac{\hk{\|O\|+|\langle O\rangle_{\mathrm{impl}}(T)|}}{\sqrt{\hk{N_{\mathrm{traj}}}}}
\frac{\Lambda(T)\sqrt{p_{\mathrm{succ}}(T)}}{|\Tr[\rho_{\mathrm{NH}}^{\hk{(\mathrm{impl})}}(T)]|}.
\label{eq:rmse_bound_psucc_Lambda_supp}
\end{align}

\section{Auxiliary lemmas}\label{app:aux_lemmas}
We collect several standard formula\hk{s} used in this paper.
See, e.g., Refs.~\cite{bers1964partial, dollard1979product} for the Duhamel formula, Ref.~\cite{horn2012matrix,dollard1979product} for matrix-norm inequalities and Ref.~\cite{papoulis2002probability} for Gaussian moments.

\begin{lemma}[Duhamel formula (variation of constants)]
\label{lem:duhamel}
Let $H$ be time-independent and let $D(\tau)$ satisfy
\begin{align}
\frac{d}{d\tau}D(\tau) = -iH D(\tau) + R(\tau),
\qquad
D(0)=D_0 .
\label{eq:duhamel_ode}
\end{align}
Then
\begin{align}
D(\tau)
= e^{-i\tau H}D_0
+ \int_{0}^{\tau} ds e^{-i(\tau-s)H} R(s).
\label{eq:duhamel_sol}
\end{align}
\end{lemma}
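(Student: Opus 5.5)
The statement to prove is Lemma~\ref{lem:duhamel}, the variation-of-constants formula for a linear inhomogeneous ODE with constant generator $-iH$. I would argue by the integrating factor $e^{i\tau H}$. Set $F(\tau)\coloneqq e^{i\tau H}D(\tau)$. Since $H$ is time-independent, $e^{i\tau H}$ is differentiable in $\tau$ with derivative $iHe^{i\tau H}=e^{i\tau H}iH$; here $H$ commutes with its own exponential.

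By the product rule,
\begin{align}
\frac{d}{d\tau}F(\tau)=e^{i\tau H}iH D(\tau)+e^{i\tau H}\bigl(-iHD(\tau)+R(\tau)\bigr)=e^{i\tau H}R(\tau).
\end{align}
Integrating from $0$ to $\tau$ with the fundamental theorem of calculus, applied entrywise to the matrix- or operator-valued function, gives
\begin{align}
F(\tau)-F(0)=\int_0^\tau ds\, e^{isH}R(s),
\end{align}
where $F(0)=D_0$. Multiplying on the left by $e^{-i\tau H}$ and using $e^{-i\tau H}e^{isH}=e^{-i(\tau-s)H}$, valid because both exponents are functions of the same $H$, yields Eq.~\eqref{eq:duhamel_sol}.

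For completeness, I would also note uniqueness: the difference of two solutions $\Delta(\tau)$ satisfies the homogeneous equation $\tfrac{d}{d\tau}\Delta=-iH\Delta$ with $\Delta(0)=0$. The same integrating factor shows $e^{i\tau H}\Delta(\tau)$ is constant, hence zero. So the displayed formula is \emph{the} solution.

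The only point needing care is regularity. I would assume $R$ is continuous, or at least locally integrable, in which case the identity holds for absolutely continuous $D$. I would also assume finite dimension, or bounded $H$ so that the exponential series converges in norm. Under those assumptions the differentiation of $e^{i\tau H}$ and the interchange with integration are routine, so I expect no real obstacle.
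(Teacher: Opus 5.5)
Your proof is correct and follows the paper's argument essentially verbatim. Like the paper, you multiply by the integrating factor $e^{i\tau H}$, use the product rule to get $\frac{d}{d\tau}\bigl(e^{i\tau H}D(\tau)\bigr)=e^{i\tau H}R(\tau)$, integrate from $0$ to $\tau$, and multiply back by $e^{-i\tau H}$. Your uniqueness and regularity remarks are harmless additions; uniqueness is already implied, since the derivation shows that every solution equals the displayed formula.
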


\begin{proof}
Multiply Eq.~\eqref{eq:duhamel_ode} from the left by $e^{+i\tau H}$ and use the product rule to obtain $\frac{d}{d\tau}\bigl(e^{+i\tau H}D(\tau)\bigr)=e^{+i\tau H}R(\tau)$.
Integrating from $0$ to $\tau$ and multiplying by $e^{-i\tau H}$ gives Eq.~\eqref{eq:duhamel_sol}.
\end{proof}

\begin{lemma}[Gaussian absolute moments]
\label{lem:gaussian_moments}
Let $X\sim\mathcal{N}(0,\sigma^2)$.
Then
\begin{align}
\mathbb{E}[|X|]=\sigma\sqrt{\frac{2}{\pi}},
\qquad
\mathbb{E}[X^2]=\sigma^2,
\qquad
\mathbb{E}[|X|^3]=\frac{2\sqrt{2}}{\sqrt{\pi}}\sigma^3.
\label{eq:gaussian_moments}
\end{align}
\end{lemma}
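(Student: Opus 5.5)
Assume $X\sim\mathcal N(0,\sigma^2)$ with $\sigma>0$; the case $\sigma=0$ is trivial, since every moment is then zero. The plan is to write every moment as an integral against the Gaussian density $\phi(x)=(2\pi\sigma^2)^{-1/2}e^{-x^2/(2\sigma^2)}$. Because $|x|^k\phi(x)$ is even in $x$, each absolute moment equals twice the corresponding integral over $[0,\infty)$. The second moment $\mathbb{E}[X^2]=\sigma^2$ is the definition of the variance parameter, so nothing needs to be computed there.

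For the first absolute moment I use the substitution $u=x^2/(2\sigma^2)$, so that $x\,dx=\sigma^2\,du$. This gives
\begin{align}
\mathbb{E}[|X|]=\frac{2}{\sqrt{2\pi\sigma^2}}\int_0^\infty x e^{-x^2/(2\sigma^2)}dx=\frac{2\sigma^2}{\sqrt{2\pi}\,\sigma}=\sigma\sqrt{\frac{2}{\pi}}.
\end{align}

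For the third absolute moment the same substitution gives $x^3\,dx=2\sigma^4 u\,du$, and $\int_0^\infty u e^{-u}du=\Gamma(2)=1$. Hence
\begin{align}
\mathbb{E}[|X|^3]=\frac{2}{\sqrt{2\pi}\,\sigma}\cdot 2\sigma^4=\frac{4\sigma^3}{\sqrt{2\pi}}=\frac{2\sqrt2}{\sqrt\pi}\sigma^3.
\end{align}
Integrating by parts instead, with $x^2\cdot xe^{-x^2/2\sigma^2}$, reduces this to $2\sigma^2\,\mathbb{E}[|X|]$, which gives the same value and serves as a consistency check.

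There is no genuine obstacle here. The only point needing care is bookkeeping of the constants: the factor $2$ from symmetry, the $\sigma$ in the normalization, and the simplification $4/\sqrt{2\pi}=2\sqrt2/\sqrt\pi$. I will verify each of these explicitly rather than cite Ref.~\cite{papoulis2002probability}.
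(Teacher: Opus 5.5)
Your proposal is correct and follows the same route as the paper, which reads $\mathbb{E}[X^2]=\sigma^2$ off from the variance and zero mean and obtains the absolute moments by evaluating the corresponding Gaussian integrals. You carry out those integrals explicitly via $u=x^2/(2\sigma^2)$, and the constants check out, including the consistency relation $\mathbb{E}[|X|^3]=2\sigma^2\,\mathbb{E}[|X|]$.
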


\begin{proof}
The identity $\mathbb{E}[X^2]=\sigma^2$ follows directly from $\mathrm{Var}(X)=\sigma^2$ and $\mathbb{E}[X]=0$.
The absolute moments can be obtained by evaluating the corresponding Gaussian integrals.
\end{proof}

\begin{lemma}[Unitary operator bound in trace norm]\label{lem:unitary_trace_lipschitz}
Let $\rho\ge0$ with $\Tr\rho>0$.
Let $U$ and $V$ be unitary operators and define $\tau:= U\rho U^\dagger$ and $\sigma:= V\rho V^\dagger$.
Then
\begin{align}
\|\tau-\sigma\|_1 \le 2\,(\Tr\rho)\,\|U-V\|.
\label{eq:unitary_trace_lipschitz}
\end{align}
In particular, if $\Tr\rho=1$ then $\|\tau-\sigma\|_1\le 2\|U-V\|$.
\end{lemma}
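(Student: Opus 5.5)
We write the difference of the two conjugations as a telescoping sum that isolates $U-V$ on one side at a time:
\begin{align}
\tau-\sigma = U\rho U^\dagger - V\rho V^\dagger = (U-V)\rho U^\dagger + V\rho\,(U-V)^\dagger .
\end{align}
This identity is checked by expanding the right-hand side: the cross terms $-V\rho U^\dagger$ and $+V\rho U^\dagger$ cancel, leaving $U\rho U^\dagger - V\rho V^\dagger$.

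Next we apply the triangle inequality for the trace norm to the two terms. Each term is bounded by Hölder's inequality for Schatten norms, $\|ABC\|_1\le\|A\|\,\|B\|_1\,\|C\|$. This gives
\begin{align}
\|(U-V)\rho U^\dagger\|_1\le\|U-V\|\,\|\rho\|_1\,\|U^\dagger\|, \qquad \|V\rho(U-V)^\dagger\|_1\le\|V\|\,\|\rho\|_1\,\|(U-V)^\dagger\|.
\end{align}
Unitarity gives $\|U^\dagger\|=\|V\|=1$. The adjoint does not change the operator norm, so $\|(U-V)^\dagger\|=\|U-V\|$. Since $\rho\ge0$, its trace norm is its trace, $\|\rho\|_1=\Tr\rho$. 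Adding the two bounds yields $\|\tau-\sigma\|_1\le 2(\Tr\rho)\|U-V\|$. The special case $\Tr\rho=1$ follows immediately.

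No step here is a real obstacle. The only points needing care are stating the Hölder inequality $\|ABC\|_1\le\|A\|\,\|B\|_1\,\|C\|$, a standard fact from Ref.~\cite{horn2012matrix} that we will cite, and using positivity of $\rho$ to replace $\|\rho\|_1$ by $\Tr\rho$. The hypothesis $\Tr\rho>0$ plays no role in the argument: when $\rho=0$ the bound holds trivially.
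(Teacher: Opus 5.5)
Your proof is correct and follows the paper's argument essentially verbatim. Both use the same splitting $\tau-\sigma=(U-V)\rho U^\dagger+V\rho(U^\dagger-V^\dagger)$, the triangle inequality, the H\"older bound $\|AXB\|_1\le\|A\|\,\|X\|_1\,\|B\|$, unit-norm unitaries, and $\|\rho\|_1=\Tr\rho$. Your side remark that the hypothesis $\Tr\rho>0$ plays no role is also correct.
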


\begin{proof}
We expand
\begin{align}
\tau-\sigma
&=U\rho U^\dagger - V\rho V^\dagger
=(U-V)\rho U^\dagger + V\rho(U^\dagger - V^\dagger).
\end{align}
Using the triangle inequality and $\|AXB\|_1\le \|A\|\,\|X\|_1\,\|B\|$, we obtain
\begin{align}
\|\tau-\sigma\|_1
&\le \|(U-V)\rho U^\dagger\|_1 + \|V\rho(U^\dagger - V^\dagger)\|_1 \notag\\
&\le \|U-V\|\,\|\rho\|_1\,\|U^\dagger\| + \|V\|\,\|\rho\|_1\,\|U^\dagger - V^\dagger\|.
\end{align}
Since $\|\rho\|_1=\Tr\rho$, $\|U\|=\|V\|=1$, and $\|U^\dagger - V^\dagger\|=\|U-V\|$, we get $\|\tau-\sigma\|_1\le 2(\Tr\rho)\|U-V\|$.
\end{proof}

\begin{lemma}[Bound evaluation of normalized expectation values]\label{lem:normalized_stability}
Let \hk{$\tau$ be a Hermitian trace-class operator and $\sigma \ge 0$ be positive semidefinite, with $t:=\Tr\tau\neq0$ and $s:=\Tr\sigma>0$}.
Let $O=O^\dagger$ be a bounded observable and define
\begin{align}
\langle O\rangle_\tau \coloneqq \frac{\Tr[O\tau]}{t},\qquad
\langle O\rangle_\sigma \coloneqq \frac{\Tr[O\sigma]}{s}.
\end{align}
Then
\begin{align}
\bigl|\langle O\rangle_\tau-\langle O\rangle_\sigma\bigr|
\le
\frac{2\|O\|}{\min\{\hk{|t|},s\}}\;\|\tau-\sigma\|_1.
\label{eq:ratio_stability_trace}
\end{align}
\end{lemma}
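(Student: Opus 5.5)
The plan is to write the difference of normalized expectation values as a single fraction and split it into two terms, each controlled by $\|\tau-\sigma\|_1$. Concretely, I will use the identity
\begin{align}
\langle O\rangle_\tau-\langle O\rangle_\sigma
=\frac{\Tr[O(\tau-\sigma)]}{t}+\Tr[O\sigma]\Bigl(\frac{1}{t}-\frac{1}{s}\Bigr)
=\frac{\Tr[O(\tau-\sigma)]}{t}+\langle O\rangle_\sigma\,\frac{s-t}{t},
\end{align}
which follows by adding and subtracting $\Tr[O\sigma]/t$.

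For the first term I use Hölder's inequality, $|\Tr[O(\tau-\sigma)]|\le\|O\|\,\|\tau-\sigma\|_1$. For the second term, positivity of $\sigma$ gives $|\langle O\rangle_\sigma|\le\|O\|$, since $|\Tr[O\sigma]|\le\|O\|\Tr\sigma=\|O\|s$. Also $|s-t|=|\Tr(\sigma-\tau)|\le\|\tau-\sigma\|_1$. Together these give the bound $2\|O\|\,\|\tau-\sigma\|_1/|t|$.

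Because $\tau$ need not be positive, $\langle O\rangle_\tau$ is not a priori bounded by $\|O\|$. Dividing by $t$ in the split above therefore seems forced, and the statement's $\min\{|t|,s\}$ is then obtained trivially, since $1/|t|\le 1/\min\{|t|,s\}$.

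If one instead wanted the symmetric form with denominator $s$, the alternative split $\Tr[O(\tau-\sigma)]/s+\Tr[O\tau](1/t-1/s)$ would require bounding $|\Tr[O\tau]|/|t|$, which fails for non-positive $\tau$. So the only real care point is choosing the decomposition that places the positive operator $\sigma$ in the bounded prefactor. Beyond that I expect no obstacle: every step is Hölder's inequality plus the trace-norm bound on the trace, $|\Tr A|\le\|A\|_1$.
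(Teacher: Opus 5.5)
Your proposal is correct and follows the paper's proof: the paper uses the same decomposition $\Tr[O(\tau-\sigma)]/t+(s-t)\Tr[O\sigma]/(ts)$. It then bounds the two terms exactly as you do, by H\"older's inequality, $|\Tr[O\sigma]|\le\|O\|s$ from positivity of $\sigma$, and $|s-t|\le\|\tau-\sigma\|_1$, obtaining $2\|O\|\,\|\tau-\sigma\|_1/|t|$ before relaxing to $\min\{|t|,s\}$.
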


\begin{proof}
We start from
\begin{align}
\left|\langle O\rangle_\tau-\langle O\rangle_\sigma\right|
&=
\left|\frac{\Tr[O\tau]}{t}-\frac{\Tr[O\sigma]}{s}\right|
=
\left|\frac{s\Tr[O\tau]-t\Tr[O\sigma]}{ts}\right|
\notag\\
&=
\left|\frac{s\Tr[O(\tau-\sigma)]+(s-t)\Tr[O\sigma]}{ts}\right|
\notag\\
&\le
\frac{|\Tr[O(\tau-\sigma)]|}{\hk{|t|}}
+
\frac{|\Tr[O\sigma]|}{\hk{|t|}s}\,|s-t|.
\label{eq:ratio_split_app}
\end{align}
For the first term, H\"older's inequality gives
\begin{align}
|\Tr[O(\tau-\sigma)]|\le \|O\|\,\|\tau-\sigma\|_1.
\end{align}
For the second term, since $\sigma\ge0$ we have $\|\sigma\|_1=\Tr\sigma=s$, and thus
\begin{align}
|\Tr[O\sigma]|\le \|O\|\,\|\sigma\|_1=\|O\|\,s.
\end{align}
Substituting these bounds into Eq.~\eqref{eq:ratio_split_app} yields
\begin{align}
\left|\frac{\Tr[O\tau]}{t}-\frac{\Tr[O\sigma]}{s}\right|
\le
\frac{\|O\|}{\hk{|t|}}\|\tau-\sigma\|_1
+
\|O\|\frac{|s-t|}{\hk{|t|}}.
\end{align}
Moreover, $|s-t|=|\Tr(\sigma-\tau)|\le \|\sigma-\tau\|_1=\|\tau-\sigma\|_1$, hence
\begin{align}
\left|\frac{\Tr[O\tau]}{t}-\frac{\Tr[O\sigma]}{s}\right|
\le
\frac{2\|O\|}{\hk{|t|}}\|\tau-\sigma\|_1.
\end{align}
\hk{Since $\min\{|t|,s\}\le|t|$,}
\begin{align}
\hk{\frac{2\|O\|}{|t|}\|\tau-\sigma\|_1
\le
\frac{2\|O\|}{\min\{|t|,s\}}\|\tau-\sigma\|_1,}
\end{align}
\hk{which gives Eq.~\eqref{eq:ratio_stability_trace}.}
\end{proof}

\paragraph{\hk{Systematic-error bound.}}

Fix a realization of the Gaussian increments $\boldsymbol{\xi}$ and fix an sQEM sample.
\hk{Since the trajectory is fixed, we omit its index $s$ below.}
\hk{Let $\alpha\in\{\pm1\}$ denote the accumulated parity of this sample.}
The sample specifies the sequence of basis-operation channels used along the trajectory.
Since each basis-operation channel in Table~\ref{tab:bases} is of single-Kraus form $[A]:\rho\mapsto A\rho A^\dagger$ and is trace non-increasing, we can represent the channel at step $k$ by a single operator $\pi_k$ satisfying $\pi_k^\dagger\pi_k\le I$ and hence $\|\pi_k\|\le 1$.
In particular, measurement basis operations are trace-decreasing, so individual trajectory contributions can be unnormalized.

Let $U_k$ and $\tilde U_k$ denote the exact and implemented one-step unitary propagators used at Trotter step $k$ under the fixed $\boldsymbol{\xi}$.
Conditioned on the same sQEM record, the only difference between the two evolutions is $U_k\mapsto \tilde U_k$.
Define the (possibly unnormalized) intermediate states recursively by
\begin{align}
\rho_0 := \tilde\rho_0 := \rho_{\rm in},\qquad
\rho_{k+1} := \pi_k U_k\rho_k U_k^\dagger \pi_k^\dagger,\qquad
\tilde\rho_{k+1} := \pi_k \tilde U_k\tilde\rho_k \tilde U_k^\dagger \pi_k^\dagger,
\label{eq:recurrence_states_note_final}
\end{align}
for $k=0,1,\dots,N-1$.

Because $\pi_k^\dagger\pi_k\le I$, we have
\begin{align}
\Tr[\tilde\rho_{k+1}]
= \Tr\!\left[\pi_k^\dagger\pi_k\, \tilde U_k\tilde\rho_k \tilde U_k^\dagger\right]
\le \Tr[\tilde\rho_k],
\end{align}
and similarly $\Tr[\rho_{k+1}]\le \Tr[\rho_k]$.
Since $\Tr[\rho_{\rm in}]=1$, this implies $\Tr[\tilde\rho_k]\le 1$ and $\Tr[\rho_k]\le 1$ for all $k$.
Moreover, for any trace-class operator $X$,
\begin{align}
\|\pi_k X \pi_k^\dagger\|_1 \le \|\pi_k\|^2 \|X\|_1 \le \|X\|_1.
\label{eq:pi_contractive_trace_final}
\end{align}

Using Eq.~\eqref{eq:pi_contractive_trace_final} and the triangle inequality, we obtain
\begin{align}
\|\tilde\rho_{k+1}-\rho_{k+1}\|_1
&=
\bigl\|\pi_k\bigl(\tilde U_k\tilde\rho_k \tilde U_k^\dagger
- U_k\rho_k U_k^\dagger\bigr)\pi_k^\dagger\bigr\|_1
\notag\\
&\le
\bigl\|\tilde U_k\tilde\rho_k \tilde U_k^\dagger
- U_k\rho_k U_k^\dagger\bigr\|_1
\notag\\
&\le
\bigl\|\tilde U_k\tilde\rho_k \tilde U_k^\dagger
- U_k\tilde\rho_k U_k^\dagger\bigr\|_1
+
\bigl\|U_k(\tilde\rho_k-\rho_k)U_k^\dagger\bigr\|_1
\notag\\
&=
\bigl\|\tilde U_k\tilde\rho_k \tilde U_k^\dagger
- U_k\tilde\rho_k U_k^\dagger\bigr\|_1
+
\|\tilde\rho_k-\rho_k\|_1.
\label{eq:delta_recursion_1_final}
\end{align}
Applying Lemma~\ref{lem:unitary_trace_lipschitz} to $\rho=\tilde\rho_k$, $U=\tilde U_k$, and $V=U_k$ gives
\begin{align}
\bigl\|\tilde U_k\tilde\rho_k \tilde U_k^\dagger
- U_k\tilde\rho_k U_k^\dagger\bigr\|_1
\le 2\,\Tr[\tilde\rho_k]\;\|\tilde U_k-U_k\|
\le 2\|\tilde U_k-U_k\|.
\label{eq:unitary_term_bound_nohat_final}
\end{align}
Combining Eqs.~\eqref{eq:delta_recursion_1_final} and~\eqref{eq:unitary_term_bound_nohat_final} yields
\begin{align}
\|\tilde\rho_{k+1}-\rho_{k+1}\|_1 \le \|\tilde\rho_k-\rho_k\|_1 + 2\|\tilde U_k-U_k\|.
\end{align}
Iterating from $k=0$ to $N-1$ and using $\tilde\rho_0=\rho_0$ gives
\begin{align}
\|\tilde\rho_N-\rho_N\|_1
\le 2\sum_{k=0}^{N-1}\|\tilde U_k-U_k\|.
\label{eq:trajectory_trace_bound_note_final}
\end{align}

Define the averaged (generally unnormalized) non-Hermitian states by
\begin{align}
\rho_{\rm NH}(T) := \hk{\Lambda(T)}\mathbb{E}_{\mathrm{all}}[\hk{\alpha}\rho_N],\qquad
\rho_{\rm NH}^{(\mathrm{impl})}(T) := \hk{\Lambda(T)}\mathbb{E}_{\mathrm{all}}[\hk{\alpha}\tilde\rho_N].
\label{eq:trajectory_state_reconstruction}
\end{align}
Using $\|\mathbb{E}[X]\|_1\le \mathbb{E}[\|X\|_1]$ together with Eq.~\eqref{eq:trajectory_trace_bound_note_final}, we obtain
\begin{align}
\bigl\|\rho_{\rm NH}^{(\mathrm{impl})}(T)-\rho_{\rm NH}(T)\bigr\|_1
&=
\hk{\Lambda(T)}\bigl\|\mathbb{E}_{\mathrm{all}}[\hk{\alpha}(\tilde\rho_N-\rho_N)]\bigr\|_1
\le
\hk{\Lambda(T)}\mathbb{E}_{\mathrm{all}}\bigl[\|\tilde\rho_N-\rho_N\|_1\bigr]
\notag\\
&\le
2\,\hk{\Lambda(T)}\mathbb{E}_{\boldsymbol{\xi}}\!\left[\sum_{k=0}^{N-1}\|\tilde U_k-U_k\|\right],
\label{eq:avg_trace_bound_note_final}
\end{align}
where the last step uses \hk{$|\alpha|=1$ and the fact that} $\|\tilde U_k-U_k\|$ depends only on the Gaussian increments $\boldsymbol{\xi}$\hk{, so the remaining $\mathbb{E}_{\rm all}$ average reduces to $\mathbb{E}_{\boldsymbol{\xi}}$}.

Finally, the systematic error between the normalized expectation values is
\begin{align}
\epsilon_{\rm sys}(T)
=
\left|
\frac{\Tr[O\rho_{\rm NH}^{(\mathrm{impl})}(T)]}{\Tr[\rho_{\rm NH}^{(\mathrm{impl})}(T)]}
-
\frac{\Tr[O\rho_{\rm NH}(T)]}{\Tr[\rho_{\rm NH}(T)]}
\right|.
\end{align}
\hk{Here $\rho_{\rm NH}(T)\ge0$, whereas $\rho_{\rm NH}^{(\mathrm{impl})}(T)$ is a signed Hermitian reconstruction; we assume that both traces are nonzero.}
Applying Lemma~\ref{lem:normalized_stability} to $\tau=\rho_{\rm NH}^{(\mathrm{impl})}(T)$ and $\sigma=\rho_{\rm NH}(T)$ yields the following theorem:
\begin{theorem}[Bound for the systematic error]\label{thm:bound_sys_error}
\begin{align}
\epsilon_{\rm sys}(T)
\le
\frac{2\|O\|}{\min\{\hk{|\Tr[\rho_{\rm NH}^{(\mathrm{impl})}(T)]|},\,\hk{|\Tr[\rho_{\rm NH}(T)]|}\}}
\bigl\|\rho_{\rm NH}^{(\mathrm{impl})}(T)-\rho_{\rm NH}(T)\bigr\|_1.
\end{align}
Moreover, combining this with Eq.~\eqref{eq:avg_trace_bound_note_final} gives the desired systematic bound:
\begin{align}
\epsilon_{\rm sys}(T)
\le
\frac{4\|O\|\hk{\Lambda(T)}}{\min\{\hk{|\Tr[\rho_{\rm NH}^{(\mathrm{impl})}(T)]|},\,\hk{|\Tr[\rho_{\rm NH}(T)]|}\}}
\mathbb{E}_{\boldsymbol{\xi}}\!\left[\sum_{k=0}^{N-1}\|\tilde U_k-U_k\|\right].
\end{align}
\end{theorem}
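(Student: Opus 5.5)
The first inequality follows by applying Lemma~\ref{lem:normalized_stability} with $\tau=\rho_{\rm NH}^{(\mathrm{impl})}(T)$ and $\sigma=\rho_{\rm NH}(T)$. Its hypotheses hold under the standing assumptions. $\tau$ is Hermitian and trace class, since it is a finite-dimensional signed average of Hermitian operators $\alpha\tilde\rho_N$ with bounded weights. $\sigma\ge0$, because it is the ideal non-Hermitian state $e^{-iH_{\rm target}T}\rho_{\rm in}e^{+iH_{\rm target}^\dagger T}$ up to the identity shift. Both traces are nonzero by the assumption stated just before the theorem. The lemma then gives a prefactor $2\|O\|/\min\{|t|,s\}$ times $\|\tau-\sigma\|_1$. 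With $t=\Tr[\rho_{\rm NH}^{(\mathrm{impl})}(T)]$ and $s=\Tr[\rho_{\rm NH}(T)]=|s|$, this is exactly the claimed first bound.

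For the second inequality, I substitute Eq.~\eqref{eq:avg_trace_bound_note_final}, which bounds $\|\rho_{\rm NH}^{(\mathrm{impl})}(T)-\rho_{\rm NH}(T)\|_1$ by $2\Lambda(T)\mathbb{E}_{\boldsymbol{\xi}}[\sum_k\|\tilde U_k-U_k\|]$. Multiplying by $2\|O\|/\min\{\cdot\}$ gives the factor $4\|O\|\Lambda(T)$. That derivation was already carried out above. It rests on three ingredients: the per-trajectory recursion Eq.~\eqref{eq:trajectory_trace_bound_note_final}, obtained from Lemma~\ref{lem:unitary_trace_lipschitz} and the contractivity of $\pi_k$; convexity of the trace norm under $\mathbb{E}_{\rm all}$; and $|\alpha|=1$. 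So the second step is a direct chaining.

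The main point needing care is justifying $\sigma=\rho_{\rm NH}(T)\ge0$, since it is defined in Eq.~\eqref{eq:trajectory_state_reconstruction} as a signed average. I would argue this from the quasi-probability reconstruction: with exact propagators in the $\delta t\to0^+$ limit, $\Lambda(T)\mathbb{E}_{\rm all}[\alpha\rho_N]$ equals $\mathcal{E}_{\rm NH}^{n}[\rho_{\rm in}]$, which is of the form $K\rho_{\rm in}K^\dagger$ and hence positive. If this identification is doubtful, a fallback is available. In the proof of Lemma~\ref{lem:normalized_stability}, positivity of $\sigma$ is used only to get $|\Tr[O\sigma]|\le\|O\|\,|s|$. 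One can instead swap the roles of $\tau$ and $\sigma$ and use $|\Tr[O\sigma]|\le\|O\|\|\sigma\|_1$, at the cost of a ratio $\|\sigma\|_1/|s|$. I therefore expect the positivity identification to be the step requiring the most justification.
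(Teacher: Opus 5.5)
Your proposal is correct and matches the paper's proof. The paper applies Lemma~\ref{lem:normalized_stability} with $\tau=\rho_{\rm NH}^{(\mathrm{impl})}(T)$ and $\sigma=\rho_{\rm NH}(T)$, then substitutes Eq.~\eqref{eq:avg_trace_bound_note_final} to obtain the factor $4\|O\|\Lambda(T)$. The paper does not derive $\rho_{\rm NH}(T)\ge0$; it simply asserts it, identifying the signed reconstruction with the ideal non-Hermitian state, and assumes both traces are nonzero. Your identification argument is that same step made explicit, while your fallback would only give a weaker bound with an extra $\|\sigma\|_1/|s|$ factor, not the stated one.
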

The remaining step in End Matter Eq.~\eqref{eq:suff_sys_app} follows by upper bounding $\mathbb{E}_{\boldsymbol{\xi}}[\sum_{k=0}^{N-1}\|\tilde U_k-U_k\|]$ using the Trotter-analysis bounds, i.e., Corollary~\ref{coro:STeo_vanish} in Sec.~\ref{app:evenodd}.

\section{Trotter bounds for the \texorpdfstring{$H_{\mathrm{Re}}$-$H_{\mathrm{I}}(k)$}{Hr-Kk} splitting}
\label{app:trotter_bounds}
In this section, we derive error bounds for the trotterization between $H_{\mathrm{Re}}$ and $H_{\mathrm{I}}(k)=\sum_{\ell\in E}\xi_{k,\ell}H_{\mathrm{I},\ell}$.
The proof strategy follows the commutator-scaling approach in Ref.~\cite{Childs2021}.
We explicitly evaluate the resulting bounds for our setting with the stochastic-noise operator $H_{\mathrm{I}}(k)$.
Since $\xi_{k,\ell}\sim\mathcal{N}(0,2\gamma_\ell\Delta t)$, the resulting step-size scaling differs from the standard deterministic case.

Specifically, we provide one-step bounds for the Lie-Trotter and Suzuki-Trotter formulas used in the main text.

\subsection{\texorpdfstring{Lie-Trotter one-step bounds}{Lie-Trotter one-step bounds}}
\label{appE:LT}

\begin{lemma}[One-step Lie-Trotter bound with $(\Delta t,\{\xi_{k,\ell}\}_{\ell})$]
\label{lem:LT_HrHi}
Assume that $H_{\mathrm{Re}}$ and $\{H_{\mathrm{I},\ell}\}_{\ell\in E}$ are Hermitian.
For any $\Delta t > 0$ and any real increments $\{\xi_{k,\ell}\}_{\ell\in E}$, define the stochastic-noise operator
\begin{align}
H_{\mathrm{I}}(k) \coloneqq \sum_{\ell \in E}\xi_{k,\ell} H_{\mathrm{I},\ell},
\label{eq:HI_k_def}
\end{align}
and the \hk{Lie-Trotter} and exact one-step propagators
\begin{align}
U_k^{\mathrm{LT}} \coloneqq e^{-i\Delta t H_{\mathrm{Re}}} e^{-iH_{\mathrm{I}}(k)},
\qquad
U_k^{\mathrm{ex}} \coloneqq \exp\left[-i\left(\Delta t H_{\mathrm{Re}}+H_{\mathrm{I}}(k)\right)\right].
\label{eq:Uk_LT_Uk_ex_defs}
\end{align}
Then
\begin{align}
\left\|U_k^{\mathrm{LT}}-U_k^{\mathrm{ex}}\right\|
\le \frac{\Delta t}{2} \|[H_{\mathrm{Re}},H_{\mathrm{I}}(k)]\|
\le \frac{\Delta t}{2}\sum_{\ell \in E}|\xi_{k,\ell}| \bigl\|[H_{\mathrm{Re}},H_{\mathrm{I},\ell}]\bigr\|.
\label{eq:LT_HrHi_final}
\end{align}
\end{lemma}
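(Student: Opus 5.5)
We bound the difference between the two one-step propagators with the Duhamel formula (Lemma~\ref{lem:duhamel}), following the commutator-scaling strategy of Ref.~\cite{Childs2021}. Write $A\coloneqq \Delta t H_{\mathrm{Re}}$ and $B\coloneqq H_{\mathrm{I}}(k)$; both are Hermitian because the $\xi_{k,\ell}$ are real. Introduce a fictitious time $\tau\in[0,1]$ and set
\begin{align}
U_{\mathrm{LT}}(\tau)\coloneqq e^{-i\tau A}e^{-i\tau B},\qquad U_{\mathrm{ex}}(\tau)\coloneqq e^{-i\tau(A+B)},
\end{align}
so that $U_k^{\mathrm{LT}}=U_{\mathrm{LT}}(1)$ and $U_k^{\mathrm{ex}}=U_{\mathrm{ex}}(1)$.

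\emph{Step 1 (differential equation).} Differentiating the product formula gives
\begin{align}
\frac{d}{d\tau}U_{\mathrm{LT}}(\tau)=-i(A+B)U_{\mathrm{LT}}(\tau)+R(\tau),\qquad R(\tau)\coloneqq -i\bigl(e^{-i\tau A}Be^{+i\tau A}-B\bigr)U_{\mathrm{LT}}(\tau).
\end{align}
The remainder $R(\tau)$ comes from commuting $B$ past $e^{-i\tau A}$.

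\emph{Step 2 (Duhamel).} Set $D(\tau)\coloneqq U_{\mathrm{LT}}(\tau)-U_{\mathrm{ex}}(\tau)$. It satisfies $\dot D=-i(A+B)D+R$ with $D(0)=0$. Lemma~\ref{lem:duhamel} with $H=A+B$ then gives
\begin{align}
D(1)=\int_0^1 ds\, e^{-i(1-s)(A+B)}R(s).
\end{align}
Since both the propagator and $U_{\mathrm{LT}}(s)$ are unitary,
\begin{align}
\|D(1)\|\le\int_0^1 ds\,\|e^{-isA}Be^{isA}-B\|.
\end{align}

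\emph{Step 3 (commutator bound).} This is the step where care is needed, to get the constant $1/2$ rather than $1$. Write
\begin{align}
e^{-isA}Be^{isA}-B=\int_0^s du\,\frac{d}{du}\bigl(e^{-iuA}Be^{iuA}\bigr)=-i\int_0^s du\, e^{-iuA}[A,B]e^{iuA}.
\end{align}
Hence its norm is at most $s\|[A,B]\|$. Integrating over $s\in[0,1]$ gives
\begin{align}
\|D(1)\|\le \tfrac12\|[A,B]\|=\tfrac{\Delta t}{2}\|[H_{\mathrm{Re}},H_{\mathrm{I}}(k)]\|.
\end{align}

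\emph{Step 4 (linearity and triangle inequality).} The commutator is linear in its second argument, so $[H_{\mathrm{Re}},H_{\mathrm{I}}(k)]=\sum_\ell \xi_{k,\ell}[H_{\mathrm{Re}},H_{\mathrm{I},\ell}]$. The triangle inequality then yields the second inequality of Eq.~\eqref{eq:LT_HrHi_final}.

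The main point to get right is the bookkeeping of the remainder $R(\tau)$ in Step 1. Placing the conjugation on the correct side keeps $R$ a pure commutator term multiplied by a unitary, and the extra factor $s$ from the inner integral is exactly what produces the $1/2$.
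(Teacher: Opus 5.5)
Your proposal is correct and follows essentially the same route as the paper: the $\tau$-interpolation, the Duhamel formula with remainder $-i\bigl(e^{-i\tau A}Be^{i\tau A}-B\bigr)U_{\mathrm{LT}}(\tau)$, and the integral representation of the conjugation difference, whose factor $s$ integrates to the constant $1/2$. The only difference is cosmetic: you absorb $\Delta t$ into $A=\Delta t H_{\mathrm{Re}}$, whereas the paper keeps $\Delta t$ explicit and integrates the conjugation up to $\tau\Delta t$.
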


\begin{proof}
Introduce an interpolation parameter $\tau\in[0,1]$ and define
\begin{align}
S^{\mathrm{LT}}(\tau)
&\coloneqq \exp(-i\tau \Delta t H_{\mathrm{Re}}) \exp(-i\tau H_{\mathrm{I}}(k)),\\
U(\tau)
&\coloneqq \exp\left[-i\tau\left(\Delta t H_{\mathrm{Re}}+H_{\mathrm{I}}(k)\right)\right],
\end{align}
so that $S^{\mathrm{LT}}(0)=U(0)=\mathbb{I}$, $S^{\mathrm{LT}}(1)=U_k^{\mathrm{LT}}$, and $U(1)=U_k^{\mathrm{ex}}$.

Using the product rule,
\begin{align}
\frac{d}{d\tau}S^{\mathrm{LT}}(\tau)
&=
-i\left(\Delta t H_{\mathrm{Re}}+H_{\mathrm{I}}(k)\right)S^{\mathrm{LT}}(\tau)
-i\Bigl[e^{-i\tau \Delta t H_{\mathrm{Re}}} H_{\mathrm{I}}(k) e^{+i\tau \Delta t H_{\mathrm{Re}}}-H_{\mathrm{I}}(k)\Bigr]S^{\mathrm{LT}}(\tau),
\label{eq:dS1_tau_HrHi}
\end{align}
whereas
\begin{align}
\frac{d}{d\tau}U(\tau)=-i\left(\Delta t H_{\mathrm{Re}}+H_{\mathrm{I}}(k)\right)U(\tau).
\label{eq:dU_tau_HrHi}
\end{align}

Let $D^{\mathrm{LT}}(\tau)\coloneqq S^{\mathrm{LT}}(\tau)-U(\tau)$.
Subtracting Eq.~\eqref{eq:dU_tau_HrHi} from Eq.~\eqref{eq:dS1_tau_HrHi} gives
\begin{align}
\frac{d}{d\tau}D^{\mathrm{LT}}(\tau)
=
-i\left(\Delta t H_{\mathrm{Re}}+H_{\mathrm{I}}(k)\right)D^{\mathrm{LT}}(\tau)+R^{\mathrm{LT}}(\tau),
\qquad D^{\mathrm{LT}}(0)=0,
\label{eq:D_inhom_HrHi}
\end{align}
with
\begin{align}
R^{\mathrm{LT}}(\tau)\coloneqq
-i\Bigl[e^{-i\tau \Delta t H_{\mathrm{Re}}} H_{\mathrm{I}}(k) e^{+i\tau \Delta t H_{\mathrm{Re}}}-H_{\mathrm{I}}(k)\Bigr]S^{\mathrm{LT}}(\tau).
\label{eq:R_tau_HrHi}
\end{align}
By Lemma~\ref{lem:duhamel},
\begin{align}
D^{\mathrm{LT}}(1)=\int_{0}^{1}d\tau
e^{-i(1-\tau)\left(\Delta t H_{\mathrm{Re}}+H_{\mathrm{I}}(k)\right)} R^{\mathrm{LT}}(\tau).
\label{eq:D1_int_HrHi}
\end{align}

Define $G^{\mathrm{LT}}(s)\coloneqq e^{-isH_{\mathrm{Re}}}H_{\mathrm{I}}(k) e^{isH_{\mathrm{Re}}}$.
Then
\begin{align}
\frac{d}{ds}G^{\mathrm{LT}}(s)=-i e^{-isH_{\mathrm{Re}}}[H_{\mathrm{Re}},H_{\mathrm{I}}(k)]e^{isH_{\mathrm{Re}}},
\end{align}
hence
\begin{align}
e^{-i\tau \Delta t H_{\mathrm{Re}}} H_{\mathrm{I}}(k) e^{+i\tau \Delta t H_{\mathrm{Re}}}-H_{\mathrm{I}}(k)
=
-i\int_{0}^{\tau\Delta t}dse^{-isH_{\mathrm{Re}}}[H_{\mathrm{Re}},H_{\mathrm{I}}(k)]e^{isH_{\mathrm{Re}}}.
\label{eq:conj_minus_HIk}
\end{align}

Since $H_{\mathrm{Re}}$ and $H_{\mathrm{I}}(k)$ are Hermitian, all exponentials above are unitary and have norm $1$.
Using the triangle inequality and submultiplicativity,
\begin{align}
\|D^{\mathrm{LT}}(1)\|
&\le \int_0^1 d\tau\|R^{\mathrm{LT}}(\tau)\|
\le \int_0^1 d\tau
\left\|e^{-i\tau \Delta t H_{\mathrm{Re}}} H_{\mathrm{I}}(k) e^{+i\tau \Delta t H_{\mathrm{Re}}}-H_{\mathrm{I}}(k)\right\|.
\end{align}
From Eq.~\eqref{eq:conj_minus_HIk},
\begin{align}
\left\|e^{-i\tau \Delta t H_{\mathrm{Re}}} H_{\mathrm{I}}(k) e^{+i\tau \Delta t H_{\mathrm{Re}}}-H_{\mathrm{I}}(k)\right\|
&\le \int_{0}^{|\tau\Delta t|}ds\|[H_{\mathrm{Re}},H_{\mathrm{I}}(k)]\|
=|\tau\Delta t|\|[H_{\mathrm{Re}},H_{\mathrm{I}}(k)]\|.
\end{align}
Therefore,
\begin{align}
\|D^{\mathrm{LT}}(1)\|
\le \int_0^1 d\tau|\tau\Delta t|\|[H_{\mathrm{Re}},H_{\mathrm{I}}(k)]\|
= \frac{\Delta t}{2} \|[H_{\mathrm{Re}},H_{\mathrm{I}}(k)]\|.
\end{align}
Since $D^{\mathrm{LT}}(1)=S^{\mathrm{LT}}(1)-U(1)=U_k^{\mathrm{LT}}-U_k^{\mathrm{ex}}$, we obtain the first inequality in Eq.~\eqref{eq:LT_HrHi_final}.
The second inequality follows from
\begin{align}
[H_{\mathrm{Re}},H_{\mathrm{I}}(k)]=\sum_{\ell \in E}\xi_{k,\ell}[H_{\mathrm{Re}},H_{\mathrm{I},\ell}],
\qquad
\|[H_{\mathrm{Re}},H_{\mathrm{I}}(k)]\|\le \sum_{\ell \in E}|\xi_{k,\ell}| \bigl\|[H_{\mathrm{Re}},H_{\mathrm{I},\ell}]\bigr\|.
\end{align}
\end{proof}

\begin{theorem}[\hk{One-step scaling of the Lie-Trotter splitting}]
\label{prop:LT_scaling_Nstep}
Let $\xi_{k,\ell}\sim\mathcal{N}(0,2\gamma_{\ell}\Delta t)$ be independent across $k$ and $\ell$.
Then the \hk{Lie-Trotter} splitting error satisfies the one-step bound
\begin{align}
\mathbb{E}_{\boldsymbol{\xi}}\left[\left\|U_k^{\mathrm{LT}}-U_k^{\mathrm{ex}}\right\|\right]
&\le
\Delta t\sum_{\ell \in E}\sqrt{\frac{\gamma_{\ell}\Delta t}{\pi}} \bigl\|[H_{\mathrm{Re}},H_{\mathrm{I},\ell}]\bigr\|.
\label{eq:LT_onestep_expectation_app}
\end{align}
\end{theorem}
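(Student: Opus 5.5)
The plan is to take the deterministic one-step estimate of Lemma~\ref{lem:LT_HrHi}, which holds for every realization of the increments, and average it over the Gaussian law of $\boldsymbol{\xi}_k$. Fix $k$. By the second inequality in Eq.~\eqref{eq:LT_HrHi_final}, pointwise in $\{\xi_{k,\ell}\}_{\ell\in E}$,
\begin{align}
\left\|U_k^{\mathrm{LT}}-U_k^{\mathrm{ex}}\right\|
\le \frac{\Delta t}{2}\sum_{\ell \in E}|\xi_{k,\ell}|\,\bigl\|[H_{\mathrm{Re}},H_{\mathrm{I},\ell}]\bigr\|.
\end{align}
Both sides are nonnegative measurable functions of the increments. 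The commutator norms $\|[H_{\mathrm{Re}},H_{\mathrm{I},\ell}]\|$ are deterministic constants, since $H_{\mathrm{Re}}$ and $H_{\mathrm{I},\ell}$ do not depend on the noise. We may therefore take $\mathbb{E}_{\boldsymbol{\xi}}$ of both sides, using monotonicity and linearity of expectation over the finite sum over $\ell\in E$.

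The only remaining input is the first absolute moment of each increment. We invoke Lemma~\ref{lem:gaussian_moments} with $\sigma^2=2\gamma_\ell\Delta t$, which gives
\begin{align}
\mathbb{E}\bigl[|\xi_{k,\ell}|\bigr]=\sqrt{2\gamma_\ell\Delta t}\,\sqrt{\frac{2}{\pi}}=2\sqrt{\frac{\gamma_\ell\Delta t}{\pi}}.
\end{align}
Substituting this, the prefactor $\Delta t/2$ combines with the factor $2$ to yield exactly $\Delta t\sum_{\ell}\sqrt{\gamma_\ell\Delta t/\pi}\,\|[H_{\mathrm{Re}},H_{\mathrm{I},\ell}]\|$, which is Eq.~\eqref{eq:LT_onestep_expectation_app}. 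Independence across $\ell$ and $k$ is not actually needed here, only the marginal laws. Independence matters later, when these one-step bounds are summed over $k$ to reach the $T\sqrt{\Delta t}$ scaling used in Eq.~\eqref{eq:suff_sys_app}.

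There is no real obstacle in this argument. The one point to watch is that Lemma~\ref{lem:LT_HrHi} is stated with $H_{\mathrm{I}}(k)$ entering the exponent without a $\Delta t$ factor. The noise strength is therefore carried entirely by $\xi_{k,\ell}\sim\mathcal{N}(0,2\gamma_\ell\Delta t)$, which is what produces the $\Delta t^{3/2}$ one-step scaling instead of the deterministic $\Delta t^{2}$. I would note this explicitly so the convention matches the definitions of $U_k^{\mathrm{LT}}$ and $U_k^{\mathrm{ex}}$.
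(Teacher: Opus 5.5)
Your proposal is correct and is the paper's argument: average the pointwise bound of Lemma~\ref{lem:LT_HrHi} using $\mathbb{E}_{\boldsymbol{\xi}}[|\xi_{k,\ell}|]=\sqrt{4\gamma_\ell\Delta t/\pi}$ from Lemma~\ref{lem:gaussian_moments}, so that the prefactor $\Delta t/2$ gives exactly the stated bound. One small aside: summing these expectations over $k$ later also needs only linearity and identical marginals, not independence.
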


\begin{proof}
By Lemma~\ref{lem:gaussian_moments}, $\mathbb{E}_{\boldsymbol{\xi}}[|\xi_{k,\ell}|]=\sqrt{4\gamma_{\ell}\Delta t/\pi}$.
\end{proof}

\subsection{\texorpdfstring{Suzuki-Trotter one-step bounds}{Suzuki-Trotter one-step bounds}}
\label{appE:ST}

We will also use Lemma~\ref{lem:duhamel}.
We repeatedly use the following standard identity: if $F(0)=F'(0)=0$, then
\begin{align}
F(\tau)=\int_0^\tau d\tau_1\int_0^{\tau_1} d\tau_2 F''(\tau_2).
\label{eq:twice_integrated_identity}
\end{align}

Fix a time step $k$ and real increments $\{\xi_{k,\ell}\}_{\ell\in E}$, and use the same $H_{\mathrm{I}}(k)$ as in Sec.~\ref{appE:LT} and $U_k^{\mathrm{ex}}$ defined in Eq.~\eqref{eq:Uk_LT_Uk_ex_defs}.
Define the Suzuki-Trotter formula
\begin{align}
U_k^{\mathrm{ST}}
\coloneqq
\exp\left(-i\frac{\Delta t}{2}H_{\mathrm{Re}}\right)
\exp(-iH_{\mathrm{I}}(k))
\exp\left(-i\frac{\Delta t}{2}H_{\mathrm{Re}}\right).
\label{eq:Uk_strang}
\end{align}

\begin{lemma}[One-step Suzuki-Trotter bound for stochastically driven unitary]
\label{lem:strang_bound_HrHi}
Assume $H_{\mathrm{Re}}$ and $\{H_{\mathrm{I},\ell}\}_{\ell\in E}$ are Hermitian.
Then, for any \hk{time step $\Delta t>0$} and any real increments $\{\xi_{k,\ell}\}_{\ell\in E}$,
\begin{align}
\left\|
U_k^{\mathrm{ST}}-U_k^{\mathrm{ex}}
\right\|
&\le
\frac{\Delta t}{12} \bigl\|[H_{\mathrm{I}}(k),[H_{\mathrm{I}}(k),H_{\mathrm{Re}}]]\bigr\|
+
\frac{\Delta t^{ 2}}{24} \bigl\|[H_{\mathrm{Re}},[H_{\mathrm{Re}},H_{\mathrm{I}}(k)]]\bigr\|.
\label{eq:strang_bound_final}
\end{align}
\end{lemma}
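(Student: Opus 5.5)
We follow the interpolation/Duhamel strategy of Lemma~\ref{lem:LT_HrHi}, adapted to the symmetric product. Write $A\coloneqq \Delta t H_{\mathrm{Re}}$ and $B\coloneqq H_{\mathrm{I}}(k)$. For $\tau\in[0,1]$ define
\begin{align}
S^{\mathrm{ST}}(\tau)\coloneqq e^{-i\tau A/2}e^{-i\tau B}e^{-i\tau A/2},\qquad U(\tau)\coloneqq e^{-i\tau(A+B)},
\end{align}
so that $S^{\mathrm{ST}}(1)=U_k^{\mathrm{ST}}$ and $U(1)=U_k^{\mathrm{ex}}$.

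Differentiating $S^{\mathrm{ST}}(\tau)$ and pulling every generator to the left gives $\frac{d}{d\tau}S^{\mathrm{ST}}=-i(A+B)S^{\mathrm{ST}}+R(\tau)S^{\mathrm{ST}}$. The remainder is
\begin{align}
R(\tau)=-i\Bigl[e^{-i\tau A/2}Be^{i\tau A/2}+\tfrac12 e^{-i\tau A/2}e^{-i\tau B}Ae^{i\tau B}e^{i\tau A/2}-B-\tfrac{A}{2}\Bigr].
\end{align}
Lemma~\ref{lem:duhamel} then yields $U_k^{\mathrm{ST}}-U_k^{\mathrm{ex}}=\int_0^1 d\tau\, e^{-i(1-\tau)(A+B)}R(\tau)S^{\mathrm{ST}}(\tau)$. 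Since all the exponentials are unitary, this gives $\|U_k^{\mathrm{ST}}-U_k^{\mathrm{ex}}\|\le\int_0^1\|R(\tau)\|d\tau$.

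The key step is to show that $R(0)=0$ and $R'(0)=0$, so that the identity \eqref{eq:twice_integrated_identity} applies to $F=R$. At $\tau=0$ the bracket vanishes. Its first derivative at $0$ equals $-\tfrac{i}{2}[A,B]-\tfrac{i}{2}[B,A]=0$, and this cancellation is exactly what the symmetric placement of the two half-steps provides.

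We then compute $R''(\tau_2)$ explicitly as a sum of conjugated nested commutators. The terms are:
\begin{itemize}
\item from the first piece, $-\tfrac14 e^{-i\tau A/2}[A,[A,B]]e^{i\tau A/2}$;
\item from the second piece, $-\tfrac12 [B,[B,A]]$ and $-\tfrac18[A,[A,\cdot]]$ acting on the conjugated $A$, conjugated appropriately;
\item a mixed term $[A,[B,A]]$.
\end{itemize}
The plan is to organize the second piece as a conjugation of $W(\tau)\coloneqq e^{-i\tau B}Ae^{i\tau B}$ and to expand $W=A-i\tau[B,A]+\int\!\!\int(\cdot)$. The first-order pieces then cancel against the first piece, leaving only the double commutators $[B,[B,A]]$ and $[A,[A,B]]$.

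Bounding by unitarity and using $\int_0^1d\tau\int_0^\tau d\tau_1\int_0^{\tau_1}d\tau_2=\tfrac16$, we get a coefficient $\tfrac1{12}$ on $\|[B,[B,A]]\|$ and $\tfrac1{24}$ on $\|[A,[A,B]]\|$. This matches the standard second-order commutator-scaling constants of Childs et al. Substituting $A=\Delta t H_{\mathrm{Re}}$ turns $[B,[B,A]]$ into $\Delta t[H_{\mathrm{I}}(k),[H_{\mathrm{I}}(k),H_{\mathrm{Re}}]]$ and $[A,[A,B]]$ into $\Delta t^2[H_{\mathrm{Re}},[H_{\mathrm{Re}},H_{\mathrm{I}}(k)]]$, which is Eq.~\eqref{eq:strang_bound_final}.

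The main obstacle is the bookkeeping in $R''$. One must verify that the mixed terms cancel or are absorbed, and that the constants come out as $1/12$ and $1/24$ rather than something larger. If the direct computation proves too messy, the fallback is to split the error into two Lie–Trotter-type Duhamel steps. The first step compares $e^{-i\tau A/2}e^{-i\tau B}e^{-i\tau A/2}$ with $e^{-i\tau(A+B)}$ through the intermediate $e^{-i\tau A/2}e^{-i\tau(B+A/2)}$, and each piece is expanded to second order using \eqref{eq:twice_integrated_identity}. The resulting coefficients are then tracked separately.
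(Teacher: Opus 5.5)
\emph{Verdict: there is a gap.} Your interpolation, the Duhamel representation, and the check $R(0)=R'(0)=0$ are all correct. The step you call key fails as formulated, however. Applying \eqref{eq:twice_integrated_identity} to your left-pulled remainder $R(\tau)$ cannot give the constants $\tfrac1{12}$ and $\tfrac1{24}$, because the mixed terms in $R''$ do not cancel for $\tau>0$. Indeed, $R(\tau)=e^{-i\tau A/2}T(\tau)e^{i\tau A/2}$ with
\[
T(\tau)=\bigl(e^{-i\tau\,\mathrm{ad}_B}-\mathcal{I}\bigr)\bigl(-\tfrac{i}{2}A\bigr)+\bigl(e^{i\tau\,\mathrm{ad}_A/2}-\mathcal{I}\bigr)(iB),
\]
so $R''=e^{-i\tau\,\mathrm{ad}_A/2}\bigl(T''-i\,\mathrm{ad}_A T'-\tfrac14\mathrm{ad}_A^2T\bigr)$. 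The last two terms vanish at $\tau=0$ but not beyond. They contain higher nested commutators such as $[A,[A,[A,B]]]$, which are not controlled by the two double commutators in the claim.

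A concrete check: take $A=a\sigma^{z}$ and $B=b\sigma^{x}$ with $a,b>0$. To leading order in $b$, $\|R''(\tau)\|=a^2b\sqrt{1+a^2\tau^2}$, whereas $\tfrac12\|[B,[B,A]]\|+\tfrac14\|[A,[A,B]]\|=a^2b+2ab^2$. So for small $b$, any bound built from $\|R''\|$ is strictly larger than the right-hand side of the lemma.

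The missing idea is to strip off the $\tau$-dependent outer conjugation before differentiating. Since $\|R(\tau)\|=\|T(\tau)\|$, apply \eqref{eq:twice_integrated_identity} to $T$ instead. Each piece of $T$ carries a single conjugation, so
\[
T''(\tau)=\tfrac{i}{2}\,e^{-i\tau\,\mathrm{ad}_B}[B,[B,A]]-\tfrac{i}{4}\,e^{i\tau\,\mathrm{ad}_A/2}[A,[A,B]].
\]
This gives $\|T(\tau)\|\le\tau^2\bigl(\tfrac14\|[B,[B,A]]\|+\tfrac18\|[A,[A,B]]\|\bigr)$, and $\int_0^1\tau^2\,d\tau=\tfrac13$ produces exactly $\tfrac1{12}$ and $\tfrac1{24}$. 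This is precisely the paper's argument, which writes the remainder as $W_1T^{\mathrm{ST}}W_2W_3$ from the start.

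Your $W$-expansion idea can also be completed, but only as a direct bound on $\|R(\tau)\|$, not on $R''$. Write $e^{-i\tau B}Ae^{i\tau B}=A-i\tau[B,A]+X_{\mathrm{rem}}(\tau)$ with $\|X_{\mathrm{rem}}(\tau)\|\le\tfrac{\tau^2}{2}\|[B,[B,A]]\|$. The first-order pieces then do not cancel exactly; instead
\[
R(\tau)=-iG(\tau)-\tfrac{i}{2}e^{-i\tau A/2}X_{\mathrm{rem}}(\tau)e^{i\tau A/2},
\qquad
G(\tau)=e^{-i\tau A/2}Be^{i\tau A/2}-B+\tfrac{i\tau}{2}e^{-i\tau A/2}[A,B]e^{i\tau A/2}.
\]
You need the extra estimate $G(0)=0$ and $G'(\tau)=\tfrac{\tau}{4}e^{-i\tau A/2}[A,[A,B]]e^{i\tau A/2}$. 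Hence $\|G(\tau)\|\le\tfrac{\tau^2}{8}\|[A,[A,B]]\|$, which reaches the same pointwise bound on $\|R(\tau)\|$.

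Your fallback does not work as described. In each of the two Lie--Trotter comparisons through $e^{-i\tau A/2}e^{-i\tau(B+A/2)}$, the remainder has a nonzero first derivative at $\tau=0$, proportional to $[A,B]$. So \eqref{eq:twice_integrated_identity} does not apply to them separately. Bounding them separately with the triangle inequality leaves an uncancelled $\|[A,B]\|$ term.
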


\begin{proof}
Introduce $\tau\in[0,1]$:
\begin{align}
U(\tau)
&\coloneqq \exp\left[-i\tau\left(\Delta t H_{\mathrm{Re}}+H_{\mathrm{I}}(k)\right)\right],\\
S^{\mathrm{ST}}(\tau)
&\coloneqq \exp\left(-i\tau\frac{\Delta t}{2}H_{\mathrm{Re}}\right)
\exp(-i\tau H_{\mathrm{I}}(k))
\exp\left(-i\tau\frac{\Delta t}{2}H_{\mathrm{Re}}\right),
\end{align}
so that $U(1)=U_k^{\mathrm{ex}}$ and $S^{\mathrm{ST}}(1)=U_k^{\mathrm{ST}}$.
Let $D^{\mathrm{ST}}(\tau)\coloneqq S^{\mathrm{ST}}(\tau)-U(\tau)$.

Write $S^{\mathrm{ST}}(\tau)=W_1(\tau)W_2(\tau)W_3(\tau)$ with $W_1(\tau)=e^{-i\tau\frac{\Delta t}{2}H_{\mathrm{Re}}}$, $W_2(\tau)=e^{-i\tau H_{\mathrm{I}}(k)}$, $W_3(\tau)=e^{-i\tau\frac{\Delta t}{2}H_{\mathrm{Re}}}$.
A direct differentiation gives
\begin{align}
\frac{d}{d\tau}S^{\mathrm{ST}}(\tau)
=
-i\left(\Delta t H_{\mathrm{Re}}+H_{\mathrm{I}}(k)\right)S^{\mathrm{ST}}(\tau)
+
R^{\mathrm{ST}}(\tau),
\end{align}
where $R^{\mathrm{ST}}(\tau)=W_1(\tau) T^{\mathrm{ST}}(\tau) W_2(\tau)W_3(\tau)$ and
\begin{align}
T^{\mathrm{ST}}(\tau)
=
\Bigl(e^{-i\tau \mathrm{ad}_{H_{\mathrm{I}}(k)}}-\mathbb{I}\Bigr)\left(-i\frac{\Delta t}{2}H_{\mathrm{Re}}\right)
+
\Bigl(e^{+i\tau\frac{\Delta t}{2} \mathrm{ad}_{H_{\mathrm{Re}}}}-\mathbb{I}\Bigr)\left(+iH_{\mathrm{I}}(k)\right).
\end{align}
Since $U'(\tau)=-i(\Delta t H_{\mathrm{Re}}+H_{\mathrm{I}}(k))U(\tau)$, we have
\begin{align}
\frac{d}{d\tau}D^{\mathrm{ST}}(\tau)
=
-i\left(\Delta t H_{\mathrm{Re}}+H_{\mathrm{I}}(k)\right)D^{\mathrm{ST}}(\tau)+R^{\mathrm{ST}}(\tau),
\qquad D^{\mathrm{ST}}(0)=0.
\end{align}
Lemma~\ref{lem:duhamel} therefore yields
\begin{align}
D^{\mathrm{ST}}(1)=\int_0^1 d\tau
e^{-i(1-\tau)\left(\Delta t H_{\mathrm{Re}}+H_{\mathrm{I}}(k)\right)} R^{\mathrm{ST}}(\tau).
\end{align}

Note that $T^{\mathrm{ST}}(0)=0$ and $(T^{\mathrm{ST}})'(0)=0$ (the linear terms cancel by the symmetry of the Suzuki-Trotter formula), so the twice-integrated identity Eq.~\eqref{eq:twice_integrated_identity} applies to $T^{\mathrm{ST}}$.
Differentiating twice gives
\begin{align}
(T^{\mathrm{ST}})''(\tau)
=
\frac{i\Delta t}{2}
e^{-i\tau \mathrm{ad}_{H_{\mathrm{I}}(k)}}\Bigl([H_{\mathrm{I}}(k),[H_{\mathrm{I}}(k),H_{\mathrm{Re}}]]\Bigr)
-\frac{i\Delta t^{ 2}}{4}
e^{+i\tau\frac{\Delta t}{2} \mathrm{ad}_{H_{\mathrm{Re}}}}\Bigl([H_{\mathrm{Re}},[H_{\mathrm{Re}},H_{\mathrm{I}}(k)]]\Bigr).
\end{align}
Using Eq.~\eqref{eq:twice_integrated_identity}, the triangle inequality, and unitary invariance of the norm, we obtain
\begin{align}
\|T^{\mathrm{ST}}(\tau)\|
\le
\tau^2\left(
\frac{\Delta t}{4} \bigl\|[H_{\mathrm{I}}(k),[H_{\mathrm{I}}(k),H_{\mathrm{Re}}]]\bigr\|
+
\frac{\Delta t^{ 2}}{8} \bigl\|[H_{\mathrm{Re}},[H_{\mathrm{Re}},H_{\mathrm{I}}(k)]]\bigr\|
\right).
\end{align}
Therefore,
\begin{align}
\|D^{\mathrm{ST}}(1)\|
\le \int_0^1 d\tau\|T^{\mathrm{ST}}(\tau)\|
=
\frac{\Delta t}{12} \bigl\|[H_{\mathrm{I}}(k),[H_{\mathrm{I}}(k),H_{\mathrm{Re}}]]\bigr\|
+
\frac{\Delta t^{ 2}}{24} \bigl\|[H_{\mathrm{Re}},[H_{\mathrm{Re}},H_{\mathrm{I}}(k)]]\bigr\|,
\end{align}
and since $D^{\mathrm{ST}}(1)=U_k^{\mathrm{ST}}-U_k^{\mathrm{ex}}$, this proves Eq.~\eqref{eq:strang_bound_final}.
\end{proof}

\begin{theorem}[\hk{One-step scaling of the Suzuki-Trotter splitting}]
\label{prop:ST_scaling_Nstep}
Let $\xi_{k,\ell}\sim\mathcal{N}(0,2\gamma_{\ell}\Delta t)$ be independent across $k$ and $\ell$, and let $U_k^{\mathrm{ST}}$ and $U_k^{\mathrm{ex}}$ be defined in Eq.~\eqref{eq:Uk_strang} and Eq.~\eqref{eq:Uk_LT_Uk_ex_defs}.
Then, using Lemma~\ref{lem:strang_bound_HrHi} and Gaussian moment identities, we obtain the one-step scaling
\begin{align}
\mathbb{E}_{\boldsymbol{\xi}}\left[\left\|U_k^{\mathrm{ST}}-U_k^{\mathrm{ex}}\right\|\right]
&\le
\frac{\Delta t}{12}\sum_{\ell,\ell'\in E}\mathbb{E}_{\boldsymbol{\xi}}\left[|\xi_{k,\ell}||\xi_{k,\ell'}|\right] \|[H_{\mathrm{I},\ell},[H_{\mathrm{I},\ell'},H_{\mathrm{Re}}]]\|
+\frac{\Delta t^{ 2}}{24}\sum_{\ell\in E}\mathbb{E}_{\boldsymbol{\xi}}[|\xi_{k,\ell}|] \|[H_{\mathrm{Re}},[H_{\mathrm{Re}},H_{\mathrm{I},\ell}]]\| \notag\\
&=
\frac{\Delta t^{ 2}}{6}\sum_{\ell\in E}\gamma_\ell \|[H_{\mathrm{I},\ell},[H_{\mathrm{I},\ell},H_{\mathrm{Re}}]]\|
+\frac{\Delta t^{ 2}}{3\pi}\sum_{\substack{\ell,\ell'\in E\\ \ell\neq \ell'}}\sqrt{\gamma_\ell\gamma_{\ell'}} \|[H_{\mathrm{I},\ell},[H_{\mathrm{I},\ell'},H_{\mathrm{Re}}]]\|\notag\\
&+\frac{\Delta t^{ 5/2}}{12\sqrt{\pi}}\sum_{\ell\in E}\sqrt{\gamma_\ell} \|[H_{\mathrm{Re}},[H_{\mathrm{Re}},H_{\mathrm{I},\ell}]]\|.
\label{eq:ST_onestep_expectation_app}
\end{align}
\end{theorem}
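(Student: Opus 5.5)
The plan is to start from the deterministic one-step bound of Lemma~\ref{lem:strang_bound_HrHi}, valid for every fixed realization of $\{\xi_{k,\ell}\}_{\ell\in E}$. We then expand the nested commutators multilinearly in the increments and take expectations using Lemma~\ref{lem:gaussian_moments}.

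\emph{Step 1 (bilinear expansion).} Substituting $H_{\mathrm{I}}(k)=\sum_{\ell}\xi_{k,\ell}H_{\mathrm{I},\ell}$ gives
\[
[H_{\mathrm{I}}(k),[H_{\mathrm{I}}(k),H_{\mathrm{Re}}]]=\sum_{\ell,\ell'\in E}\xi_{k,\ell}\xi_{k,\ell'}[H_{\mathrm{I},\ell},[H_{\mathrm{I},\ell'},H_{\mathrm{Re}}]].
\]
Likewise, $[H_{\mathrm{Re}},[H_{\mathrm{Re}},H_{\mathrm{I}}(k)]]=\sum_\ell \xi_{k,\ell}[H_{\mathrm{Re}},[H_{\mathrm{Re}},H_{\mathrm{I},\ell}]]$. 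Applying the triangle inequality and $|\xi_{k,\ell}\xi_{k,\ell'}|=|\xi_{k,\ell}||\xi_{k,\ell'}|$ bounds the norms pointwise by the corresponding sums of absolute values. Taking $\mathbb{E}_{\boldsymbol{\xi}}$ of the bound in Lemma~\ref{lem:strang_bound_HrHi}, using monotonicity and linearity of expectation, then yields the first (inequality) line of Eq.~\eqref{eq:ST_onestep_expectation_app}.

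\emph{Step 2 (Gaussian moments).} With $\sigma_\ell^2=2\gamma_\ell\Delta t$ we evaluate the moments case by case.
\begin{itemize}
\item Diagonal terms ($\ell=\ell'$): $\mathbb{E}[\xi_{k,\ell}^2]=2\gamma_\ell\Delta t$. Multiplying by $\Delta t/12$ gives the coefficient $\Delta t^2\gamma_\ell/6$.
\item Off-diagonal terms ($\ell\neq\ell'$): by independence, $\mathbb{E}[|\xi_{k,\ell}||\xi_{k,\ell'}|]=\mathbb{E}|\xi_{k,\ell}|\,\mathbb{E}|\xi_{k,\ell'}|=\sqrt{4\gamma_\ell\Delta t/\pi}\sqrt{4\gamma_{\ell'}\Delta t/\pi}=4\Delta t\sqrt{\gamma_\ell\gamma_{\ell'}}/\pi$. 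Multiplying by $\Delta t/12$ gives $\Delta t^2\sqrt{\gamma_\ell\gamma_{\ell'}}/(3\pi)$.
\item Linear term: $\mathbb{E}|\xi_{k,\ell}|=2\sqrt{\gamma_\ell\Delta t/\pi}$. Multiplying by $\Delta t^2/24$ gives $\Delta t^{5/2}\sqrt{\gamma_\ell}/(12\sqrt{\pi})$.
\end{itemize}
Collecting these three contributions reproduces the equality line of Eq.~\eqref{eq:ST_onestep_expectation_app}.

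The only point requiring care is that the off-diagonal expectation factorizes. This relies on the independence of $\xi_{k,\ell}$ across $\ell$, which is part of the hypothesis. It is also essential to keep the diagonal second moment separate rather than using $(\mathbb{E}|\xi|)^2$, since that product is smaller than the true second moment and would not give a valid bound. Nothing here is a real obstacle beyond correct bookkeeping of the constants.
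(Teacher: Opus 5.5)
Your proposal is correct and follows the same route as the paper: apply Lemma~\ref{lem:strang_bound_HrHi} for each fixed realization, expand the nested commutators multilinearly in the increments, and evaluate Gaussian moments, using the second moment on the diagonal and the factorized first absolute moments (by independence across $\ell$) off the diagonal. You spell out the steps the paper leaves implicit, and your constants $\Delta t^2\gamma_\ell/6$, $\Delta t^2\sqrt{\gamma_\ell\gamma_{\ell'}}/(3\pi)$, and $\Delta t^{5/2}\sqrt{\gamma_\ell}/(12\sqrt{\pi})$ all check out.
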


\begin{proof}
The one-step estimate Eq.~\eqref{eq:ST_onestep_expectation_app} follows from Lemma~\ref{lem:strang_bound_HrHi} together with Gaussian moment identities (e.g., $\mathbb{E}_{\boldsymbol{\xi}}[\xi_{k,\ell}^{2}]=2\gamma_\ell\Delta t$ and independence across $\ell$).
\end{proof}

\section{Trotter bound for the even-odd bond splitting}
\label{app:evenodd}
In this section, we quantify the trotterization error introduced by the even-odd bond splitting.
We first derive one-step and $N$-step bounds for the Lie-Trotter even-odd splitting and show that its bound contains an $\mathcal{O}(T)$ contribution that is not controlled by the step size $\Delta t$.
We then show that adopting the Suzuki-Trotter even--odd splitting for the stochastically driven unitary resolves this issue and yields a step-size-controlled bound.
The proof strategy follows the commutator-scaling approach in Ref.~\cite{Childs2021}.
We explicitly evaluate the resulting bounds for our setting with the stochastic-noise operator $H_{\mathrm{I}}(k)=\sum_{\ell\in E}\xi_{k,\ell}H_{\mathrm{I},\ell}$.
Since $\xi_{k,\ell}\sim\mathcal{N}(0,2\gamma_\ell\Delta t)$, the resulting step-size scaling differs from the standard deterministic case.
\subsection{Setup and the issue with Lie-Trotter even-odd splitting for the stochastically driven unitary}
\label{appF:setup_issue}

We first partition the set of bonds $E$ into two layers,
\begin{align}
E = E_{\mathrm{even}}\dot\cup E_{\mathrm{odd}},
\end{align}
such that within each layer the corresponding bond operators mutually commute.

For the coherent part, we decompose the Hamiltonian into two commuting bond layers,
\begin{align}
H_{\mathrm{Re}} &\coloneqq H_{\mathrm{Re}}^{(\mathrm{even})}+H_{\mathrm{Re}}^{(\mathrm{odd})},\\
H_{\mathrm{Re}}^{(\alpha)} &\coloneqq \sum_{\ell\in E_{\alpha}} H_{\mathrm{Re},\ell},
\quad (\alpha\in{\mathrm{even},\mathrm{odd}}),
\end{align}
where each layer consists of mutually commuting bonds:
\begin{align}
\bigl[H_{\mathrm{Re},\ell},H_{\mathrm{Re},\ell'}\bigr]=0
\quad \text{if}\quad
\ell,\ell'\in E_{\mathrm{even}}
\ \text{or}\
\ell,\ell'\in E_{\mathrm{odd}}.
\end{align}

For the multi-channel noise part, we assume a set of Hermitian operators $\{H_{\mathrm{I},\ell}\}_{\ell\in E}$ and, at each step $k$, independent Gaussian random variables $\{\xi_{k,\ell}\}_{\ell\in E}$ with
\begin{align}
\xi_{k,\ell}\sim \mathcal N(0,2\gamma_\ell \Delta t).
\end{align}
Using the same bond partition, we define the random layer generators
\begin{align}
H_{\mathrm{I}}^{(\alpha)}(k) &\coloneqq \sum_{\ell\in E_{\alpha}} \xi_{k,\ell} H_{\mathrm{I},\ell}
\quad (\alpha\in{\mathrm{even},\mathrm{odd}}),\\
H_{\mathrm{I}}(k) &\coloneqq H_{\mathrm{I}}^{(\mathrm{even})}(k)+H_{\mathrm{I}}^{(\mathrm{odd})}(k),
\end{align}
where each layer consists of mutually commuting bonds:
\begin{align}
[H_{\mathrm{I},\ell},H_{\mathrm{I},\ell'}]=0
\quad \text{if}\quad
\ell,\ell'\in E_{\mathrm{even}}
\ \text{or}\
\ell,\ell'\in E_{\mathrm{odd}}.
\end{align}

We define the one-step propagators for each component:
\begin{align}
V_{\mathrm{Re},k} &\coloneqq  \exp(-i\Delta t H_{\mathrm{Re}}),
&
V_{\mathrm{Re},k}^{\mathrm{eo}} &\coloneqq  \exp(-i\Delta t H_{\mathrm{Re}}^{(\mathrm{even})})\exp(-i\Delta t H_{\mathrm{Re}}^{(\mathrm{odd})}),
\\
V_{\mathrm{I},k} &\coloneqq  \exp(-i H_{\mathrm{I}}(k)),
&
V_{\mathrm{I},k}^{\mathrm{eo}} &\coloneqq  \exp(-i H_{\mathrm{I}}^{(\mathrm{even})}(k)) \exp(-i H_{\mathrm{I}}^{(\mathrm{odd})}(k)).
\end{align}
Note that, by the commutativity within each layer,
\begin{align}
\exp(-i H_{\mathrm{I}}^{(\mathrm{even})}(k))
= \prod_{\ell\in E_{\mathrm{even}}} \exp(-i\xi_{k,\ell}H_{\mathrm{I},\ell}),
\qquad
\exp(-i H_{\mathrm{I}}^{(\mathrm{odd})}(k))
= \prod_{\ell\in E_{\mathrm{odd}}} \exp(-i\xi_{k,\ell}H_{\mathrm{I},\ell}),
\end{align}
with arbitrary order inside each product.

We implement one step as
\begin{align}
U_k^{\hk{\mathrm{impl,LT(eo)}}} \coloneqq  V_{\mathrm{Re},k}^{\mathrm{eo}} V_{\mathrm{I},k}^{\mathrm{eo}}.
\label{eq:Uk_impl_eo}
\end{align}
As an intermediate reference, we \hk{recall} the Lie-Trotter split step
\begin{align}
U_k^{\mathrm{LT}} =  V_{\mathrm{Re},k} V_{\mathrm{I},k}
= \exp(-i\Delta t H_{\mathrm{Re}}) \exp(-iH_{\mathrm{I}}(k)),
\label{eq:Uk_LT_eo}
\end{align}
and the exact one-step propagator under the discretized integrated-noise model
\begin{align}
U_k^{\mathrm{ex}} =  \exp\left[-i\left(\Delta t H_{\mathrm{Re}}+H_{\mathrm{I}}(k)\right)\right].
\label{eq:Uk_exact_eo}
\end{align}

\begin{lemma}[One-step Lie-Trotter even-odd bound for $H_{\mathrm{Re}}$]
\label{lem:LT_evenodd_Hr}
Assume $H_{\mathrm{Re}}^{(\mathrm{even})}$ and $H_{\mathrm{Re}}^{(\mathrm{odd})}$ are Hermitian.
Then, for any \hk{time step $\Delta t>0$},
\begin{align}
\left\|
V_{\mathrm{Re},k}^{\mathrm{eo}}-V_{\mathrm{Re},k}
\right\|
\le \frac{\Delta t^{ 2}}{2} \bigl\|[H_{\mathrm{Re}}^{(\mathrm{even})},H_{\mathrm{Re}}^{(\mathrm{odd})}]\bigr\|.
\label{eq:LT_evenodd_Hr_final}
\end{align}
\end{lemma}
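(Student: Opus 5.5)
We follow exactly the interpolation-plus-Duhamel argument used in the proof of Lemma~\ref{lem:LT_HrHi}, now with the two bond layers $A\coloneqq H_{\mathrm{Re}}^{(\mathrm{even})}$ and $B\coloneqq H_{\mathrm{Re}}^{(\mathrm{odd})}$ in place of $H_{\mathrm{Re}}$ and $H_{\mathrm{I}}(k)$. For $\tau\in[0,1]$ we introduce
\begin{align}
S^{\mathrm{eo}}(\tau)\coloneqq e^{-i\tau\Delta t A}e^{-i\tau\Delta t B},\qquad
U(\tau)\coloneqq e^{-i\tau\Delta t (A+B)},
\end{align}
so that $S^{\mathrm{eo}}(1)=V_{\mathrm{Re},k}^{\mathrm{eo}}$, $U(1)=V_{\mathrm{Re},k}$, and $S^{\mathrm{eo}}(0)=U(0)=\mathbb{I}$. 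Note that, unlike Lemma~\ref{lem:LT_HrHi}, both factors carry the step $\Delta t$. This is what produces the $\Delta t^2$ scaling here, compared with the factor $\Delta t\,|\xi_{k,\ell}|$ appearing there.

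First, we differentiate $S^{\mathrm{eo}}$ by the product rule and insert $e^{-i\tau\Delta t A}e^{+i\tau\Delta t A}$ in front of $B$. This gives
\begin{align}
\frac{d}{d\tau}S^{\mathrm{eo}}(\tau)=-i\Delta t(A+B)S^{\mathrm{eo}}(\tau)+R^{\mathrm{eo}}(\tau),
\end{align}
with remainder
\begin{align}
R^{\mathrm{eo}}(\tau)=-i\Delta t\bigl[e^{-i\tau\Delta t A}Be^{+i\tau\Delta t A}-B\bigr]S^{\mathrm{eo}}(\tau).
\end{align}
Set $D(\tau)=S^{\mathrm{eo}}(\tau)-U(\tau)$. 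Then $D$ obeys the inhomogeneous equation of Lemma~\ref{lem:duhamel}, with $H=\Delta t(A+B)$ and $D(0)=0$. Hence
\begin{align}
D(1)=\int_0^1 d\tau\, e^{-i(1-\tau)\Delta t(A+B)}R^{\mathrm{eo}}(\tau).
\end{align}

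Second, we bound the conjugation difference exactly as in Eq.~\eqref{eq:conj_minus_HIk}. Differentiating $s\mapsto e^{-isA}Be^{isA}$ gives
\begin{align}
e^{-i\tau\Delta t A}Be^{i\tau\Delta t A}-B=-i\int_0^{\tau\Delta t}ds\,e^{-isA}[A,B]e^{isA},
\end{align}
whose norm is at most $\tau\Delta t\,\|[A,B]\|$. All exponentials are unitary because $A$ and $B$ are Hermitian, so by submultiplicativity
\begin{align}
\|D(1)\|\le\int_0^1 d\tau\,\Delta t\cdot\tau\Delta t\,\|[A,B]\|=\frac{\Delta t^2}{2}\|[A,B]\|,
\end{align}
which is Eq.~\eqref{eq:LT_evenodd_Hr_final}.

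There is no real obstacle in this argument. The only point needing care is the bookkeeping of the extra factor $\Delta t$ in front of $B$ in $R^{\mathrm{eo}}$, which supplies the second power of $\Delta t$.
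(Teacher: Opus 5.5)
Your proof is correct and is essentially the paper's argument. The paper gets it in one line by substituting $H_{\mathrm{Re}}\to H_{\mathrm{Re}}^{(\mathrm{even})}$ and $H_{\mathrm{I}}(k)\to\Delta t\,H_{\mathrm{Re}}^{(\mathrm{odd})}$ into Lemma~\ref{lem:LT_HrHi}, whose proof uses only Hermiticity. You instead rerun that lemma's interpolation-plus-Duhamel proof inline, and your $\Delta t$ bookkeeping matches.
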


\begin{proof}
Apply Lemma~\ref{lem:LT_HrHi} with $H_{\mathrm{Re}}\rightarrow H_{\mathrm{Re}}^{(\mathrm{even})}$ and $H_{\mathrm{I}}(k)\rightarrow \Delta t H_{\mathrm{Re}}^{(\mathrm{odd})}$.
\end{proof}

\begin{lemma}[One-step Lie-Trotter even-odd bound for the stochastically driven unitary]
\label{lem:LT_evenodd_Hi}
Assume all $H_{\mathrm{I},\ell}$ are Hermitian and commute within each layer ($\ell\in E_{\mathrm{even}}$ or $\ell\in E_{\mathrm{odd}}$).
Then, for any fixed realization $\{\xi_{k,\ell}\}_{\ell\in E}$,
\begin{align}
\left\|
V_{\mathrm{I},k}^{\mathrm{eo}}-V_{\mathrm{I},k}
\right\|
&\le
\frac{1}{2} \bigl\|[H_{\mathrm{I}}^{(\mathrm{even})}(k),H_{\mathrm{I}}^{(\mathrm{odd})}(k)]\bigr\|
\label{eq:LT_evenodd_Hi_final}\\
&\le
\frac{1}{2}\sum_{\ell\in E_{\mathrm{even}}}\sum_{\ell'\in E_{\mathrm{odd}}}
|\xi_{k,\ell}| |\xi_{k,\ell'}|\bigl\|[H_{\mathrm{I},\ell},H_{\mathrm{I},\ell'}]\bigr\|.
\label{eq:LT_evenodd_Hi_final_expanded}
\end{align}
\end{lemma}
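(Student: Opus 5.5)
The plan is to reduce this to Lemma~\ref{lem:LT_HrHi} with a rescaling, exactly as Lemma~\ref{lem:LT_evenodd_Hr} was derived. The two factors of $V_{\mathrm{I},k}^{\mathrm{eo}}$ are $e^{-iA}e^{-iB}$ with $A\coloneqq H_{\mathrm{I}}^{(\mathrm{even})}(k)$ and $B\coloneqq H_{\mathrm{I}}^{(\mathrm{odd})}(k)$. Both operators are Hermitian, because the $\xi_{k,\ell}$ are real and each $H_{\mathrm{I},\ell}$ is Hermitian. The exact propagator is $V_{\mathrm{I},k}=e^{-i(A+B)}$.

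Lemma~\ref{lem:LT_HrHi} is stated for $e^{-i\Delta t H_{\mathrm{Re}}}e^{-iH_{\mathrm{I}}(k)}$ versus $e^{-i(\Delta t H_{\mathrm{Re}}+H_{\mathrm{I}}(k))}$. Its proof only uses that the two generators are Hermitian; it never uses the specific form of $H_{\mathrm{I}}(k)$. I would therefore substitute $\Delta t\to1$, $H_{\mathrm{Re}}\to A$ and $H_{\mathrm{I}}(k)\to B$. This gives directly
\[
\|V_{\mathrm{I},k}^{\mathrm{eo}}-V_{\mathrm{I},k}\|\le\tfrac12\|[A,B]\|,
\]
which is Eq.~\eqref{eq:LT_evenodd_Hi_final}. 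If one prefers to keep $\Delta t>0$ literal, the same result follows by taking $\Delta t$ arbitrary and $H_{\mathrm{Re}}\to A/\Delta t$, since the prefactor becomes $\tfrac{\Delta t}{2}\|[A/\Delta t,B]\|=\tfrac12\|[A,B]\|$. To be safe, I would also re-check the one place where the proof takes norms, namely the Duhamel integral in Eq.~\eqref{eq:conj_minus_HIk}. With the substitution it reads $\|e^{-i\tau A}Be^{i\tau A}-B\|\le\tau\|[A,B]\|$, and integrating over $\tau\in[0,1]$ gives the factor $1/2$.

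For Eq.~\eqref{eq:LT_evenodd_Hi_final_expanded}, I would expand the commutator by bilinearity:
\[
[A,B]=\sum_{\ell\in E_{\mathrm{even}}}\sum_{\ell'\in E_{\mathrm{odd}}}\xi_{k,\ell}\xi_{k,\ell'}[H_{\mathrm{I},\ell},H_{\mathrm{I},\ell'}].
\]
The triangle inequality together with homogeneity of the operator norm then gives the double sum with coefficients $|\xi_{k,\ell}||\xi_{k,\ell'}|$.

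The within-layer commutativity assumption is not needed for the inequality itself. Its role is only to make $e^{-iA}$ factor into the product of single-bond exponentials, so that $V_{\mathrm{I},k}^{\mathrm{eo}}$ is the circuit actually implemented. I would remark on this rather than use it.

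No step is a real obstacle. The only care point is confirming that Lemma~\ref{lem:LT_HrHi} is purely algebraic in its two Hermitian generators, so that the rescaling is legitimate.
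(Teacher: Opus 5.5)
Your proposal is correct and follows essentially the paper's proof. The paper also obtains Eq.~\eqref{eq:LT_evenodd_Hi_final} by invoking the generic one-step Lie--Trotter commutator bound (via Lemma~\ref{lem:LT_evenodd_Hr}, which is itself a specialization of Lemma~\ref{lem:LT_HrHi}) with generators $H_{\mathrm{I}}^{(\mathrm{even})}(k)$ and $H_{\mathrm{I}}^{(\mathrm{odd})}(k)$, and then gets Eq.~\eqref{eq:LT_evenodd_Hi_final_expanded} from bilinearity of the commutator and the triangle inequality. Your explicit handling of the $\Delta t$ rescaling, and your remark that within-layer commutativity is not needed for the inequality itself, are accurate and slightly more careful than the paper's one-line argument.
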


\begin{proof}
This is the same Lie-Trotter bound as Lemma~\ref{lem:LT_evenodd_Hr}, applied to $H_{\mathrm{I}}^{(\mathrm{even})}(k)+H_{\mathrm{I}}^{(\mathrm{odd})}(k)$.
The expanded bound follows from bilinearity of the commutator and the triangle inequality.
\end{proof}

\begin{proposition}[One-step error for the Lie-Trotter even-odd implementation]
\label{lem:onestep_combined_eo}
Assume all components are Hermitian.
Then, for any fixed realization $\{\xi_{k,\ell}\}_{\ell\in E}$,
\begin{align}
\left\|U_k^{\hk{\mathrm{impl,LT(eo)}}}-U_k^{\mathrm{ex}}\right\|
\le
\left\|V_{\mathrm{Re},k}^{\mathrm{eo}}-V_{\mathrm{Re},k}\right\|
+\left\|V_{\mathrm{I},k}^{\mathrm{eo}}-V_{\mathrm{I},k}\right\|
+\frac{\Delta t}{2} \bigl\|[H_{\mathrm{Re}},H_{\mathrm{I}}(k)]\bigr\|.
\label{eq:onestep_combined_eo_bound}
\end{align}
Moreover, using $\|[H_{\mathrm{Re}},H_{\mathrm{I}}(k)]\|\le \sum_{\ell\in E}|\xi_{k,\ell}|\|[H_{\mathrm{Re}},H_{\mathrm{I},\ell}]\|$,
\begin{align}
\left\|U_k^{\hk{\mathrm{impl,LT(eo)}}}-U_k^{\mathrm{ex}}\right\|
&\le
\frac{\Delta t^{ 2}}{2} \bigl\|[H_{\mathrm{Re}}^{(\mathrm{even})},H_{\mathrm{Re}}^{(\mathrm{odd})}]\bigr\|
+\frac{1}{2}\sum_{\ell\in E_{\mathrm{even}}}\sum_{\ell'\in E_{\mathrm{odd}}}
|\xi_{k,\ell}| |\xi_{k,\ell'}|\bigl\|[H_{\mathrm{I},\ell},H_{\mathrm{I},\ell'}]\bigr\|
\notag\\
&\quad\qquad\qquad\qquad\qquad\qquad\qquad\qquad\qquad\qquad
+\frac{\Delta t}{2}\sum_{\ell\in E}|\xi_{k,\ell}|\bigl\|[H_{\mathrm{Re}},H_{\mathrm{I},\ell}]\bigr\|.
\label{eq:onestep_combined_eo_bound_expanded}
\end{align}
\end{proposition}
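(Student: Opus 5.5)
The plan is a telescoping triangle-inequality argument through the intermediate Lie-Trotter step $U_k^{\mathrm{LT}}=V_{\mathrm{Re},k}V_{\mathrm{I},k}$ of Eq.~\eqref{eq:Uk_LT_eo}. First we split
\begin{align}
\left\|U_k^{\mathrm{impl,LT(eo)}}-U_k^{\mathrm{ex}}\right\|
\le
\left\|V_{\mathrm{Re},k}^{\mathrm{eo}}V_{\mathrm{I},k}^{\mathrm{eo}}-V_{\mathrm{Re},k}V_{\mathrm{I},k}\right\|
+\left\|U_k^{\mathrm{LT}}-U_k^{\mathrm{ex}}\right\|.
\end{align}
The second term is bounded directly by Lemma~\ref{lem:LT_HrHi}, which gives $\frac{\Delta t}{2}\|[H_{\mathrm{Re}},H_{\mathrm{I}}(k)]\|$.

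For the first term, we insert and subtract the mixed product $V_{\mathrm{Re},k}^{\mathrm{eo}}V_{\mathrm{I},k}$:
\begin{align}
V_{\mathrm{Re},k}^{\mathrm{eo}}V_{\mathrm{I},k}^{\mathrm{eo}}-V_{\mathrm{Re},k}V_{\mathrm{I},k}
=V_{\mathrm{Re},k}^{\mathrm{eo}}\bigl(V_{\mathrm{I},k}^{\mathrm{eo}}-V_{\mathrm{I},k}\bigr)
+\bigl(V_{\mathrm{Re},k}^{\mathrm{eo}}-V_{\mathrm{Re},k}\bigr)V_{\mathrm{I},k}.
\end{align}
All four factors $V_{\mathrm{Re},k}^{\mathrm{eo}}$, $V_{\mathrm{I},k}$, $V_{\mathrm{Re},k}$, and $V_{\mathrm{I},k}^{\mathrm{eo}}$ are products of exponentials of $-i$ times Hermitian operators. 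They are therefore unitary with operator norm one. Submultiplicativity then gives the bound $\|V_{\mathrm{I},k}^{\mathrm{eo}}-V_{\mathrm{I},k}\|+\|V_{\mathrm{Re},k}^{\mathrm{eo}}-V_{\mathrm{Re},k}\|$, which establishes Eq.~\eqref{eq:onestep_combined_eo_bound}.

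For the expanded form in Eq.~\eqref{eq:onestep_combined_eo_bound_expanded}, we substitute three earlier bounds:
\begin{itemize}
\item Lemma~\ref{lem:LT_evenodd_Hr} controls the $H_{\mathrm{Re}}$ even-odd difference by $\frac{\Delta t^2}{2}\|[H_{\mathrm{Re}}^{(\mathrm{even})},H_{\mathrm{Re}}^{(\mathrm{odd})}]\|$.
\item Eq.~\eqref{eq:LT_evenodd_Hi_final_expanded} of Lemma~\ref{lem:LT_evenodd_Hi} controls the stochastic even-odd difference.
\item The second inequality of Lemma~\ref{lem:LT_HrHi} controls the commutator term, via bilinearity $[H_{\mathrm{Re}},H_{\mathrm{I}}(k)]=\sum_\ell\xi_{k,\ell}[H_{\mathrm{Re}},H_{\mathrm{I},\ell}]$ and the triangle inequality.
\end{itemize}

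The argument is essentially routine. The only point requiring care is the choice of the intermediate operator, so that each difference is sandwiched between unitaries and no extra cross-commutator terms arise. Either ordering of the insertion works, because only unitarity of the flanking factors is used. All bounds hold pointwise for each fixed realization $\{\xi_{k,\ell}\}$, so no expectation is needed at this stage.
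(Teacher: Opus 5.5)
Your proposal is correct and matches the paper's proof. The paper also passes through $U_k^{\mathrm{LT}}=V_{\mathrm{Re},k}V_{\mathrm{I},k}$ and a mixed product, uses unitarity with the triangle inequality, and cites Lemmas~\ref{lem:LT_evenodd_Hr}, \ref{lem:LT_evenodd_Hi}, and~\ref{lem:LT_HrHi}; you simply write out the telescoping it leaves implicit (its companion Suzuki-Trotter proof inserts the other ordering, $V_{\mathrm{Re},k}V_{\mathrm{I},k}^{\mathrm{ST(eo)}}$, which, as you note, works equally well).
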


\begin{proof}
Use Lemmas~\ref{lem:LT_evenodd_Hr} and~\ref{lem:LT_evenodd_Hi} and the triangle inequality.
The remaining $H_{\mathrm{Re}}$-$H_{\mathrm{I}}(k)$ term is bounded by the one-step Lie-Trotter bound (see Sec.~\ref{app:trotter_bounds} for details).
\end{proof}

\begin{theorem}[$N$-step accumulation for the Lie-Trotter even-odd implementation]
\label{prop:Nstep_Lie_eo}

Taking the ensemble average over the independent Gaussian increments and using Lemma~\ref{lem:gaussian_moments}, we obtain
\begin{align}
\mathbb{E}_{\boldsymbol{\xi}}\left[\sum_{k=0}^{N-1}\left\|\hk{U^{\mathrm{impl,LT(eo)}}_k}-U^{\mathrm{ex}}_k\right\|\right]
&\le
\frac{T\Delta t}{2} \bigl\|[H_{\mathrm{Re}}^{(\mathrm{even})},H_{\mathrm{Re}}^{(\mathrm{odd})}]\bigr\|
+\frac{2T}{\pi}\sum_{\ell\in E_{\mathrm{even}}}\sum_{\ell'\in E_{\mathrm{odd}}}
\sqrt{\gamma_\ell\gamma_{\ell'}}\bigl\|[H_{\mathrm{I},\ell},H_{\mathrm{I},\ell'}]\bigr\|
\notag\\
&\qquad\qquad\qquad\qquad\qquad\qquad\qquad\qquad\qquad
+T\sum_{\ell\in E}\sqrt{\frac{\gamma_\ell\Delta t}{\pi}}\bigl\|[H_{\mathrm{Re}},H_{\mathrm{I},\ell}]\bigr\|.
\label{eq:Nstep_expectation_combined_eo_prop}
\end{align}
\end{theorem}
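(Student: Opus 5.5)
The plan is to sum the one-step bound of Proposition~\ref{lem:onestep_combined_eo} over $k=0,\dots,N-1$. I then take $\mathbb{E}_{\boldsymbol{\xi}}$ term by term, using linearity of expectation. The bound in Eq.~\eqref{eq:onestep_combined_eo_bound_expanded} holds for every fixed realization and every $k$, and each term on its right-hand side is nonnegative. Summing and averaging therefore preserves the inequality. Since the increments have a law that does not depend on $k$, every step contributes the same expected amount. This produces a factor $N=T/\Delta t$ multiplying the one-step expectation.

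I treat the three terms separately.
\begin{itemize}
\item The first term is deterministic. Multiplying it by $N$ gives
\[
N\frac{\Delta t^2}{2}\bigl\|[H_{\mathrm{Re}}^{(\mathrm{even})},H_{\mathrm{Re}}^{(\mathrm{odd})}]\bigr\| = \frac{T\Delta t}{2}\bigl\|[H_{\mathrm{Re}}^{(\mathrm{even})},H_{\mathrm{Re}}^{(\mathrm{odd})}]\bigr\|.
\]
\item For the second term, $\ell\in E_{\mathrm{even}}$ and $\ell'\in E_{\mathrm{odd}}$ are distinct labels. Independence across $\ell$ then gives $\mathbb{E}[|\xi_{k,\ell}||\xi_{k,\ell'}|]=\mathbb{E}[|\xi_{k,\ell}|]\,\mathbb{E}[|\xi_{k,\ell'}|]$. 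By Lemma~\ref{lem:gaussian_moments} with $\sigma^2=2\gamma_\ell\Delta t$, each factor equals $\sqrt{4\gamma_\ell\Delta t/\pi}$. The product is $4\Delta t\sqrt{\gamma_\ell\gamma_{\ell'}}/\pi$. With the prefactor $1/2$ and the factor $N$, the contribution becomes $(2T/\pi)\sqrt{\gamma_\ell\gamma_{\ell'}}\,\|[H_{\mathrm{I},\ell},H_{\mathrm{I},\ell'}]\|$. This is the $\Delta t$-independent $\mathcal{O}(T)$ term claimed in the statement.
\item The third term reproduces Theorem~\ref{prop:LT_scaling_Nstep}. We have $(\Delta t/2)\sqrt{4\gamma_\ell\Delta t/\pi}=\Delta t\sqrt{\gamma_\ell\Delta t/\pi}$, and multiplying by $N$ gives $T\sqrt{\gamma_\ell\Delta t/\pi}\,\|[H_{\mathrm{Re}},H_{\mathrm{I},\ell}]\|$.
\end{itemize}

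No step is a genuine obstacle; the argument is bookkeeping. The only points needing care are the following. First, the cross term factorizes only because the even and odd bond sets are disjoint, so no $\mathbb{E}[\xi^2]$ term appears. Second, the term-by-term expectation is justified by nonnegativity and linearity, with no need to control correlations across steps. Third, the variance convention $2\gamma_\ell\Delta t$ must be used consistently so that the constants $2/\pi$ and $1/\sqrt{\pi}$ come out as stated.
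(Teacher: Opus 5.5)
Your proposal is correct and follows the paper's proof: you sum the realization-wise one-step bound of Proposition~\ref{lem:onestep_combined_eo} over $k$, take expectations using independence of $\xi_{k,\ell}$ and $\xi_{k,\ell'}$ for the disjoint layers together with $\mathbb{E}_{\boldsymbol{\xi}}[|\xi_{k,\ell}|]=\sqrt{4\gamma_\ell\Delta t/\pi}$ from Lemma~\ref{lem:gaussian_moments}, and multiply by $N=T/\Delta t$. Your constants for all three terms check out.
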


\begin{proof}
Taking the ensemble average and using independence of the Gaussian increments together with Lemma~\ref{lem:gaussian_moments} yields Eq.~\eqref{eq:Nstep_expectation_combined_eo_prop}.
\end{proof}

\begin{remark}
The bound in Theorem~\ref{prop:Nstep_Lie_eo} means the ensemble-averaged bound Eq.~\eqref{eq:Nstep_expectation_combined_eo_prop} contains a contribution of the form
\begin{align}
\frac{2T}{\pi}\sum_{\ell\in E_{\mathrm{even}}}\sum_{\ell'\in E_{\mathrm{odd}}}
\sqrt{\gamma_\ell\gamma_{\ell'}}\bigl\|[H_{\mathrm{I},\ell},H_{\mathrm{I},\ell'}]\bigr\|,
\end{align}
which scales as $\mathcal O(T)$ and therefore does not vanish as $\Delta t\to 0$.
Hence, the Lie-Trotter even-odd splitting of the stochastically driven unitary may yield an error contribution that is not controlled by the step size.
In the next subsection, we show that this issue is resolved by adopting a Suzuki-Trotter even-odd splitting for $V_{\mathrm{I},k}$.
\end{remark}

\subsection{Suzuki-Trotter even-odd splitting for the stochastically driven unitary}
\label{app:evenodd_hi_strang_full}

We improve the even-odd splitting inside the multi-channel stochastically driven unitary by using a Suzuki-Trotter formula between the even and odd bond layers, while keeping the Lie-Trotter even-odd splitting inside $H_{\mathrm{Re}}$ as before.

We use the same notation as in Sec.~\ref{appF:setup_issue}.
In particular, $H_{\mathrm{Re}} = H_{\mathrm{Re}}^{(\mathrm{even})}+H_{\mathrm{Re}}^{(\mathrm{odd})}$ and $H_{\mathrm{I}}(k)=H_{\mathrm{I}}^{(\mathrm{even})}(k)+H_{\mathrm{I}}^{(\mathrm{odd})}(k)$.

We define
\begin{align}
V_{\mathrm{I},k}^{\mathrm{(even,half)}}\coloneqq e^{-i\frac{1}{2}H_{\mathrm{I}}^{(\mathrm{even})}(k)},
\qquad
V_{\mathrm{I},k}^{\mathrm{(odd)}}\coloneqq e^{-iH_{\mathrm{I}}^{(\mathrm{odd})}(k)}.
\end{align}
The Suzuki-Trotter even-odd splitting is
\begin{align}
V_{\mathrm{I},k}^{\mathrm{ST(eo)}}
\coloneqq
V_{\mathrm{I},k}^{\mathrm{(even,half)}}
V_{\mathrm{I},k}^{\mathrm{(odd)}}
V_{\mathrm{I},k}^{\mathrm{(even,half)}}.
\label{eq:VI_ST_eo_def}
\end{align}

We keep
\begin{align}
V_{\mathrm{Re},k} \coloneqq e^{-i\Delta t H_{\mathrm{Re}}},
\qquad
V_{\mathrm{Re},k}^{\mathrm{eo}} \coloneqq e^{-i\Delta t H_{\mathrm{Re}}^{(\mathrm{even})}} e^{-i\Delta t H_{\mathrm{Re}}^{(\mathrm{odd})}},
\end{align}
and implement one step as
\begin{align}
U_k^{\mathrm{impl,ST(eo)}} \coloneqq V_{\mathrm{Re},k}^{\mathrm{eo}} V_{\mathrm{I},k}^{\mathrm{ST(eo)}}.
\label{eq:Uk_impl_STeo}
\end{align}
We also define $U_k^{\mathrm{LT}}\coloneqq V_{\mathrm{Re},k}V_{\mathrm{I},k}$ and $U_k^{\mathrm{ex}}\coloneqq \exp[-i(\Delta t H_{\mathrm{Re}}+H_{\mathrm{I}}(k))]$ as before.

\begin{lemma}[One-step Suzuki-Trotter even-odd bound for the stochastically driven unitary]
\label{lem:VI_evenodd_ST_onestep}
Assume that all $H_{\mathrm{I},\ell}$ are Hermitian and commute within each layer.
Then, for any fixed realization $\{\xi_{k,\ell}\}_{\ell\in E}$,
\begin{align}
  \left\|V_{\mathrm{I},k}^{\mathrm{ST(eo)}}-V_{\mathrm{I},k}\right\|
  \le
  \frac{1}{12} \bigl\|[H_{\mathrm{I}}^{(\mathrm{odd})}(k),[H_{\mathrm{I}}^{(\mathrm{odd})}(k),H_{\mathrm{I}}^{(\mathrm{even})}(k)]]\bigr\|
  +
  \frac{1}{24} \bigl\|[H_{\mathrm{I}}^{(\mathrm{even})}(k),[H_{\mathrm{I}}^{(\mathrm{even})}(k),H_{\mathrm{I}}^{(\mathrm{odd})}(k)]]\bigr\|.
\label{eq:VI_STeo_onestep_bound}
\end{align}
\end{lemma}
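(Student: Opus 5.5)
The bound follows by reusing Lemma~\ref{lem:strang_bound_HrHi} with a relabeling, since $V_{\mathrm{I},k}^{\mathrm{ST(eo)}}$ has exactly the symmetric three-factor structure treated there. In Eq.~\eqref{eq:Uk_strang}, make the substitutions
\begin{align}
\Delta t\,H_{\mathrm{Re}}\ \to\ H_{\mathrm{I}}^{(\mathrm{even})}(k),\qquad H_{\mathrm{I}}(k)\ \to\ H_{\mathrm{I}}^{(\mathrm{odd})}(k).
\end{align}
Under these substitutions, $U_k^{\mathrm{ST}}$ becomes $V_{\mathrm{I},k}^{\mathrm{ST(eo)}}$ and $U_k^{\mathrm{ex}}$ becomes $V_{\mathrm{I},k}=\exp[-i(H_{\mathrm{I}}^{(\mathrm{even})}(k)+H_{\mathrm{I}}^{(\mathrm{odd})}(k))]$. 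The proof of Lemma~\ref{lem:strang_bound_HrHi} uses only Hermiticity of the two generators; the scalar $\Delta t$ merely rescales the first generator. So I would formally set $\Delta t=1$ and $H_{\mathrm{Re}}\to H_{\mathrm{I}}^{(\mathrm{even})}(k)$. Both are Hermitian because the $\xi_{k,\ell}$ are real and the $H_{\mathrm{I},\ell}$ are Hermitian.

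With this replacement, the bound in Eq.~\eqref{eq:strang_bound_final} reads
\begin{align}
\frac{1}{12}\bigl\|[H_{\mathrm{I}}^{(\mathrm{odd})}(k),[H_{\mathrm{I}}^{(\mathrm{odd})}(k),H_{\mathrm{I}}^{(\mathrm{even})}(k)]]\bigr\|+\frac{1}{24}\bigl\|[H_{\mathrm{I}}^{(\mathrm{even})}(k),[H_{\mathrm{I}}^{(\mathrm{even})}(k),H_{\mathrm{I}}^{(\mathrm{odd})}(k)]]\bigr\|,
\end{align}
which is exactly Eq.~\eqref{eq:VI_STeo_onestep_bound}.

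To make this airtight I would briefly rerun the interpolation argument. Define $S(\tau)=e^{-i\tau A/2}e^{-i\tau B}e^{-i\tau A/2}$ and $U(\tau)=e^{-i\tau(A+B)}$ with $A=H_{\mathrm{I}}^{(\mathrm{even})}(k)$ and $B=H_{\mathrm{I}}^{(\mathrm{odd})}(k)$. Then, by Lemma~\ref{lem:duhamel}, write $S(1)-U(1)$ as an integral of a conjugated remainder $T(\tau)$. Next, check that $T(0)=T'(0)=0$ by the palindromic symmetry. Then apply Eq.~\eqref{eq:twice_integrated_identity}, which gives
\begin{align}
\|T(\tau)\|\le\tau^2\Bigl(\tfrac14\|[B,[B,A]]\|+\tfrac18\|[A,[A,B]]\|\Bigr).
\end{align}
Integrating $\tau^2$ over $[0,1]$ produces the factors $1/12$ and $1/24$.

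The only point that needs care, and which I regard as the main obstacle, is matching the coefficients. In Lemma~\ref{lem:strang_bound_HrHi} the half-weight appears on the generator carrying $\Delta t$, so the outer factors must be the even layer (carrying the $1/24$) and the middle factor the odd layer (carrying the $1/12$). This matches the definition of $V_{\mathrm{I},k}^{\mathrm{ST(eo)}}$, so no recomputation of constants is needed.

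The within-layer commutativity assumption is not required for the bound itself. It is used only so that $V_{\mathrm{I},k}^{(\mathrm{even,half})}$ and $V_{\mathrm{I},k}^{(\mathrm{odd})}$ factor into products of local gates, as noted before the lemma.
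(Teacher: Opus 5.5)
Your proposal is correct and follows the same route as the paper. The paper proves this lemma as a direct specialization of the one-step Suzuki--Trotter bound (Lemma~\ref{lem:strang_bound_HrHi}), with $\Delta t\,H_{\mathrm{Re}}\to H_{\mathrm{I}}^{(\mathrm{even})}(k)$ as the half-weighted outer generator and $H_{\mathrm{I}}(k)\to H_{\mathrm{I}}^{(\mathrm{odd})}(k)$ as the middle one, and your matching of coefficients ($1/12$ for the doubly nested middle layer, $1/24$ for the doubly nested outer layer) is right.
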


\begin{proof}
This is a direct specialization of the standard one-step Suzuki-Trotter bound to $\exp[-i(H_{\mathrm{I}}^{(\mathrm{even})}(k)+H_{\mathrm{I}}^{(\mathrm{odd})}(k))]$ and $e^{-iH_{\mathrm{I}}^{(\mathrm{even})}(k)/2}e^{-iH_{\mathrm{I}}^{(\mathrm{odd})}(k)}e^{-iH_{\mathrm{I}}^{(\mathrm{even})}(k)/2}$.
\end{proof}

\begin{proposition}[One-step error for the Suzuki-Trotter even-odd implementation]
\label{prop:onestep_combined_STeo}
Assume that all components are Hermitian.
Then, for any fixed realization $\{\xi_{k,\ell}\}_{\ell\in E}$,
\begin{align}
\left\|U_k^{\mathrm{impl,ST(eo)}}-U_k^{\mathrm{ex}}\right\|
\le
\left\|V_{\mathrm{Re},k}^{\mathrm{eo}}-V_{\mathrm{Re},k}\right\|
+\left\|V_{\mathrm{I},k}^{\mathrm{ST(eo)}}-V_{\mathrm{I},k}\right\|
+\left\|U_k^{\mathrm{LT}}-U_k^{\mathrm{ex}}\right\|.
\label{eq:onestep_combined_STeo_bound}
\end{align}
\end{proposition}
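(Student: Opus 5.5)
The bound in Proposition~\ref{prop:onestep_combined_STeo} follows from a telescoping decomposition combined with unitary invariance of the operator norm. I will insert the intermediate propagators $V_{\mathrm{Re},k}V_{\mathrm{I},k}^{\mathrm{ST(eo)}}$ and $U_k^{\mathrm{LT}}=V_{\mathrm{Re},k}V_{\mathrm{I},k}$ between $U_k^{\mathrm{impl,ST(eo)}}=V_{\mathrm{Re},k}^{\mathrm{eo}}V_{\mathrm{I},k}^{\mathrm{ST(eo)}}$ and $U_k^{\mathrm{ex}}$. Concretely, I write
\begin{align}
U_k^{\mathrm{impl,ST(eo)}}-U_k^{\mathrm{ex}}
&=\bigl(V_{\mathrm{Re},k}^{\mathrm{eo}}-V_{\mathrm{Re},k}\bigr)V_{\mathrm{I},k}^{\mathrm{ST(eo)}}
+V_{\mathrm{Re},k}\bigl(V_{\mathrm{I},k}^{\mathrm{ST(eo)}}-V_{\mathrm{I},k}\bigr)
+\bigl(U_k^{\mathrm{LT}}-U_k^{\mathrm{ex}}\bigr).
\end{align}
The first two terms telescope to $V_{\mathrm{Re},k}^{\mathrm{eo}}V_{\mathrm{I},k}^{\mathrm{ST(eo)}}-V_{\mathrm{Re},k}V_{\mathrm{I},k}$, so the identity is exact.

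Next I apply the triangle inequality and submultiplicativity of the operator norm to each term. All the factors $V_{\mathrm{Re},k}$, $V_{\mathrm{Re},k}^{\mathrm{eo}}$, $V_{\mathrm{I},k}$ and $V_{\mathrm{I},k}^{\mathrm{ST(eo)}}$ are products of exponentials $e^{-iA}$ with $A$ Hermitian. Here I use that $H_{\mathrm{Re}}^{(\alpha)}$ and $H_{\mathrm{I}}^{(\alpha)}(k)$ are Hermitian because $\Delta t$ and the increments $\xi_{k,\ell}$ are real. Hence every such factor is unitary with norm $1$. The unitary factor to the right of the first difference and to the left of the second can therefore be dropped, giving $\|V_{\mathrm{Re},k}^{\mathrm{eo}}-V_{\mathrm{Re},k}\|$ and $\|V_{\mathrm{I},k}^{\mathrm{ST(eo)}}-V_{\mathrm{I},k}\|$. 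The third term is already of the form $\|U_k^{\mathrm{LT}}-U_k^{\mathrm{ex}}\|$, which gives Eq.~\eqref{eq:onestep_combined_STeo_bound}.

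There is no real obstacle here; the only care needed is choosing the order of the intermediate products so that each difference has a unitary factor on the correct side. If desired, one can then substitute Lemmas~\ref{lem:LT_evenodd_Hr}, \ref{lem:VI_evenodd_ST_onestep} and \ref{lem:LT_HrHi} into the three terms to obtain an explicit commutator bound. That bound has the same structure as in Proposition~\ref{lem:onestep_combined_eo}, except that the $\xi^2$ commutator term is replaced by the nested (third-order in $\xi$) commutators.
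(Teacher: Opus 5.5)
Your proof is correct and follows essentially the same route as the paper. The paper also inserts $V_{\mathrm{Re},k}V_{\mathrm{I},k}^{\mathrm{ST(eo)}}$ and $U_k^{\mathrm{LT}}=V_{\mathrm{Re},k}V_{\mathrm{I},k}$ and applies the triangle inequality; you additionally make explicit the unitarity of $V_{\mathrm{I},k}^{\mathrm{ST(eo)}}$ and $V_{\mathrm{Re},k}$, which is what allows those factors to be dropped from the norms.
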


\begin{proof}
Insert $V_{\mathrm{Re},k}V_{\mathrm{I},k}^{\mathrm{ST(eo)}}$ and $U_k^{\mathrm{LT}}=V_{\mathrm{Re},k}V_{\mathrm{I},k}$ \hk{and} apply the triangle inequality.
\end{proof}

\begin{theorem}[$N$-step accumulation for the Suzuki-Trotter even-odd implementation]
\label{theo:Nstep_STeo}
Taking the ensemble average over the independent Gaussian increments $\{\xi_{k,\ell}\}_{k,\ell}$ yields
\begin{align}
\mathbb{E}_{\boldsymbol{\xi}}
\left[\sum_{k=0}^{N-1}\left\|U^{\mathrm{impl,ST(eo)}}_k-U^{\mathrm{ex}}_k\right\|\right]
&\le
\frac{T\Delta t}{2} \bigl\|[H_{\mathrm{Re}}^{(\mathrm{even})},H_{\mathrm{Re}}^{(\mathrm{odd})}]\bigr\|
+
T\sum_{\ell\in E}\sqrt{\frac{\gamma_\ell\Delta t}{\pi}}
\bigl\|[H_{\mathrm{Re}},H_{\mathrm{I},\ell}]\bigr\|
\notag\\
&\quad
+
\frac{T\sqrt{\Delta t}}{6\sqrt{\pi}}
\Bigl(
2 \sum_{\ell,\ell'\in E_{\mathrm{odd}}}\sum_{m\in E_{\mathrm{even}}} \sqrt{\gamma_\ell\gamma_{\ell'}\gamma_m}  \bigl\|[H_{\mathrm{I},\ell'},[H_{\mathrm{I},\ell},H_{\mathrm{I},m}]]\bigr\|
\notag\\
&\quad\qquad\qquad\quad
+
\sum_{m,m'\in E_{\mathrm{even}}}\sum_{\ell\in E_{\mathrm{odd}}} \sqrt{\gamma_m\gamma_{m'}\gamma_\ell}  \bigl\|[H_{\mathrm{I},m'},[H_{\mathrm{I},m},H_{\mathrm{I},\ell}]]\bigr\|
\Bigr).
\label{eq:Nstep_expectation_STeo}
\end{align}
\end{theorem}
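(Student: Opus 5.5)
We start from Proposition~\ref{prop:onestep_combined_STeo}, which bounds each one-step error $\|U_k^{\mathrm{impl,ST(eo)}}-U_k^{\mathrm{ex}}\|$ by three contributions. We sum this bound over $k=0,\dots,N-1$ and take $\mathbb{E}_{\boldsymbol{\xi}}$ termwise, using linearity of expectation. The increments have identical laws at every step, so each expected one-step bound is independent of $k$. The sum over steps therefore just multiplies it by $N=T/\Delta t$.

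The first two contributions follow directly from earlier results.

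\emph{Coherent even-odd term.} Lemma~\ref{lem:LT_evenodd_Hr} is deterministic and gives $\frac{\Delta t^2}{2}\|[H_{\mathrm{Re}}^{(\mathrm{even})},H_{\mathrm{Re}}^{(\mathrm{odd})}]\|$ per step. Multiplying by $N$ produces the term $\frac{T\Delta t}{2}\|[\cdot,\cdot]\|$.

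\emph{$H_{\mathrm{Re}}$--$H_{\mathrm{I}}(k)$ term.} Theorem~\ref{prop:LT_scaling_Nstep} bounds the expected term by $\Delta t\sum_\ell\sqrt{\gamma_\ell\Delta t/\pi}\,\|[H_{\mathrm{Re}},H_{\mathrm{I},\ell}]\|$. Multiplying by $N$ gives $T\sum_\ell\sqrt{\gamma_\ell\Delta t/\pi}\,\|[H_{\mathrm{Re}},H_{\mathrm{I},\ell}]\|$.

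\emph{Stochastic even-odd term.} Here we invoke Lemma~\ref{lem:VI_evenodd_ST_onestep}. We expand $H_{\mathrm{I}}^{(\mathrm{odd})}(k)=\sum_{\ell\in E_{\mathrm{odd}}}\xi_{k,\ell}H_{\mathrm{I},\ell}$, and likewise for the even layer. By trilinearity of the nested commutator and the triangle inequality,
\begin{align}
\bigl\|[H_{\mathrm{I}}^{(\mathrm{odd})}(k),[H_{\mathrm{I}}^{(\mathrm{odd})}(k),H_{\mathrm{I}}^{(\mathrm{even})}(k)]]\bigr\|
\le\sum_{\ell,\ell'\in E_{\mathrm{odd}}}\sum_{m\in E_{\mathrm{even}}}|\xi_{k,\ell'}||\xi_{k,\ell}||\xi_{k,m}|\,\bigl\|[H_{\mathrm{I},\ell'},[H_{\mathrm{I},\ell},H_{\mathrm{I},m}]]\bigr\|,
\end{align}
and the even--even--odd term is handled the same way.

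Taking expectations requires the triple moments $\mathbb{E}[|\xi_{\ell'}||\xi_\ell||\xi_m|]$. Since $m$ lies in the other layer, it is always distinct from $\ell,\ell'$. When $\ell\neq\ell'$ the three variables are independent, and Lemma~\ref{lem:gaussian_moments} gives
\[
\mathbb{E}[|\xi_{\ell'}||\xi_\ell||\xi_m|]=\Bigl(\tfrac{4\Delta t}{\pi}\Bigr)^{3/2}\sqrt{\gamma_\ell\gamma_{\ell'}\gamma_m}.
\]
When $\ell=\ell'$ we get $\mathbb{E}[\xi_\ell^2]\,\mathbb{E}[|\xi_m|]=2\gamma_\ell\Delta t\sqrt{4\gamma_m\Delta t/\pi}$.

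The main obstacle is reconciling these two cases with the single uniform prefactor $\frac{2}{6\sqrt{\pi}}\Delta t^{3/2}$ per step (times $N$, giving $T\sqrt{\Delta t}$) claimed in the statement. We would bound both cases by a common constant $C\,\Delta t^{3/2}\sqrt{\gamma_\ell\gamma_{\ell'}\gamma_m}$. Since $8/\pi^{3/2}<4/\sqrt{\pi}$, the diagonal moment $4\gamma_\ell\sqrt{\gamma_m}\,\Delta t^{3/2}/\sqrt{\pi}$ dominates. Combining it with the coefficients $1/12$ and $1/24$ then gives $\frac{4}{12\sqrt{\pi}}=\frac{2}{6\sqrt{\pi}}$ for the odd-odd-even sum and $\frac{4}{24\sqrt{\pi}}=\frac{1}{6\sqrt{\pi}}$ for the even-even-odd sum. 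These reproduce the factor $2$ and $1$ inside the bracket with the overall prefactor $\frac{1}{6\sqrt{\pi}}$.

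Summing the three contributions yields Eq.~\eqref{eq:Nstep_expectation_STeo}. Every term now carries a positive power of $\Delta t$, which removes the $\mathcal{O}(T)$ term noted in the remark after Theorem~\ref{prop:Nstep_Lie_eo}.
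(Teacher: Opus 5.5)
Your proposal is correct and follows essentially the same route as the paper's proof. You sum the one-step bound of Proposition~\ref{prop:onestep_combined_STeo} over $k$, average termwise, and bound the triple absolute moments by $\frac{4}{\sqrt{\pi}}\sqrt{\gamma_\ell\gamma_{\ell'}\gamma_m}\,\Delta t^{3/2}$; your explicit split into the cases $\ell=\ell'$ and $\ell\neq\ell'$ simply spells out the inequality that the paper asserts without derivation.
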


\begin{proof}
The inequality in Eq.~\eqref{eq:Nstep_expectation_STeo} for a fixed realization of $\{\xi_{k,\ell}\}_{k,\ell}$ follows from Proposition~\ref{prop:onestep_combined_STeo}, together with Lemmas~\ref{lem:LT_HrHi}, \ref{lem:LT_evenodd_Hr}, and~\ref{lem:VI_evenodd_ST_onestep}.

We now take the ensemble average.
For the even-odd term of $H_{\mathrm{Re}}$, the bound in Lemma~\ref{lem:LT_evenodd_Hr} is
\begin{align}
\sum_{k=0}^{N-1}\mathbb{E}_{\boldsymbol{\xi}}\bigl[\|V_{\mathrm{Re},k}^{\mathrm{eo}}-V_{\mathrm{Re},k}\|\bigr]
\le \frac{T\Delta t}{2}\|[H_{\mathrm{Re}}^{(\mathrm{even})},H_{\mathrm{Re}}^{(\mathrm{odd})}]\|.
\end{align}

For the $H_{\mathrm{Re}}$-$H_{\mathrm{I}}(k)$ Lie-Trotter term, using $\|[H_{\mathrm{Re}},H_{\mathrm{I}}(k)]\|\le \sum_{\ell\in E}|\xi_{k,\ell}|\|[H_{\mathrm{Re}},H_{\mathrm{I},\ell}]\|$ and Lemma~\ref{lem:gaussian_moments} gives
\begin{align}
\mathbb{E}_{\boldsymbol{\xi}}[|\xi_{k,\ell}|]=\sqrt{4\gamma_\ell\Delta t/\pi},
\end{align}
and therefore
\begin{align}
\sum_{k=0}^{N-1}\mathbb{E}_{\boldsymbol{\xi}}\bigl[\|U_k^{\mathrm{LT}}-U_k^{\mathrm{ex}}\|\bigr]
\le
T\sum_{\ell\in E}\sqrt{\gamma_\ell\Delta t/\pi} \|[H_{\mathrm{Re}},H_{\mathrm{I},\ell}]\|.
\end{align}

For the Suzuki-Trotter even-odd splitting of $V_{\mathrm{I},k}$, we expand
\begin{align}
[H_{\mathrm{I}}^{(\mathrm{odd})}(k),[H_{\mathrm{I}}^{(\mathrm{odd})}(k),H_{\mathrm{I}}^{(\mathrm{even})}(k)]]
=
\sum_{\ell,\ell'\in E_{\mathrm{odd}}}\sum_{m\in E_{\mathrm{even}}}
\xi_{k,\ell}\xi_{k,\ell'}\xi_{k,m}
[H_{\mathrm{I},\ell'},[H_{\mathrm{I},\ell},H_{\mathrm{I},m}]],
\end{align}
and similarly for the other nested commutator.
Taking operator norms and using the triangle inequality yields an upper bound by the corresponding triple sum with factors $|\xi_{k,\ell}||\xi_{k,\ell'}||\xi_{k,m}|$.
Using Lemma~\ref{lem:gaussian_moments} and the independence of the Gaussian increments, one obtains for $\ell,\ell'$ in the same layer and $m$ in the opposite layer
\begin{align}
\mathbb{E}_{\boldsymbol{\xi}}[|\xi_{k,\ell}||\xi_{k,\ell'}||\xi_{k,m}|]
\le \frac{4}{\sqrt{\pi}}\sqrt{\gamma_\ell\gamma_{\ell'}\gamma_m} (\Delta t)^{3/2}.
\end{align}
Substituting this into Lemma~\ref{lem:VI_evenodd_ST_onestep} and summing over $k$ gives the last term in Eq.~\eqref{eq:Nstep_expectation_STeo}.
Combining the three contributions completes the proof.
\end{proof}

\begin{corollary}[Vanishing of the ensemble-averaged error for symmetric even-odd splitting]
\label{coro:STeo_vanish}
Assume that all components are Hermitian and that the within-layer commutativity holds for $\{H_{\mathrm{I},\ell}\}_{\ell\in E}$.
Then there exists a constant $C^{\hk{\mathrm{ST(eo)}}}>0$ (independent of $\Delta t$ and $T$) such that
\begin{align}
\mathbb{E}_{\boldsymbol{\xi}}\left[\sum_{k=\hk{0}}^{N-1}\left\|U^{\mathrm{impl,ST(eo)}}_k-U^{\mathrm{ex}}_k\right\|\right]
\le C^{\hk{\mathrm{ST(eo)}}} T \sqrt{\Delta t},
\label{eq:STeo_vanish_bound}
\end{align}
and hence the left-hand side \hk{vanishes} as $\Delta t\to 0$ for fixed $T$.
In contrast to the Lie-Trotter even-odd splitting discussed in Eq.~\eqref{eq:Nstep_expectation_combined_eo_prop}, the Suzuki-Trotter even-odd splitting yields an error that is controlled by the step size.
\end{corollary}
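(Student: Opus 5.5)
The corollary follows by reading off the $\Delta t$-dependence of each term in Theorem~\ref{theo:Nstep_STeo} and collecting them into one constant. Theorem~\ref{theo:Nstep_STeo} already bounds $\mathbb{E}_{\boldsymbol{\xi}}[\sum_{k}\|U^{\mathrm{impl,ST(eo)}}_k-U^{\mathrm{ex}}_k\|]$ by three groups of terms, each proportional to $T$. The Lie-Trotter even-odd term for $H_{\mathrm{Re}}$ scales as $T\Delta t$. The $H_{\mathrm{Re}}$-$H_{\mathrm{I}}(k)$ Lie-Trotter term scales as $T\sqrt{\Delta t}$. The Suzuki-Trotter even-odd term for the stochastic unitary also scales as $T\sqrt{\Delta t}$. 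The essential point is that the third group, unlike the $\mathcal{O}(T)$ term in Theorem~\ref{prop:Nstep_Lie_eo}, carries a positive power of $\Delta t$.

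I would first confirm this scaling, since it is the only nontrivial step. Lemma~\ref{lem:VI_evenodd_ST_onestep} gives a per-step error cubic in the increments. The triple-product estimate $\mathbb{E}_{\boldsymbol{\xi}}[|\xi_{k,\ell}||\xi_{k,\ell'}||\xi_{k,m}|]\le \frac{4}{\sqrt{\pi}}\sqrt{\gamma_\ell\gamma_{\ell'}\gamma_m}\,\Delta t^{3/2}$ then follows from Lemma~\ref{lem:gaussian_moments}. When $\ell=\ell'$, we use $\mathbb{E}[\xi^2]\mathbb{E}|\xi|=2\gamma\Delta t\sqrt{4\gamma\Delta t/\pi}$ together with independence of the opposite-layer variable $m$. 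When $\ell\neq\ell'$, the product of three first absolute moments is smaller. Summing over the $N=T/\Delta t$ steps gives $T\Delta t^{1/2}$.

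Next, I would restrict to $0<\Delta t\le\Delta t_0$ for a fixed $\Delta t_0$, say $\Delta t_0=1$. This lets the $T\Delta t$ term be absorbed via $\Delta t\le\sqrt{\Delta t_0}\sqrt{\Delta t}$, so the corollary is understood for small step sizes. I would then define
\begin{align}
C^{\mathrm{ST(eo)}}\coloneqq{}& \frac{\sqrt{\Delta t_0}}{2}\bigl\|[H_{\mathrm{Re}}^{(\mathrm{even})},H_{\mathrm{Re}}^{(\mathrm{odd})}]\bigr\|+\sum_{\ell\in E}\sqrt{\frac{\gamma_\ell}{\pi}}\bigl\|[H_{\mathrm{Re}},H_{\mathrm{I},\ell}]\bigr\| \notag\\
&+\frac{1}{6\sqrt{\pi}}\Bigl(2\sum_{\ell,\ell'\in E_{\mathrm{odd}}}\sum_{m\in E_{\mathrm{even}}}\sqrt{\gamma_\ell\gamma_{\ell'}\gamma_m}\,\|[H_{\mathrm{I},\ell'},[H_{\mathrm{I},\ell},H_{\mathrm{I},m}]]\|+\sum_{m,m'\in E_{\mathrm{even}}}\sum_{\ell\in E_{\mathrm{odd}}}\sqrt{\gamma_m\gamma_{m'}\gamma_\ell}\,\|[H_{\mathrm{I},m'},[H_{\mathrm{I},m},H_{\mathrm{I},\ell}]]\|\Bigr)+1,
\end{align}
adding $1$ to ensure strict positivity. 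This constant depends only on the Hamiltonian data and the rates $\gamma_\ell$, not on $\Delta t$ or $T$, and it is finite because the system is finite-dimensional and $E$ is finite.

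Substituting $C^{\mathrm{ST(eo)}}$ into Theorem~\ref{theo:Nstep_STeo} yields the claimed bound $C^{\mathrm{ST(eo)}}T\sqrt{\Delta t}$, which tends to zero as $\Delta t\to0$ for fixed $T$. The contrast with Theorem~\ref{prop:Nstep_Lie_eo} is then immediate. There, the within-$V_{\mathrm{I}}$ error is quadratic in the $\xi$'s, so each step contributes $\mathcal{O}(\Delta t)$ and the $N$ steps sum to $\mathcal{O}(T)$. The symmetric splitting cancels that quadratic term and leaves a cubic term, $\mathcal{O}(\Delta t^{3/2})$ per step. The main obstacle I expect is only the bookkeeping of the $\ell=\ell'$ diagonal terms in the Gaussian moment estimate; the rest is collecting constants.
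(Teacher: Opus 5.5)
Your proposal is correct and follows the paper's route: the corollary is obtained by reading off the $\Delta t$-scaling of the three terms in Theorem~\ref{theo:Nstep_STeo}, and the key point is the $\mathcal{O}(\Delta t^{3/2})$ per-step triple Gaussian moment, which you verify correctly in both the $\ell=\ell'$ and $\ell\neq\ell'$ cases. Your restriction to $0<\Delta t\le\Delta t_0$, used to absorb the $T\Delta t$ term into $C^{\mathrm{ST(eo)}}T\sqrt{\Delta t}$, is left implicit in the paper but is genuinely needed for the constant to be uniform in $\Delta t$.
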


\hk{\section{Two-qubit benchmark}}
\begin{figure}[ht]
  \centering
  \includegraphics[width=0.6\linewidth]{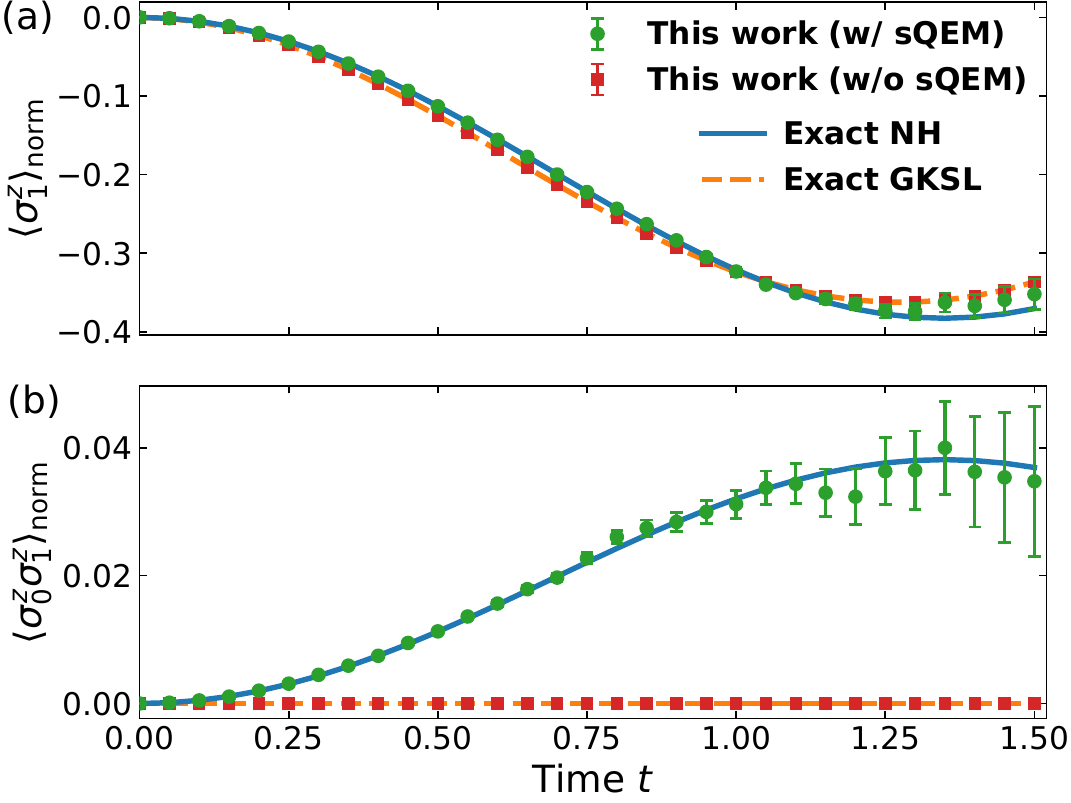}
  \caption{Normalized expectation values (a) $\langle \sigma^z_1\rangle_{\mathrm{norm}}(t)$ and (b) $\langle \sigma^z_0\sigma^z_1\rangle_{\mathrm{norm}}(t)$ \hk{are computed from the initial state} $|\psi_0\rangle=|+\rangle^{\otimes 2}$.
\hk{Solid blue and dashed orange curves show the exact non-Hermitian and GKSL dynamics}, respectively.
Green circles and red squares show Monte Carlo estimates from our protocol with and without sQEM.
\hk{Parameters are} $J=1.0$, $g=0.1$, $U=2.0$, $\gamma=1.0$, $(h_0,h_1)=(-0.8071,0.3890)$, $\Delta t=10^{-3}$, and $t\in\{0,0.05,\ldots,1.50\}$.
We use $B=8000$ independent runs, each averaging over $M_{\mathrm{num}}=2000$ stochastic trajectories (total $BM_{\mathrm{num}}=1.6\times 10^7$ trajectories).
Error bars show jackknife standard errors \hk{over the $B$ runs}.}
  \label{fig:L2_fourcurves}
\end{figure}

\hk{Figure~\ref{fig:L2_fourcurves} provides a minimal two-qubit validation of our protocol.}
We evaluate $\langle \sigma^z_1\rangle_{\mathrm{norm}}(t)$ and $\langle \sigma^z_0\sigma^z_1\rangle_{\mathrm{norm}}(t)$, starting from $|\psi_0\rangle=|+\rangle^{\otimes 2}$ (see Fig.~\ref{fig:L2_fourcurves} for parameters).
The data labeled ``This work (w/o sQEM)'' are obtained by running the same \hk{Gaussian-noise} sampler while omitting the sQEM step; they therefore estimate the corresponding GKSL evolution, including the jump contribution.
As shown in Fig.~\ref{fig:L2_fourcurves}, the w/o sQEM data agree with the exact GKSL curve, whereas the w/ sQEM data match the exact non-Hermitian target \hk{dynamics} within statistical uncertainty.
\hk{This comparison separately validates the noise-averaged GKSL sampler and the sQEM cancellation of the jump contribution.}

\section{\hk{Site occupations in the full-bond} benchmark}
\begin{figure}[H]
  \centering
  \includegraphics[width=0.7\linewidth]{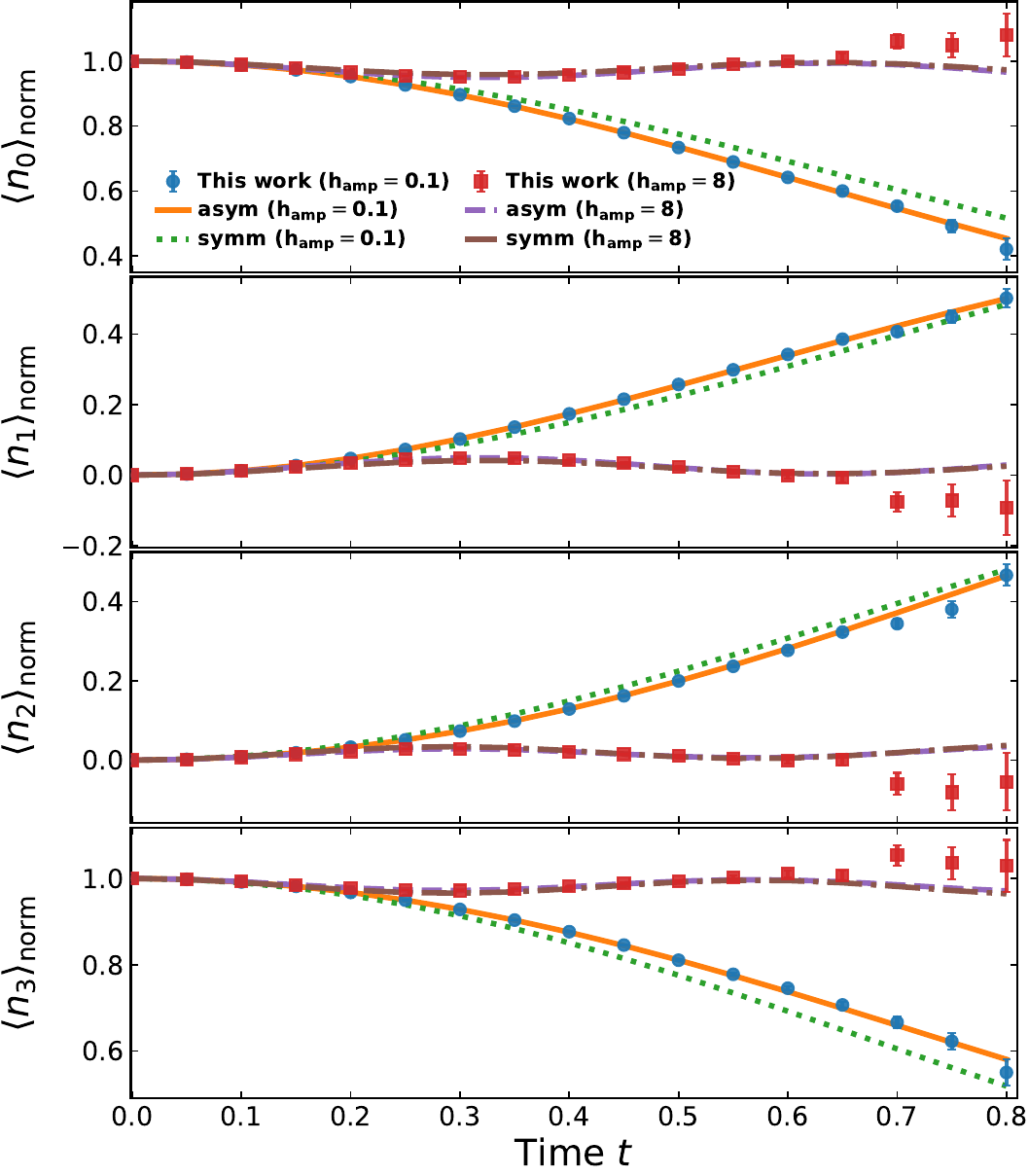}
  \caption{\hk{S}ite occupations $\langle n_i(t)\rangle_{\mathrm{norm}}$ ($i=0,1,2,3$) \hk{in the four-site full-bond non-Hermitian benchmark corresponding to Fig.~\ref{fig:L4_merged}.
Results are shown} for weak disorder ($h_{\mathrm{amp}}=0.1$) and strong disorder ($h_{\mathrm{amp}}=8.0$), where the initial state is $|0110\rangle$.
Solid orange and dashed purple lines show the exact target non-Hermitian dynamics (asym, $g=0.1$) for $h_{\mathrm{amp}}=0.1$ and $h_{\mathrm{amp}}=8.0$, respectively.
Dotted green and dash-dotted brown lines show the corresponding symmetric-hopping model (symm, $g=0$).
Blue circles ($h_{\mathrm{amp}}=0.1$) and red squares ($h_{\mathrm{amp}}=8.0$) show Monte Carlo estimates from our protocol.
Parameters \hk{are} $J=1.0$, $g=0.1$, $U=1.0$, $\gamma=1.0$, $\Delta t=1.0\times 10^{-3}$, and $t\in\{0,0.05,\ldots,0.80\}$.
We use $B=210240$ independent runs, each averaging over $M_{\mathrm{num}}=2000$ stochastic trajectories (total $BM_{\mathrm{num}}=4.2048\times 10^8$ trajectories).
Error bars show jackknife standard errors over the $B$ runs.}
  \label{fig:L4_site_occupation}
\end{figure}
\hk{Figure~\ref{fig:L4_site_occupation} resolves the site occupations underlying the edge-imbalance signal shown in the main text.}
\hk{F}or weak disorder ($h_{\mathrm{amp}}=0.1$), particles spread into the bulk and $\langle n_0(t)\rangle_{\mathrm{norm}}$ and $\langle n_3(t)\rangle_{\mathrm{norm}}$ decrease, whereas for strong disorder ($h_{\mathrm{amp}}=8.0$) the occupations remain close to their initial values, indicating inhibited transport.
\hk{Because the finite-sample estimator is a ratio of signed sample averages, its values are not constrained to the physical interval $[0,1]$, which explains the late-time excursions in the strong-disorder data.}

\end{document}